\newcommand{\bi}{\begin{itemize}}
\newcommand{\ei}{\end{itemize}}
\newcommand{\ben}{\begin{enumerate}}
\newcommand{\een}{\end{enumerate}}
\newcommand{\bc}{\begin{cases}}
\newcommand{\ec}{\end{cases}}
\newcommand{\bd}{\begin{description}}
\newcommand{\ed}{\end{description}}
\newcommand{\be}{\begin{equation}}
\newcommand{\ee}{\end{equation}}
\newcommand{\bea}{\begin{eqnarray}}
\newcommand{\eea}{\end{eqnarray}}
\newcommand{\bs}{\boldsymbol}
\newcommand\T{\rule{0pt}{2.0ex}}
\newcommand\B{\rule[-0.8ex]{0pt}{0pt}}
\newtheorem{thm}{Theorem}
\newtheorem{definition}{Definition}
\newtheorem{propos}{Proposition}
\newtheorem{ass}{Assumption}
\newtheorem{algo}{Algorithm}
\theoremstyle{plain}
\newtheorem{remark}{Remark}
\newcommand{\mbb}{\mathbb}
\newcommand{\mc}{\mathcal}
\newcommand{\mr}{\mathrm}
\newcommand{\nn}{\nonumber}
\begin{document}

\title{Optimal Adaptive Random Multiaccess in\\%{-2mm}
 Energy Harvesting Wireless Sensor Networks
%\thanks{Blah blah}
%{-3mm}
}
\author{
Nicol\`{o}~Michelusi,~\IEEEmembership{Member,~IEEE,}
        and~Michele~Zorzi,~\IEEEmembership{Fellow,~IEEE}
\thanks{The research of Nicol\`{o}~Michelusi has been funded in part by the following grants:
AFOSR FA9550-12-1-0215, ONR N00014-09-1-0700.
}
\thanks{N. Michelusi is with the Dept. of Electrical Engineering, University of Southern California. email: michelus@usc.edu.}
\thanks{M. Zorzi is with the Dept. of Information Engineering, University of Padova, Padova, Italy. email: zorzi@dei.unipd.it.}
\thanks{Part of this work was presented at IEEE ICC 2013 \cite{MichelusiICC12}.}
}

\maketitle
%\vspace{-20mm}
\begin{abstract}
%{-3mm}
Wireless sensors can integrate rechargeable batteries and energy-harvesting (EH) devices to enable long-term, autonomous operation, thus requiring intelligent energy management to limit the adverse impact of energy outages. This work considers a network of EH wireless sensors, which report packets 
with a random utility value
to a fusion center (FC) over a shared wireless channel. Decentralized access schemes are designed, where each node performs a \emph{local} decision to transmit/discard a packet, based on an estimate of the packet's 
utility, its own energy level, and the scenario state of the EH process, with the objective to maximize the average long-term aggregate
utility of the packets received at the FC. Due to the non-convex structure of the problem, 
an approximate optimization is developed by
resorting to a mathematical artifice based on a game theoretic formulation of the multiaccess scheme, where the nodes
do not behave strategically, but rather attempt to maximize a \emph{common} network utility
with respect to their own policy.
The \emph{symmetric Nash equilibrium} (SNE)
is characterized, where all nodes employ the same policy; its uniqueness is proved, and it is shown to be a local maximum of the original problem.
An algorithm to compute the SNE is presented, and a heuristic scheme is proposed, which is optimal for large battery capacity. 
It is shown numerically that the SNE typically achieves near-optimal performance, within 3\% of the optimal policy, at a fraction of the complexity, 
and two operational regimes of EH-networks are identified and analyzed: an \emph{energy-limited scenario}, where energy is scarce and the channel is under-utilized, and a \emph{network-limited scenario}, where energy is abundant and the shared wireless channel represents the bottleneck of the system.
 \end{abstract}
%{-7mm}
\section{Introduction}
%{-1mm}
%{-2mm}
\label{Sec:Intro}
In recent years, energy harvesting (EH) has received great interest in the research community,
 spanning different disciplines
related to electronics, energy storage, measurement and modeling of ambient energy sources \cite{Miozzo}, and energy management~\cite{Gunduz,Paradiso2005}.
EH devices can operate
autonomously over long periods of time, as they are capable
of collecting energy from the surrounding environment, and of employing it to sustain tasks such as data acquisition, processing and transmission.
Due to its demonstrated advantages of long-term, autonomous operation,
 at least within the limits of the physical degradation of the rechargeable batteries \cite{nicolo-exinfocom},
the EH technology exploiting, \emph{e.g.}, solar, motion, heat, aeolian
harvesting~\cite{Renner2010}, or indoor light \cite{Gorlatova2011}
is increasingly being considered in the design of wireless sensor networks (WSNs),
 where battery replacement is difficult or cost-prohibitive,
 \emph{e.g.}, see~\cite{Anthony2012}.

The rechargeable battery  of an EH sensor (EHS) can be modeled as an
\emph{energy buffer}, where energy is stored according to a
given statistical process, and from where energy is drawn
to feed sensor microprocessors and transceiver
equipments, whenever needed. 
However, due to the random and uncertain energy supply,
\emph{battery management} algorithms are needed to manage the harvested energy,
in order to minimize the  deleterious impact of energy outage, 
with the goal of optimizing the long-term performance of the overall system in regard to sensing and
data-communication tasks~\cite{Gunduz,Lin2005,Kar2006,Niyato2007,Kansal2007}.
In the literature, different battery management algorithms have been presented,
exploiting knowledge about, \emph{e.g.}, the amount of energy stored in the battery, which may be perfectly
\cite{Gatzianas2010,Sharma2010} or imperfectly \cite{MichelusiFusion} known to the controller,
the importance of the data packets to be transmitted \cite{nicolo-kostas},
the state of the EH source \cite{nicolo-kostas},
the health state of the battery \cite{nicolo-exinfocom}, or the channel state \cite{Ozel2011}.

In this paper, extending our previous work \cite{MichelusiICC12}, we consider an EH-WSN where multiple EHSs
randomly access a wireless channel to transmit data packets with random utility value
 to a FC.
 The goal is to design strategies so as to maximize the long-term average utility of the data reported to the FC.
One possibility to achieve this goal is to have the FC act as a centralized controller,
which schedules each EHS based on knowledge of the energy levels and packet utilities of all EHSs.
 This problem has been  considered for the case of two EHSs in \cite{DelTesta}.
 While this centralized scheme is expected to yield the best performance, since collisions between
 concurrent transmissions can be avoided via proper scheduling at the FC,
  it can only be used in very small WSNs, such as the one considered 
 in  \cite{DelTesta}. In fact,
the EHSs need to report their energy level and packet's utility to the FC in each slot,
thus  incurring a very large
 communication overhead in large WSNs, which is not practical given the limited resources (shared channel and ambient energy) of EH-WSNs.
Therefore, unlike \cite{DelTesta}, we consider a fully decentralized scheme with an arbitrary number of EHSs, where the EHSs  operate without the aid of a central controller.
In particular,
their decision as to whether to transmit the data packet or to remain idle and discard it is based only on \emph{local}
information about their own energy level and an estimate of the utility of the current data packet, rather than on \emph{global} network state information.
Compared to the centralized scheme, the decentralized one
achieves poorer performance, due to the unavoidable collisions between concurrent EHSs, but 
 is scalable to large EH-WSNs,
since it does not incur communication overhead between the EHSs and the FC, but only needs minimal coordination
by the FC, which, from time to time (\emph{i.e.}, whenever environmental conditions change), 
needs to broadcast information about the policy to be employed by the EHSs to perform their access decisions.
Assuming that data transmission incurs an energy cost
and that simultaneous transmissions from multiple EHSs cause collisions and packet losses,
we study the problem of designing optimal decentralized random access policies,
 which manage the energy available in the battery
so as to maximize the network utility, defined as the
average long-term aggregate network utility of the data packets
successfully reported to the FC. In order to tackle the non-convex nature of the optimization problem,
we resort to a mathematical artifice based on a game theoretic formulation of the multiaccess problem,
 where each EHS tries to individually maximize
the network utility.
Owing to the symmetry of the network,
 we characterize the symmetric Nash equilibrium of this game, such that all EHSs employ the same scheme to access the channel, and
prove its existence and uniqueness. We also show that it is a local optimum
of the original optimization problem, and derive an algorithm to compute it, based on the bisection method and policy iteration.
Moreover, we design a heuristic scheme, which is proved to be asymptotically optimal for large battery capacity, based on which
we identify two regimes of operation of EH-WSNs: an \emph{energy-limited scenario}, where the bottleneck of the system is due
to the scarce energy supply, resulting in under-utilization of the shared wireless channel; and a \emph{network-limited scenario},
where energy is abundant and the network size and the shared wireless channel are the main bottlenecks of the system.
  
 The model considered in the present work is a generalization of that of~\cite{Michelusi2012},
which addresses the design of optimal energy management policies for a single EHS.
Herein, we extend~\cite{Michelusi2012} and model the interaction among multiple EHSs,
which randomly access the channel.
The problem of maximizing the average long-term importance (similar to the concept of utility used in this work) of the data reported by
 a replenishable sensor is formulated in~\cite{Greenstein2005} for a continuous-time
model with Poisson EH and data processes, whereas \cite{Valles2011} investigates the relaying of packets of
 different priorities in a network of energy-limited sensors (no EH).
   
Despite the intense research efforts in the design of optimal energy management policies
for a single EH device, \emph{e.g.}, see~\cite{Jaggi2009,Sharma2010,Seyedi2010},
the problem of analyzing and modeling the interaction among multiple EH nodes in a network
 has received limited attention so far.
 In \cite{IannelloTCOM,IannelloCISS},
the design of medium access control (MAC) protocols for EH-WSNs is considered,
focusing on TDMA, (dynamic) framed Aloha, CSMA and polling techniques.
In \cite{Sharma08},
efficient energy management policies that 
stabilize the data queues and maximize throughput or minimize delay are considered, as well as efficient policies for contention free and contention based MAC schemes,
assuming that the data buffer and the battery storage capacity are infinite.
In contrast, we develop decentralized random multiaccess schemes under the practical assumption of \emph{finite} battery storage capacity.
In \cite{ZhiAngEu},
a MAC protocol is designed using a probabilistic polling mechanism that adapts to changing EH rates or node densities.
In \cite{Fafoutis},
an on-demand MAC protocol is developed, able to support individual duty cycles for nodes with different energy profiles, so as to achieve energy neutral operation.
In \cite{Kar2006},  a dynamic sensor activation framework to optimize the event detection performance of the network is developed.
In \cite{LongbiLin},  the problem of maximizing the lifetime of a network with finite energy capacity is considered, and a
heuristic scheme is constructed that achieves close-to-maximum lifetime, for battery-powered nodes without EH capability.
In \cite{Chen}, the problem of
joint energy allocation and routing in multihop EH networks in order to maximize the total system utility is considered,
and a low-complexity online scheme is developed that achieves asymptotic optimality, as the battery storage capacity becomes asymptotically large,
 without prior knowledge of the EH profile.
In \cite{Gatzianas2010,Tapparello,Huang}, tools from Lyapunov optimization and weight perturbation are employed
to design near-optimal schemes to maximize the network utility in multihop EH networks. 
In particular,  \cite{Gatzianas2010}  considers the problem of cross-layer resource allocation for wireless multihop networks operating with rechargeable batteries under general arrival, channel state and recharge processes, and presents an online and adaptive policy for the stabilization and optimal control of EH networks. 
In \cite{Tapparello}, the problem of dynamic joint compression and transmission in EH multihop networks with correlated sources is considered. 
In \cite{Huang}, the problem of joint energy allocation and data routing is considered, and near-optimal online power allocation schemes are developed.
However, \cite{Huang} does not consider wireless interference explicitly,
and is thus applicable
to settings where adjacent nodes operate on orthogonal channels. 
In contrast, we employ a collision channel model, where the simultaneous transmission of multiple nodes results in packet losses.
%\cite{IannelloTCOM,IannelloCISS} distributed but not adaptive

The rest of the paper is organized as follows.
In Sec.~\ref{Sec:sysmo}, we present the system model.
 Sec.~\ref{Sec:optprob} defines the control policies and states the optimization problem,
which is further developed in Sec.~\ref{Sec:analysis}.
In Sec.~\ref{heur}, we present and analyze the performance of a heuristic policy.
In Sec.~\ref{Sec:numer}, we present some numerical results.
Finally,  in Sec.~\ref{Sec:concl} we conclude the paper.
The proofs of the propositions and theorems are provided in the Appendix.

%{-3mm}
\section{System model}\label{Sec:sysmo}
%{-1mm}

\begin{table*}[t]
\caption{Main System Parameters}
\vspace{-5mm}
\label{tab}
\begin{center}
\footnotesize
\scalebox{0.88}{
\begin{tabular}{|c| l | c | l |}
\hline\T\B
%\\\hline\T\B
 $e_{\max}$& Battery storage capacity & $E_{u,k}\in\mathcal E$& Energy level of EHS $u$, time $k$
\\\hline\T\B
 $B_{u,k}\sim\mathcal B(\beta(S_k))$& EH arrival at EHS $u$, time $k$ & $Q_{u,k}\in\{0,1\}$& Transmit action at EHS $u$, time $k$
\\\hline\T\B
$S_k\sim p_S(\cdot |S_{k-1})$& Common scenario process &
%\\\hline\T\B
$\beta(s)$& Average EH rate in scenario $s$\\
\hline\T\B $V_{u,k}\sim f_{V}(\cdot)$& Packet's utility at EHS $u$, time $k$
%\\\hline\T\B
&
$Y_{u,k}\sim f_{Y|V}(\cdot |V_{u,k})$& Packet's utility observation at EHS $u$, time $k$
 \\
\hline\T\B
$\gamma_{u,k}$ & Channel SNR for EHS $u$, time $k$ & $\bar V(Y_{u,k})$ & Expected utility for EHS $u$, if it transmits at time $k$
 \\
\hline\T\B $\eta_u(E_{u,k},S_k)$& Transmission probability for EHS $u$
%\\\hline\T\B
&
$g(\eta_u(E_{u,k},S_k))$& Expected instantaneous utility under threshold policies
\\
\hline\T\B 
$R_{\bs\eta}$& Network utility
&
$P(\eta_u|s)$
&
Average power consumption in scenario $s$ for EHS $u$
\\
\hline\T\B 
$y_{\mr{th},u}(e,s)$
&
Censoring threshold on $Y_{u,k}$
&
$\rho(y)$ & Channel outage probability, given $Y_{u,k}=y$
\\
\hline\T\B 
$G(\eta_u|s)$
&
\multicolumn{3}{l|}{Average utility in scenario $s$ for EHS $u$ under no collisions}
\\
\hline
%\\\hline
\end{tabular}
}
\vspace{-5mm}
\end{center}
\end{table*}

%{-3mm}
We consider a network of $U$ EHSs, which communicate concurrently 
via a shared wireless
link to a FC, as depicted in Fig.~\ref{fig:EHSnet}.
Each EHS harvests energy from the environment, and stores it in a 
rechargeable battery to power the sensing apparatus and the RF circuitry.
 A processing unit, \emph{e.g.}, a micro-controller, manages the energy consumption of the EHS.
Time is slotted, where slot $k$ is the time interval $[kT,kT+T)$, 
$k \in \mbb{Z}^+$ and $T$ is the time-slot duration.

\subsection{Data acquisition model}
At each time instant $k$,
each EHS (say, $u$) has a new data packet to send to the FC.
 Each data packet has utility $V_{u,k}>0$.
We model $V_{u,k}$ as a continuous random variable with probability density function (pdf) $f_{V}(v),\ v\geq 0$
with support $(0,\infty)$, and assume that the components of  $\{\mathbf V_{k}\}$,
where $\mathbf V_k=(V_{1,k},V_{2,k},\dots,V_{U,k})\in[\mathbb R^+]^U$ 
is the \emph{utility} vector,
 are i.i.d. over time and across the EHSs.
 EHS $u$ may not know exactly 
  the utility $V_{u,k}$ of the current data packet,
  but is only provided with a noisy observation $Y_{u,k}$ of it, termed the \emph{utility observation},
  distributed according to the conditional probability density function $f_{Y|V}(y|v)$ with support $(0,\infty),\ \forall v$.
 One example is when the utility $V_{u,k}$ represents the achievable information rate under the current channel fading state. In this case, 
$V_{u,k}$ is known only to some extent, via proper channel estimation.
 We denote the joint probability density function of $(V_{u,k},Y_{u,k})$ as $f_{V,Y}(v,y)$,
 the conditional probability density function of $V_{u,k}$ given $Y_{u,k}$ as $f_{V|Y}(v|y)$,
 and the marginal probability density function of $Y_{u,k}$ as $f_{Y}(y)$.

 \begin{figure}
    \centering
\scalebox{0.6}{
\begin{tikzpicture}
\draw [ultra thick, ->] (3,0) -- (0+0.5,0);
\draw [ultra thick, ->] (-2.12,2.12) -- (0-0.3535,0+0.3535);
\draw [ultra thick, ->] (0,-3) -- (0,0-0.5);
\draw [fill=black,draw=black,thick,text=white] (0,0) circle [radius=0.5];
\node [text=white] at (0,0) {FC};
\draw [fill=gray,draw=black,thick,text=white] (3,0) circle [radius=0.5];
\node at (3,0) {EHS1};
\node at (4,0) {$V_{1,k}$};
\draw [ultra thick, ->] (3.7,0) -- (3.5,0);
\def\x{3.5}
\def\y{0.35}
\draw [fill=gray,thick] (0+\x,0+\y) rectangle (0.2+\x,0.5+\y);
\draw [fill=gray,thick] (0.2+\x,0+\y) rectangle (0.4+\x,0.5+\y);
\draw [fill=gray,thick] (0.4+\x,0+\y) rectangle (0.6+\x,0.5+\y);
\draw [fill=white,thick] (0.6+\x,0+\y) rectangle (0.8+\x,0.5+\y);
\draw [fill=white,thick] (0.8+\x,0+\y) rectangle (1+\x,0.5+\y);
\draw [fill=white,thick] (1+\x,0+\y) rectangle (1.2+\x,0.5+\y);
\draw [ultra thick, ->] (\x+1.4,\y+0.25) -- (\x+1.2,\y+0.25);
\node at (\x+1.7,\y+0.2) {$B_{1,k}$};
\draw [fill=white, ultra thick] (2.12,2.12) circle [radius=0.5];
\node at (2.12,2.12) {EHS2};
\node at (2.83+0.2,2.83-0.2) {$V_{2,k}$};
\draw [ultra thick, ->] (2.62,2.62) -- (2.47,2.47);
\def\x{2.12+0.6}
\def\y{2.12-0.25}
\draw [fill=gray,thick] (0+\x,0+\y) rectangle (0.2+\x,0.5+\y);
\draw [fill=gray,thick] (0.2+\x,0+\y) rectangle (0.4+\x,0.5+\y);
\draw [fill=white,thick] (0.4+\x,0+\y) rectangle (0.6+\x,0.5+\y);
\draw [fill=white,thick] (0.6+\x,0+\y) rectangle (0.8+\x,0.5+\y);
\draw [fill=white,thick] (0.8+\x,0+\y) rectangle (1+\x,0.5+\y);
\draw [fill=white,thick] (1+\x,0+\y) rectangle (1.2+\x,0.5+\y);
\draw [ultra thick, ->] (\x+1.4,\y+0.25) -- (\x+1.2,\y+0.25);
\node at (\x+1.7,\y+0.2) {$B_{2,k}$};
\draw [fill=white, ultra thick] (0,3) circle [radius=0.5];
\node at (0,3) {EHS3};
\node at (0,4) {$V_{3,k}$};
\draw [ultra thick, ->] (0,3.7) -- (0,3.5);
\def\x{0.6}
\def\y{3-0.25}
\draw [fill=gray,thick] (0+\x,0+\y) rectangle (0.2+\x,0.5+\y);
\draw [fill=gray,thick] (0.2+\x,0+\y) rectangle (0.4+\x,0.5+\y);
\draw [fill=gray,thick] (0.4+\x,0+\y) rectangle (0.6+\x,0.5+\y);
\draw [fill=gray,thick] (0.6+\x,0+\y) rectangle (0.8+\x,0.5+\y);
\draw [fill=gray,thick] (0.8+\x,0+\y) rectangle (1+\x,0.5+\y);
\draw [fill=white,thick] (1+\x,0+\y) rectangle (1.2+\x,0.5+\y);
\draw [ultra thick, ->] (\x+1.4,\y+0.25) -- (\x+1.2,\y+0.25);
\node at (\x+1.7,\y+0.2) {$B_{3,k}$};
\draw [fill=gray,draw=black,thick,text=white] (-2.12,2.12) circle [radius=0.5];
\node at (-2.12,2.12) {EHS4};
\node at (-2.83,2.83) {$V_{4,k}$};
\draw [ultra thick, ->] (-2.62,2.62) -- (-2.47,2.47);
\def\x{-2.12-1.2-0.6}
\def\y{2.12-0.25}
\draw [fill=gray,thick] (0+\x,0+\y) rectangle (0.2+\x,0.5+\y);
\draw [fill=gray,thick] (0.2+\x,0+\y) rectangle (0.4+\x,0.5+\y);
\draw [fill=gray,thick] (0.4+\x,0+\y) rectangle (0.6+\x,0.5+\y);
\draw [fill=gray,thick] (0.6+\x,0+\y) rectangle (0.8+\x,0.5+\y);
\draw [fill=gray,thick] (0.8+\x,0+\y) rectangle (1+\x,0.5+\y);
\draw [fill=gray,thick] (1+\x,0+\y) rectangle (1.2+\x,0.5+\y);
\draw [ultra thick, ->] (\x-0.2,\y+0.25) -- (\x+0,\y+0.25);
\node at (\x-0.6,\y+0.2) {$B_{4,k}$};
\draw [fill=white, ultra thick] (-3,0) circle [radius=0.5];
\node at (-3,0) {EHS5};
\node at (-4,0) {$V_{5,k}$};
\draw [ultra thick, ->] (-3.7,0) -- (-3.5,0);
\def\x{-3.5-1.2}
\def\y{0.35}
\draw [fill=white,thick] (0+\x,0+\y) rectangle (0.2+\x,0.5+\y);
\draw [fill=white,thick] (0.2+\x,0+\y) rectangle (0.4+\x,0.5+\y);
\draw [fill=white,thick] (0.4+\x,0+\y) rectangle (0.6+\x,0.5+\y);
\draw [fill=white,thick] (0.6+\x,0+\y) rectangle (0.8+\x,0.5+\y);
\draw [fill=white,thick] (0.8+\x,0+\y) rectangle (1+\x,0.5+\y);
\draw [fill=white,thick] (1+\x,0+\y) rectangle (1.2+\x,0.5+\y);
\draw [ultra thick, ->] (\x-0.2,\y+0.25) -- (\x+0,\y+0.25);
\node at (\x-0.6,\y+0.2) {$B_{5,k}$};
\draw [fill=white, ultra thick] (-2.12,-2.12) circle [radius=0.5];
\node at (-2.12,-2.12) {EHS6};
\node at (-2.83-0.2,-2.83+0.2) {$V_{6,k}$};
\draw [ultra thick, ->] (-2.62,-2.62) -- (-2.47,-2.47);
\def\x{-2.12-1.2-0.6}
\def\y{-2.12-0.25}
\draw [fill=white,thick] (0+\x,0+\y) rectangle (0.2+\x,0.5+\y);
\draw [fill=white,thick] (0.2+\x,0+\y) rectangle (0.4+\x,0.5+\y);
\draw [fill=gray,thick] (0.4+\x,0+\y) rectangle (0.6+\x,0.5+\y);
\draw [fill=gray,thick] (0.6+\x,0+\y) rectangle (0.8+\x,0.5+\y);
\draw [fill=gray,thick] (0.8+\x,0+\y) rectangle (1+\x,0.5+\y);
\draw [fill=gray,thick] (1+\x,0+\y) rectangle (1.2+\x,0.5+\y);
\draw [ultra thick, ->] (\x-0.2,\y+0.25) -- (\x+0,\y+0.25);
\node at (\x-0.6,\y+0.2) {$B_{6,k}$};
\draw [fill=gray,draw=black,thick,text=white] (0,-3) circle [radius=0.5];
\node at (0,-3) {EHS7};
\node at (0,-4) {$V_{7,k}$};
\draw [ultra thick, ->] (0,-3.7) -- (0,-3.5);
\def\x{0-1.2-0.6}
\def\y{-3-0.25}
\draw [fill=white,thick] (0+\x,0+\y) rectangle (0.2+\x,0.5+\y);
\draw [fill=white,thick] (0.2+\x,0+\y) rectangle (0.4+\x,0.5+\y);
\draw [fill=white,thick] (0.4+\x,0+\y) rectangle (0.6+\x,0.5+\y);
\draw [fill=white,thick] (0.6+\x,0+\y) rectangle (0.8+\x,0.5+\y);
\draw [fill=gray,thick] (0.8+\x,0+\y) rectangle (1+\x,0.5+\y);
\draw [fill=gray,thick] (1+\x,0+\y) rectangle (1.2+\x,0.5+\y);
\draw [ultra thick, ->] (\x-0.2,\y+0.25) -- (\x+0,\y+0.25);
\node at (\x-0.6,\y+0.2) {$B_{7,k}$};
\draw [fill=white, ultra thick] (2.12,-2.12) circle [radius=0.5];
\node at (2.12,-2.12) {EHS8};
\node at (2.83,-2.83) {$V_{8,k}$};
\draw [ultra thick, ->] (2.62,-2.62) -- (2.47,-2.47);
\def\x{2.12+0.6}
\def\y{-2.12-0.25}
\draw [fill=white,thick] (0+\x,0+\y) rectangle (0.2+\x,0.5+\y);
\draw [fill=white,thick] (0.2+\x,0+\y) rectangle (0.4+\x,0.5+\y);
\draw [fill=white,thick] (0.4+\x,0+\y) rectangle (0.6+\x,0.5+\y);
\draw [fill=white,thick] (0.6+\x,0+\y) rectangle (0.8+\x,0.5+\y);
\draw [fill=white,thick] (0.8+\x,0+\y) rectangle (1+\x,0.5+\y);
\draw [fill=white,thick] (1+\x,0+\y) rectangle (1.2+\x,0.5+\y);
\draw [ultra thick, ->] (\x+1.4,\y+0.25) -- (\x+1.2,\y+0.25);
\node at (\x+1.7,\y+0.2) {$B_{8,k}$};
\end{tikzpicture}}
%\vspace{-3mm}
\caption{Energy Harvesting Wireless Sensor Network (EH-WSN)}
\label{fig:EHSnet}
%\vspace{-10mm}
\end{figure}
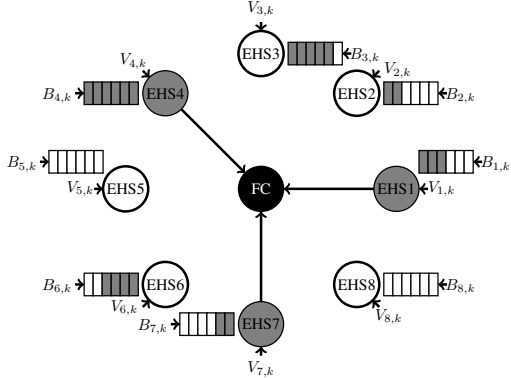

\subsection{Channel model}\label{chmodel}
We employ a collision channel model with fading,\footnote{The extension to the case of multiple orthogonal channels can be done
by exploiting the large network approximation, similar to \cite[Corollary 1]{MicheTSPp1}} \emph{i.e.},
letting $\gamma_{u,k}$ be the SNR between EHS $u$ and the FC in slot $k$,
the
transmission of EHS $u$ in slot $k$
is successful if and only if all the other EHSs remain 
idle and $\gamma_{u,k}>\gamma_{\mathrm{th}}(Y_{u,k})$, where $\gamma_{\mathrm{th}}(Y_{u,k})$
is an SNR threshold, possibly dependent on the utility observation $Y_{u,k}$, which guarantees correct decoding at the receiver.
The dependence on $Y_{u,k}$ models the fact that the transmission rate may be adapted over time, based on $Y_{u,k}$.
We model $\{\gamma_{u,k}\}$ as a continuous random variable in $(0,\infty)$,
  i.i.d. over time and across EHSs, and possibly correlated with $(V_{u,k},Y_{u,k})$.
For instance, if $V_{u,k}$ represents the achievable information rate under the
current channel fading state, then
$V_{u,k}=\log_2(1+\gamma_{u,k})$ \cite{Cover}, and $Y_{u,k}$ is a noisy estimate of the achievable rate, obtained from a noisy estimate of the SNR $\gamma_{u,k}$.
We denote the joint probability distribution of $(V_{u,k},Y_{u,k},\gamma_{u,k})$ as 
$f_{V,Y,\Gamma}(v,y,\gamma)$, and 
we let
$\rho(y)\triangleq\mathbb P(\gamma_{u,k}\leq\gamma_{\mathrm{th}}(y)|Y_{u,k}=y)$ be the channel outage probability, which depends on the specific fading model
and rate adaptation employed. 
Thus, letting $O_{u,k}\in\{0,1\}$ be the transmission outcome for EHS $u$, with
$O_{u,k}=1$ if the transmission is successful and $O_{u,k}=0$ otherwise, 
and letting $Q_{n,k}\in\{0,1\}$  be the transmission decision of EHS $n$, with
$Q_{n,k}=1$ if it transmits and $Q_{n,k}=0$ if it remains idle,
the success probability under a 
 specific transmission pattern $\mathbf Q_k=(Q_{1,k},Q_{2,k},\dots,Q_{U,k})\in\{0,1\}^U$ and $Y_{u,k}=y$ is given by
\begin{align}
\mathbb P(O_{u,k}{=}1|\mathbf Q_k{=}\mathbf q,Y_{u,k}{=}y){=}
(1{-}\rho(y))q_{u}\prod_{i\neq u}(1{-}q_{i}).
\end{align}
The data packet is discarded if a collision occurs or the EHS decides to remain idle,
and no feedback on the transmission outcome is provided by the FC to the EHSs.
This assumption implicitly models a stringent delay requirement.
Moreover, we do not deal explicitly with network management issues, \emph{e.g.}, node synchronization and self organization,
nodes joining and leaving the network, fault tolerance,
which can be addressed using techniques available in the literature.

Given the utility observation $Y_{u,k}=y$, EHS $u$ can estimate
the utility of the current data packet via conditional expectation, \emph{i.e.},
 %{-2mm}
 \begin{align}
 \label{condexp}
 \bar V(y)\triangleq \mathbb E\left[\chi(\gamma_{u,k}>\gamma_{\mathrm{th}}(Y_{u,k}))V_{u,k}|Y_{u,k}=y\right].
\end{align}%{-10mm}\\
Note that $\bar V(y)$ represents the expected utility for EHS $u$, if the packet is transmitted, and takes into account 
possible channel outage events due to channel fading. In the special case where $\gamma_{u,k}$ and $V_{u,k}$ are mutually independent, given $Y_{u,k}$,
 we obtain  $ \bar V(y)=(1-\rho(y))\mathbb E\left[V_{u,k}|Y_{u,k}=y\right]$.
 We make the following assumptions on $\bar V(y)$.
%{-5mm}
 \begin{ass}\label{assu}
 $\bar V(y)$ is a strictly increasing, continuous function of $y\in (0,\infty)$, with $\underset{y\to\infty}\lim \bar V(y)=~\infty$
 and $\bar V(0)=0$.
 \end{ass}
 %{-3mm}
%{-2mm}
\begin{remark}
\label{appdp}
The joint distribution of $(V_{u,k},Y_{u,k},\gamma_{u,k})$ is application/deployment dependent.
Moreover, the assumption that  $V_{u,k}$, $Y_{u,k}$ and $\gamma_{u,k}$  have continuous distributions with support $(0,\infty)$
approximates the scenario where
they are mixed or discrete random variables as well. This is the case, for instance, if the data traffic is bursty, \emph{i.e.}, a data packet is received with probability $\lambda$,
otherwise no data packet is received.
In this case, the utility is binary and takes value $V_{u,k}=1$ if a packet is received and $V_{u,k}=0$ otherwise, whereas
the utility observation is given by $Y_{u,k}=V_{u,k}$.
The approximation for $V_{u,k}$ is obtained as follows:
let $V_{u,k}$ be a mixed random variable, \emph{i.e.}, with probability $x$, it takes a value from the discrete set $\mathcal V_{D}$ with probability
mass function $f_{V}^{(D)}(v),\ v\in\mathcal V_D$,
and, with probability $1-x$, it takes a value from the continuous set $\mathcal V_C$ with probability density function $f_{V}^{(C)}(v)$ (in the discrete case, $x=1$).
Then, $V_{u,k}$ can be approximated as a continuous random variable with support $(0,\infty)$ and probability density function
 %{-2mm}
\begin{align*}
\tilde f_V^{\sigma}(v){=}(1{-}x)f_{V}^{(C)}(v){+}x
\!\!\!\!\!\sum_{\nu_D\in\mathcal V_D}\!\!\!f_{V}^{(D)}(\nu_D)\frac{\frac{1}{\sqrt{2\pi}\sigma}e^{-\frac{(v-\nu_D)^2}{2\sigma^2}}}{1{-}\Phi(\nu_D/\sigma)
},v{\geq}0,
\end{align*}%{-10mm}\\
where $\Phi(x)=\mathbb P(N<x)$ is the cumulative distribution function of a Gaussian normal random variable $N\sim\mathcal N(0,1)$.
This approximation has support $(0,\infty)$, and approaches the true mixed distribution as $\sigma\to 0$.
The same consideration holds for the case where the  observation $Y_{u,k}$
and the fading state $\gamma_{u,k}$ have a discrete or mixed distribution.
Therefore, without loss of generality,
 we restrict our analysis to the case where $V_{u,k} $ and $Y_{u,k}$ are continuous random variables with support $(0,\infty)$.
\end{remark}
 %{-2mm}

\subsection{Battery dynamics}
The battery of each EHS is modeled by a buffer.
 As in previous works \cite{Jaggi2009,Seyedi2010,Michelusi2012}, we assume that 
each position in the buffer can hold one energy quantum and that the transmission of one data packet requires the 
expenditure of one energy quantum.
 In this work, 
we assume that  no power adaptation is employed at the EHSs (however, the transmission rate may be adapted to the utility observation $Y_{u,k}$,
see Sec.~\ref{chmodel}). Moreover,
 we only consider the energy expenditure associated with
RF transmission, and neglect the one due to battery leakage effects and storage inefficiencies,
 or needed to turn on-off the circuitry, etc. These aspects
can be handled by extending the EH/consumption model, as discussed in \cite{nicolo-exinfocom},
but their inclusion does not bring additional insights on the multiaccess problem considered in this paper, and therefore is not pursued here.
The maximum number of quanta that can be stored in each EHS, \emph{i.e.}, the battery capacity,
is $e_{\mr{max}}$ and the set of possible energy levels is denoted 
by $\mc{E}{=}\{0,1,\dots,e_{\mr{max}}\}$. 
 %We also denote the set of non-zero energy levels as $\mathcal E^+{\equiv}\mathcal E{\setminus}\{0\}$.
 The amount of energy stored in the battery of EHS $u$ at time $k$ is denoted
as $E_{u,k}$, and evolves as
 %{-2mm}
\begin{align}
\label{eq:energyQueueEvol}
E_{u,k+1} &= \min\left\{E_{u,k}-Q_{u,k}+B_{u,k},e_{\mr{max}}\right\},
\end{align}
%{-10mm}\\
where $\{B_{u,k}\}$ is the {\em energy arrival process}
 and $\{Q_{u,k}\}$ is the
 {\em transmit action process} at EHS $u$.
 $Q_{u,k}=1$ if the current data packet is
transmitted by EHS $u$, which results in the expenditure of one energy quantum,
 and $Q_{u,k}=0$ otherwise, as described in Sec. \ref{Sec:optproblem}.
$B_{u,k}$ models the randomness in the energy harvested in slot $k$,
and is described in Sec. \ref{scproc}.
Moreover the energy harvested in time-slot $k$ can be used only in a later time-slot,
so that, if the battery is depleted, \emph{i.e.}, $E_{u,k}=0$, then $Q_{u,k}=0$.
The state of the system at time $k$ is given by $(\mathbf E_k,\mathbf V_k)$, 
 where $\mathbf E_k = (E_{1,k},E_{2,k},\dots,E_{U,k})\in\mathcal E^U$
is the joint state of the energy levels in all EHSs.

\subsection{Energy harvesting process}\label{scproc}
We model the energy arrival process $\left\{B_{u,k}\right\}$ as a
homogeneous hidden Markov process, taking values in the set
$\mc{B}$. We define an underlying \emph{common scenario process}
$\left\{S_k\right\}$, taking values in the finite set $\mathcal S$, which evolves according to a stationary irreducible Markov
chain with transition probability
$p_S(s_{k+1}|s_k)\triangleq\mathbb P(S_{k+1}{=}s_{k+1}|S_k{=}s_k)$.
Given the common scenario $S_k=s$, the energy harvest at EHS $u$ is modeled as 
$B_{u,k}\sim\mathcal B(\beta(s))$,
{\em i.e.}, 
in each time-slot,
either one energy quantum is harvested with probability $\beta(s)$, or no energy is harvested at all.
The common scenario process models different environmental conditions affecting the EH mechanism, for instance, 
for a solar source,
 $S_k\in\{\text{direct sunlight},\text{cloudy},\text{night}\}$.
 In this case, we thus have that $\beta(\text{direct sunlight})> \beta(\text{cloudy})>\beta(\text{night})$.
We assume that the components of
$\mathbf B_k=(B_{1,k},B_{2,k},\dots,B_{U,k})$
are i.i.d. over time and across the EHSs, given the common scenario $S_k=s$.
We define $\pi_S(s),\ s\in\mathcal S$, as the steady state
distribution of the common scenario process, and we refer to $\bar\beta
= \mbb{E}[B_{u,k}]=\sum_{s\in\mathcal S}\pi_S(s)\beta(s)$ as the \emph{average EH rate},
and to $\beta(s)$ as  the  \emph{average EH rate in scenario $S_k=s$}.
This model is a special instance of the \emph{generalized Markov model} presented in \cite{Keong} for the case of one EHS.
Therein, the scenario process is modeled as a first-order Markov chain, whereas
 $B_{u,k}$ statistically depends on $(B_{u,k-L},B_{u,k-L+1},\dots,B_{u,k-1})$ and on $S_k$, for some order $L\geq 0$.
In particular, in \cite{Keong} it is shown that $L=0$ ($\{B_{u,k}\}$ is i.i.d., given $S_k$)
models well a piezo-electric energy source, whereas  $L=1$ ($\{B_{u,k}\}$ is a Markov chain, given $S_k$) models well a solar energy source.
In this paper, the model $L=0$ is considered. The analysis can be extended to the case $L=1$ by including $B_{k-1}$ in the state space, similarly to \cite{nicolo-exinfocom}, but this extension is beyond the scope of this paper.
Typically, the scenario process $\{S_k\}$ is slowly varying over time,
hence it can be estimated accurately at each EHS from the locally harvested sequence $\{B_{u,k}\}$. In this paper, we assume that perfect knowledge of $S_{k}$ is
available at each EHS at time instant $k$.
Note that the common scenario process $\{S_k\}$ introduces \emph{spatial and temporal correlation} in the EH process, \emph{i.e.}, the local EH processes
are spatially coupled across the network since $S_k$ is common to all EHSs and time-correlated since $S_k$ is a Markov chain.

%{-4mm}
\section{Policy Definition and Optimization Problem}\label{Sec:optprob}
%{-1mm}
%{-2mm}
\label{Sec:optproblem}
Each EHS is assumed to have only \emph{local} knowledge
about the state of the system. Namely, at time $k$,
EHS $u$ only knows its own energy level $E_{u,k}$, the utility observation
$Y_{u,k}$, and the common scenario state $S_k$,
 but does not know the energy levels and utilities of the other EHSs in the network, nor the exact utility of its own packet $V_{u,k}$
 or channel SNR $\gamma_{u,k}$,
since these cannot be directly measured by EHS $u$.
Moreover, since  no feedback is provided by the FC, EHS $u$ has no information on the collision history. 
 %However, due to the limited data storage capabilities of EHSs  in practical systems, we assume that knowledge of $(E_{u,j},B_{u,j},Y_{u,j},Q_{u,j})$ at time $k$ is available only up to a limited delay  $D$.
 Therefore, the transmission decision of EHS $u$ at time $k$ is based solely on
 $(E_{u,k},Y_{u,k},S_k)$. % $\mathcal H_{u,k}\equiv\{(E_{u,j},B_{u,j},Y_{u,j},Q_{u,j}),\ k-D<j<k,\  (E_{u,k},Y_{u,k})\}$.
%Note that, in the case $D=0$, the information available at EHS $u$ for decision making at time $k$ is
%$\mathcal H_{u,k}\equiv(E_{u,k},Y_{u,k})$.
%In Theorem \ref{}, we will show that $D=0$ is sufficient to achieve optimality, hence only minimal data storage capacity is needed to keep track of state information 
%for decision making.\nm{?}
 
 Given $(E_{u,k},Y_{u,k},S_k)$,
 EHS $u$ transmits the current data packet ($Q_{u,k}=1$) with probability $\mu_u(E_{u,k},Y_{u,k},S_k)$, and discards it otherwise ($Q_{u,k}=0$),
 for some transmission policy $\mu_u$, which is the objective of our design.
 Note that, due to the complexity of implementing a non-stationary policy $\mu_{u,k}$, we only consider
 stationary policies independent of time $k$.
 Let  $\bs{\mu}=(\mu_1,\mu_2,\dots,\mu_U)$ be the joint transmission policy of the network.
Given an initial state of the energy levels $\mathbf E_0=\mathbf e_0\in\mathcal E^U$
and initial scenario $S_0=s_0\in\mathcal S$,
we denote the average long-term utility of the data reported by EHS $u$ to the FC,
under the joint transmission policy $\bs{\mu}$,  as
 %{-1mm}
\begin{align}\label{Gu}
R_{\bs\mu}^{(u)}(\mathbf e_0,s_0){=}&\underset{K \rightarrow \infty}{\lim\inf}\frac{1}{K}
\mbb{E}\left[\sum_{k=0}^{K-1}\chi(\gamma_{u,k}>\gamma_{\mathrm{th}}(Y_{u,k}))Q_{u,k}V_{u,k}
\right.\nonumber\\&
\left.\left.
\times
\prod_{i\neq u}(1{-}Q_{i,k})\right|\mathbf E_0=\mathbf e_0,S_0=s_0,\bs\mu\right],
\end{align}%%{-10mm}\\
where $E_{u,k}$ evolves as in (\ref{eq:energyQueueEvol}),
$B_{u,k}\sim\mathcal B(\beta(S_k))$, $\{S_k\}$ is a Markov chain,
and $\chi(\cdot)$ is the indicator function.
The expectation is taken with respect to 
$\{\mathbf B_{k},S_k,\mathbf Q_{k},\mathbf V_{k},\mathbf Y_{k},\boldsymbol{\gamma}_k\}$,
where  $\mathbf Y_k=(Y_{1,k},Y_{2,k},\dots,Y_{U,k})$, $\boldsymbol{\gamma}_k=(\gamma_{1,k},\gamma_{2,k},\dots,\gamma_{U,k})$,
 and, at each instant $k$,
 $Q_{u,k}{=}1$ with probability $\mu_u(E_{u,k},Y_{u,k},S_k)$ and  $Q_{u,k}=0$ otherwise.
We define
the \emph{network utility} as
 the average long-term aggregate utility of the packets successfully reported to the FC, \emph{i.e.},
\begin{equation}
 \label{eq:averageGain}
R_{\bs\mu}(\mathbf e_0,s_0) = \sum_{u=1}^{U}R_{\bs\mu}^{(u)}(\mathbf e_0,s_0).
\end{equation}
The goal is to design a joint transmission policy $\bs\mu$ so as to maximize the network utility, \emph{i.e.},
 %{-2mm}
\begin{align}
 \bs\mu^*=\arg\max_{\bs\mu}R_{\bs\mu}(\mathbf e_0,s_0).
\end{align}%%{-10mm}\\
We now establish that $\bs\mu^*$ has a threshold structure with respect to the observation $Y_{u,k}$.
%{-5mm}
\begin{propos}
\label{lemma:thresStruct}
For each energy level $E_{u,k}\in\mc{E}$, scenario process $S_k\in\mathcal S$, and for each EHS $u\in\{1,2,\dots,U\}$, there exists a threshold on the
utility observation, $y^*_{\mr{th},u}(E_{u,k},S_k)$, such that
%%{-3mm}
\begin{align}\label{muthresstruc}
\mu_u^*(E_{u,k},Y_{u,k},S_k)=\chi\left(Y_{u,k}\geq y^*_{\mr{th},u}(E_{u,k},S_k)\right).
\end{align}
\end{propos}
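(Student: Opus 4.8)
The plan is to treat the maximization of the network utility $R_{\bs\mu}$ as a Markov decision process on the finite state $(\mathbf E_k,S_k)$, in which the utility observations $\mathbf Y_k$, the utilities $\mathbf V_k$, and the SNRs $\boldsymbol\gamma_k$ act as i.i.d. per-slot disturbances, and to exploit the fact that $Y_{u,k}$ enters only the \emph{instantaneous} reward, never the state transition. Since $\mathcal E$ and $\mathcal S$ are finite and the chain induced by any stationary policy is unichain, the average-reward optimality equation admits a bounded relative value function $h(\mathbf e,s)$. First, I would invoke this standard result to reduce the long-term optimization to a per-slot Bellman maximization, and I would use the coordinate-wise optimality of $\bs\mu^*$, namely that $\mu_u^*$ maximizes $R_{\bs\mu}$ given the optimal policies $\bs\mu_{-u}^*$ of the other EHSs.

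Second, for each local state $(E_{u,k},Y_{u,k},S_k)=(e,y,s)$ the Bellman optimality equation prescribes the optimal action by comparing the action-value of transmitting ($Q_{u,k}=1$) with that of remaining idle ($Q_{u,k}=0$); since the per-state objective is affine in the randomization between these two actions, the optimum is attained at an extreme point and is determined by the sign of the difference $\Delta(e,y,s)$ between the two action-values. I would then decompose $\Delta(e,y,s)$ into three contributions: (i) the expected immediate gain to EHS $u$, equal to $\bar V(y)$ times the no-collision probability $\mathbb E[\prod_{i\neq u}(1-Q_{i,k})\mid E_{u,k}=e,S_k=s]$; (ii) the expected immediate loss inflicted on the other EHSs when $u$'s transmission destroys a concurrent packet; and (iii) the energy opportunity cost, i.e., the difference in the continuation value $h$ caused by spending one quantum now.

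The crux of the argument is that contributions (ii) and (iii), as well as the no-collision factor in (i), are independent of $y$. Indeed, the transition $E_{u,k+1}=\min\{E_{u,k}-Q_{u,k}+B_{u,k},e_{\mr{max}}\}$ depends on the action $Q_{u,k}$ but not on $Y_{u,k}$, and, since $\{Y_{u,k}\}$ is i.i.d. over time, the observation carries no information about the future evolution of $(\mathbf E_k,S_k)$; the collision externality on the other nodes likewise depends only on whether $u$ transmits. Partial observability of the other nodes' energies is handled by averaging over the conditional distribution of $\mathbf E_{-u,k}$ given $(E_{u,k}=e,S_k=s)$ under the stationary policy, which leaves all three $y$-independent quantities well defined. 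Hence $\Delta(e,y,s)=A(e,s)\,\bar V(y)-B(e,s)$, with $A(e,s)\geq 0$ a no-collision probability and $B(e,s)$ both independent of $y$, so the optimal decision is to transmit exactly when $A(e,s)\bar V(y)\geq B(e,s)$.

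Finally, I would invoke Assumption~\ref{assu}: because $\bar V(\cdot)$ is continuous and strictly increasing on $(0,\infty)$ with $\bar V(0)=0$ and $\bar V(y)\to\infty$, the set $\{y:A(e,s)\bar V(y)\geq B(e,s)\}$ is a half-line $\{y\geq y^*_{\mr{th},u}(e,s)\}$. When $A(e,s)>0$ the threshold is the unique solution of $\bar V(y^*_{\mr{th},u}(e,s))=B(e,s)/A(e,s)$ (taken as $0$ if $B(e,s)\leq 0$), while the degenerate case $A(e,s)=0$ yields $y^*_{\mr{th},u}(e,s)=\infty$, i.e., EHS $u$ never transmits at that state; in all cases $\mu_u^*(e,y,s)=\chi(y\geq y^*_{\mr{th},u}(e,s))$, which is the claim. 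I expect the main obstacle to be the rigorous justification of the existence and boundedness of the relative value function in the decentralized, partially observed multi-agent setting, together with the careful bookkeeping establishing that the dependence on $y$ enters solely through the increasing factor $\bar V(y)$.
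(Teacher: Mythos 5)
Your structural insight is the right one---$Y_{u,k}$ affects neither the battery transition nor the collision externality, so its only role is through $\bar V(y)$ in the immediate gain---but the framework you use to turn this into a proof has a genuine gap. The average-reward optimality equation you invoke characterizes the optimal policy of the \emph{centralized} MDP on $(\mathbf E_k,S_k)$, whose maximizing actions may depend on the whole vector $(\mathbf e,\mathbf y,s)$; Prop.~\ref{lemma:thresStruct}, however, concerns the optimal policy within the restricted class of \emph{local-information} policies $\mu_u(E_{u,k},Y_{u,k},S_k)$. For this information-constrained (team/POMDP) problem there is no standard bounded relative value function or per-local-state Bellman maximization: your patch of ``averaging over the conditional distribution of $\mathbf E_{-u,k}$ given $(E_{u,k},S_k)$ under the stationary policy'' is not a theorem, because that conditional distribution is policy-dependent (the common scenario history correlates $E_{u,k}$ with $\mathbf E_{-u,k}$), and a deviation by EHS $u$ changes the joint stationary law, so the sign of your $\Delta(e,y,s)$ computed this way certifies neither optimality nor suboptimality within the constrained class. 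You flag this as ``the main obstacle,'' but it cannot be filled in by citing unichain MDP theory; the standard theory simply does not cover information-constrained policies.

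The fix is to bypass dynamic programming altogether, which is what the paper does. Fix any local policy $\mu_u$ and consider the class $\mathcal R_{\mu_u}$ of local policies inducing the same per-state transmission probabilities $\eta_u(e,s)=\mathbb E[\mu_u(e,Y_{u,k},s)]$. Since the transition kernel of the joint chain $(\mathbf E_k,S_k)$ and all collision factors depend on $\mu_u$ only through these probabilities, every policy in $\mathcal R_{\mu_u}$ induces the same joint steady-state distribution (correlations through the shared scenario history included, so no decoupling approximation is needed); the network utility then differs across the class only through the per-state expected gains $\mathbb E[\tilde\mu_u(e,Y_{u,k},s)\bar V(Y_{u,k})]$, each entering with a nonnegative coefficient. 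Maximizing each such gain subject to the mean constraint $\mathbb E[\tilde\mu_u(e,Y_{u,k},s)]=\eta_u(e,s)$ is a convex problem over $\tilde\mu_u(e,\cdot,s)\in[0,1]$, solved by a Lagrangian (Neyman--Pearson type) argument that puts all transmission mass on the largest values of $\bar V(y)$, and Assumption~\ref{assu} turns this into the half-line $\left\{y\geq y^*_{\mr{th},u}(e,s)\right\}$. This rearrangement argument needs exactly the two facts you identified---and nothing from average-reward dynamic programming---so your decomposition $\Delta(e,y,s)=A(e,s)\bar V(y)-B(e,s)$ survives as intuition, but the proof must be run at the level of policy classes with fixed transmission probabilities, not one-step deviations from a Bellman equation.
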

%%{-3mm}
\begin{proof}
 See Appendix~A.%\ref{proofLemma:thresStruct}.
\end{proof}
From Prop. \ref{lemma:thresStruct}, it follows that $Q_{u,k}{=}\chi\left(Y_{u,k}{\geq}y_{\mr{th},u}(E_{u,k},S_k)\right)$ is optimal,
 where  $y_{\mr{th},u}(E_{u,k},S_k)$ is the \emph{censoring threshold}
on $Y_{u,k}$, and is a function of the energy level $E_{u,k}$ and scenario state~$S_k$.
  
We denote the
 transmission probability of EHS $u$ in energy level $e$ and scenario $s$,
 induced by the random observation $Y_{u,k}$ and by the threshold $y_{\mr{th},u}(e,s)$,  as
  %{-2mm}
\begin{equation}\label{eta}
 \eta_u(e,s){=}\mathbb E[Q_{u,k}|E_{u,k}{=}e,S_k{=}s]{=}\mathbb P(Y_{u,k}{\geq}y_{\mr{th},u}(e,s)).
\end{equation}
%{-2mm}
We denote the corresponding threshold on the utility observation as $y_{\mr{th}}(\eta_u(e,s))\triangleq y_{\mr{th},u}(e,s)$.
Moreover, we denote
the expected utility reported by EHS $u$ to FC in state $(e,s)$,
assuming that all the other EHSs remain idle (no collisions occur),
as $g(\eta_u(e,s))$. This is given~by
%%{-7mm}
\begin{align}\label{gx}
\nonumber
g(\eta_u(e,s))&{=}\mathbb E\left[\chi(\gamma_{u,k}{>}\gamma_{\mathrm{th}}(Y_{u,k}))Q_{u,k}V_{u,k}|E_{u,k}{=}e,S_k{=}s\right]
\\&
%=\mathbb E_{Y_{u,k}}\left[\chi\left(Y_{u,k}\geq y_{\mr{th},u}(e)\right)\bar V(Y_{u,k}) \right]
=
\int_{y_{\mr{th},u}(e,s)}^{\infty}\bar V(y)f_{Y}(y)\mr{d}y.
\end{align}%%{-10mm}\\
In words, $g(\eta_u(e,s))$ 
is the expected reward accrued when only the packets with utility observation
 larger than $y_{\mr{th},u}(e,s)$ are reported
and no collisions occur (however, poor channel fading may still cause the transmission to fail), where $\eta_u(e,s)$ is the corresponding transmission probability.
 The function $g(x)$ has the following properties, which are not explicitly proved here due to space constraints.
%{-5mm}
\begin{propos}
\label{lemma:gx}
The function $g(x)$ is a continuous, strictly increasing, strictly concave function of $x$, with
$g(0){=}0$, $g(1){=}\mathbb E[\bar V(Y_{u,k})]$,
$g^\prime(x){\triangleq}\frac{\mathrm{d}g(x)}{\mathrm dx}{=}\bar V(y_{\mr{th}}(x))$,
 $\lim_{x\to 0}g^\prime(x){=}\infty$,  $\lim_{x\to 1}g^\prime(x){=}0$.
\end{propos}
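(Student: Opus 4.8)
The plan is to reduce $g$ to a function of the single scalar $x=\eta_u(e,s)\in[0,1]$ by exploiting the one-to-one correspondence between the transmission probability and the censoring threshold. From \eqref{eta}, $x=\mathbb P(Y_{u,k}\geq y_{\mr{th}}(x))=1-F_Y(y_{\mr{th}}(x))$, where $F_Y$ denotes the CDF of $Y_{u,k}$. Because $Y_{u,k}$ is a continuous random variable with support $(0,\infty)$, $F_Y$ is continuous and strictly increasing there, hence invertible, so that $y_{\mr{th}}(x)=F_Y^{-1}(1-x)$ is a well-defined, continuous, \emph{strictly decreasing} bijection from $(0,1)$ onto $(0,\infty)$, with $\lim_{x\to 0}y_{\mr{th}}(x)=\infty$ and $\lim_{x\to 1}y_{\mr{th}}(x)=0$. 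This monotone reparametrization is the single fact that drives every part of the statement.

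Next I would rewrite $g$ through the change of variable $u=1-F_Y(y)$ (so $\md u=-f_Y(y)\md y$ and $y=y_{\mr{th}}(u)$) in the integral of \eqref{gx}, obtaining the representation $g(x)=\int_0^x\bar V(y_{\mr{th}}(u))\,\md u$. From here all the claimed properties follow almost immediately. Continuity of $g$ on $[0,1]$ and the identity $g'(x)=\bar V(y_{\mr{th}}(x))$ follow from the fundamental theorem of calculus, since the integrand $\bar V(y_{\mr{th}}(\cdot))$ is continuous by Assumption~\ref{assu} and the continuity of $y_{\mr{th}}(\cdot)$. Strict monotonicity holds because $\bar V(y)>0$ for $y>0$ while $y_{\mr{th}}(x)>0$ on $(0,1)$, so the integrand, and hence $g'$, is strictly positive. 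Strict concavity holds because $g'(x)=\bar V(y_{\mr{th}}(x))$ is the composition of the strictly increasing $\bar V$ with the strictly decreasing $y_{\mr{th}}$, hence strictly decreasing in $x$. The boundary values are read off directly: $g(0)=0$ from the empty integral, and $g(1)=\int_0^1\bar V(y_{\mr{th}}(u))\,\md u=\int_0^\infty\bar V(y)f_Y(y)\,\md y=\mathbb E[\bar V(Y_{u,k})]$ by reversing the change of variable. Finally, the boundary slopes follow by passing the limits through $\bar V$: as $x\to 0$, $y_{\mr{th}}(x)\to\infty$ gives $g'(x)\to\infty$ by Assumption~\ref{assu}, and as $x\to 1$, $y_{\mr{th}}(x)\to 0$ gives $g'(x)\to\bar V(0)=0$.

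The hard part is not the algebra but justifying the structural reparametrization rigorously. I would need to argue carefully that support $(0,\infty)$ indeed forces $F_Y$ to be \emph{strictly} increasing there, so that $y_{\mr{th}}(\cdot)$ is single-valued (otherwise the threshold would be defined only up to an interval on which $\mathbb P(Y_{u,k}\geq\cdot)$ is constant); and I would need $\mathbb E[\bar V(Y_{u,k})]<\infty$ so that $g(1)$, and thus the representation and the continuity argument at $x=1$, are finite and valid. The blow-up $g'(x)\to\infty$ as $x\to 0$ is harmless for the continuity of $g$ itself, since $g$ is the integral of an integrable, though unbounded, integrand over a shrinking interval; this is the one place where I would be explicit that the limit of $g$, as opposed to $g'$, remains finite and equals $g(0)=0$.
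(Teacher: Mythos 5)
The paper never proves this proposition at all---it is stated with the remark that the properties ``are not explicitly proved here due to space constraints''---so there is no proof of record to compare against; your argument stands on its own, and it is correct. The change of variable $u=1-F_Y(y)$ giving the representation $g(x)=\int_0^x\bar V(y_{\mr{th}}(u))\,\mathrm{d}u$ is evidently the intended mechanism, since the paper freely uses $g^\prime(x)=\bar V(y_{\mr{th}}(x))$ and the resulting strict concavity in its later arguments (the proofs of Prop.~\ref{upbound}, Theorem~\ref{propeta} and Algorithm~\ref{PIA}). The two caveats you single out are exactly the right ones: the support assumption ($f_Y>0$ on $(0,\infty)$) is what makes $F_Y$ strictly increasing there, so that $y_{\mr{th}}(x)=F_Y^{-1}(1-x)$ is a single-valued, strictly decreasing bijection and the threshold/probability correspondence is one-to-one; and integrability of the unbounded integrand near $x=0$, equivalently $\mathbb E[\bar V(Y_{u,k})]\leq\mathbb E[V_{u,k}]<\infty$, is needed both for $g(1)$ to be finite and for continuity of $g$ at the left endpoint. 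That finiteness is implicit in the paper (a finite-mean utility is assumed throughout, \emph{e.g.}, the unit-mean exponential of Sec.~\ref{Sec:numer}), and would be worth stating as an explicit hypothesis if this proof were written out.
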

 %%{-3mm}
Note that, from Assumption \ref{assu}, there is a one-to-one mapping between the threshold policy $\mu_u(e,y,s)$, the threshold $y_{\mathrm{th},u}(e,s)$,
and the transmission probability $\eta_u(e,s)$.
Therefore, without loss of generality, in the following we refer to $\eta_u$ as the policy of EHS $u$.
Moreover, we denote the aggregate policy used by all the EHSs in the network
as $\bs{\eta}=(\eta_1,\eta_2,\dots,\eta_U)$.
The utility for EHS $u$ under a threshold policy $\bs{\eta}$ is then given by
\begin{align}
 \label{eq:averageGainTH}
&R_{\bs\eta}^{(u)}(\mathbf e_0,s_0) {=} 
\underset{K \rightarrow \infty}{\lim\inf}\frac{1}{K}
\mbb{E}\left[ \sum_{k=0}^{K-1}g(\eta_u(E_{u,k},S_k))
\right.
\nonumber\\&
\left.\left.
\times
\prod_{i\neq u}(1-\eta_i(E_{i,k},S_k))\right|\mathbf E_0=\mathbf e_0,S_k=s_0,\bs\eta\right],
\end{align}
where $E_{u,k}$ evolves as in (\ref{eq:energyQueueEvol}),
$B_{u,k}\sim\mathcal B(\beta(S_k))$ and $\{S_k\}$ is a Markov chain.
The optimization problem under threshold policies  can be restated as
 %{-2mm}
\begin{align}\label{optor}
 \bs\eta^*=\arg\max_{\bs\eta}R_{\bs\eta}(\mathbf e_0,s_0).
\end{align}%%{-10mm}\\
Note that the policy $\eta_u$ implemented by EHS $u$ yields the following trade-off.
By employing a larger transmission probability $\eta_u(E_{u,k},S_k)$:
\begin{itemize}
\item EHS $u$ transmits more often,  hence the 
battery level $E_{u,k}$ tends to be lower  since $Q_{u,k}\sim\mathcal B(\eta_u(E_{u,k},S_k))$ (see (\ref{eq:energyQueueEvol})) and
battery depletion occurs more frequently for EHS $u$, degrading its performance;
\item more frequent collisions occur in the channel, hence a lower utility is accrued by all the other EHSs,
as apparent from the term $\prod_{i\neq j}(1-\eta_i(E_{i,k},S_k))$ in (\ref{eq:averageGainTH}),
 denoting the probability of successful transmission for EHS $j$, which is a decreasing function of $\eta_u(E_{u,k},S_k)$;
\item a larger instantaneous utility is accrued by EHS $u$, as apparent from the term $g(\eta_u(E_{u,k},S_k))$ in (\ref{eq:averageGainTH}).
\end{itemize}
On the other hand, the opposite effects are obtained by using a smaller transmission probability.
The optimal policy $ \bs\eta^*$ thus reflects an optimal trade-off between
battery depletion probability, collision probability and instantaneous utility for each EHS.

We consider the following set
of \emph{admissible} policies $\mathcal U$.
%{-5mm}
\begin{definition}
\label{Def:admisspol}
The set $\mathcal U$ of admissible policies is defined as
 %{-2mm}
\begin{align*}
\mathcal U=&\{\eta:\eta(0,s)=0,\eta(e_{\mr{max}},s)\in(0,1],\eta(e,s)\in(0,1),
\nonumber\\&
\forall e=1,2,\dots,e_{\max}-1,\ \forall s\in\mathcal S\}.
\end{align*}
\end{definition}
%{-5mm}
\begin{remark}
Note that considering only admissible policies $\eta\in\mathcal U$ does not incur a loss of optimality.
In fact, under a non-admissible policy, \emph{e.g.}, such that $\eta(e,s)=0$ for some $e>0$,
then the energy levels below ``$e$" are never visited (since the EHS never transmits in state $E_{u,k}=e$, hence the energy level cannot decrease).
It follows that the system is equivalent to another system with a smaller battery with capacity $\tilde e_{\max}=e_{\max}-e$.
Similarly, if $\eta(e,s)=1$ for some $e<e_{\max}$, then the energy levels above ``$e$" are never visited,
hence the system is equivalent to another system with a smaller battery with capacity $\tilde e_{\max}=e$.
In both cases, a non-admissible policy results in under-utilization of the battery storage capacity.
\end{remark}
 %{-3mm}
It can be shown that the Markov chain $\{(\mathbf E_k,S_k)\}$ under the aggregate policy $\bs\eta\in\mathcal U^U$
is irreducible. Hence, there exists a unique steady-state distribution, $\pi_{\bs\eta}(\mathbf e,s), \mathbf e \in \mc{E}^U,s\in\mathcal S$,
independent of $(\mathbf{e}_0,s)$ \cite{DJWhite}.
From (\ref{eq:averageGainTH}), we thus obtain
 %{-2mm}
\begin{align}
\label{eq:Gsteady}
R_{\bs\eta}^{(u)}{=}\!\!\!\!\sum_{\mathbf e\in\mathcal E^U,s\in\mathcal S}\pi_{\bs\eta}(\mathbf e,s)
g(\eta_u(e_u,s))\prod_{i\neq u}(1-\eta_i(e_i,s)).
\end{align}
%%{-10mm}\\
Note that, given $S_k$,
the action $Q_{u,k}$ is based only on $(E_{u,k},Y_{u,k})$
and is independent of $(E_{i,k},Y_{i,k}),\ i\neq u$.
However, the transmission decisions over the network are coupled via the common scenario state $S_k$.
Herein, we use the following assumption:
\begin{ass}
The scenario process varies slowly over time, at a time scale much larger than battery dynamics.
\end{ass}
Therefore, conditioned on $S_k=s$,
a steady-state distribution of the energy levels is approached in each interval during which the scenario stays the same.
Then, using an approach similar to \cite{nicolo-exinfocom},
since the EH  process is i.i.d. across EHSs,
 the energy level of EHS $u$ is independent
of the energy levels of all the other EHSs,
so that we can approximate $\pi_{\bs\eta}(\mathbf e,s)$ as 
\begin{align}
\pi_{\bs\eta}(\mathbf e,s)=\pi_S(s)\prod_u\pi_{\eta_u}(e_u|s),
\end{align}
where $\pi_{\eta_u}(e_u|s)$ is the steady state probability of the energy level of EHS $u$ in scenario $S_k=s$, given by the following proposition.
%{-5mm}
\begin{propos}
\label{SSD}
Let $\eta{\in}\mathcal U$ and
 $\xi_\eta(e|s){\triangleq}\frac{\beta(s)(1-\eta(e,s))}{(1-\beta(s))\eta(e,s)}$; then,
 $\pi_{\eta}(e|s),\ e\in\mathcal E,s\in\mathcal S$ is given by
%%{-3mm}
\begin{align}
\label{pi0}
&\pi_\eta(0|s)=\frac{1}{
1+\sum_{j=1}^{e_{\max}}\left[\prod_{f=1}^{j-1}\xi_\eta(f|s)\right]\frac{\beta(s)}{(1-\beta(s))\eta(j,s)}
},
\\&
\label{pie}
\pi_\eta(e|s)=\left[\prod_{f=1}^{e-1}\xi_\eta(f|s)\right]\frac{\beta(s)\pi_\eta(0|s)}{(1-\beta(s))\eta(e,s)},e>0.
\end{align}
\end{propos}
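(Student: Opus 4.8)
The plan is to observe that, with the scenario frozen at $S_k=s$ (as justified by Assumption~2), the energy level $\{E_{u,k}\}$ of a single EHS is a time-homogeneous birth--death Markov chain on $\mathcal E=\{0,1,\dots,e_{\max}\}$, and then to obtain its stationary distribution by detailed balance. First I would read off the one-step transition probabilities from the battery recursion (\ref{eq:energyQueueEvol}), using that $Q_{u,k}\sim\mathcal B(\eta(e,s))$ and $B_{u,k}\sim\mathcal B(\beta(s))$ are independent and each valued in $\{0,1\}$, so that $E_{u,k+1}-E_{u,k}\in\{-1,0,+1\}$. For $0\le e<e_{\max}$ the only way to move up is ``no transmission and one arrival,'' giving an up-probability $a_e=(1-\eta(e,s))\beta(s)$; for $0<e\le e_{\max}$ the only way to move down is ``transmission and no arrival,'' giving a down-probability $b_e=\eta(e,s)(1-\beta(s))$; all remaining mass sits on the self-loop. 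The two boundaries are then automatic: $\eta(0,s)=0$ forces $b_0=0$ (the battery cannot go negative), and the cap $\min\{\cdot,e_{\max}\}$ forces $a_{e_{\max}}=0$.

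Next I would invoke reversibility. Since a nearest-neighbour chain on a line admits no cycles other than back-and-forth traversals, its stationary distribution satisfies the detailed-balance equations $\pi_\eta(e|s)\,a_e=\pi_\eta(e+1|s)\,b_{e+1}$ for $0\le e<e_{\max}$, i.e.
\begin{align*}
\pi_\eta(e{+}1|s)=\pi_\eta(e|s)\frac{(1-\eta(e,s))\beta(s)}{\eta(e{+}1,s)(1-\beta(s))}.
\end{align*}
Iterating this recursion from $e=0$ and using $1-\eta(0,s)=1$ in the first factor yields the telescoping product $\pi_\eta(e|s)=\pi_\eta(0|s)\prod_{j=0}^{e-1}\frac{\beta(s)(1-\eta(j,s))}{(1-\beta(s))\eta(j{+}1,s)}$. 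Regrouping numerators and denominators — peeling off the factor $\beta(s)/[(1-\beta(s))\eta(e,s)]$ and collecting the remaining intermediate ratios into $\prod_{f=1}^{e-1}\xi_\eta(f|s)$, with $\xi_\eta(f|s)=\frac{\beta(s)(1-\eta(f,s))}{(1-\beta(s))\eta(f,s)}$ — reproduces (\ref{pie}).

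Finally I would pin down $\pi_\eta(0|s)$ by normalization, imposing $\sum_{e=0}^{e_{\max}}\pi_\eta(e|s)=1$; substituting (\ref{pie}) and solving the resulting linear equation for $\pi_\eta(0|s)$ gives (\ref{pi0}). Existence and uniqueness of the stationary distribution are not in doubt, since for admissible $\eta\in\mathcal U$ (and $\beta(s)\in(0,1)$, which keeps all displayed ratios finite and positive) the chain is irreducible on the finite state space $\mathcal E$, so the distribution produced by detailed balance is the unique stationary one. I expect the only delicate point to be the index bookkeeping in the telescoping product: verifying that the $\eta(0,s)=0$ boundary collapses the first numerator to $\beta(s)$ and that the shift $j\mapsto f=j{+}1$ aligns the denominators correctly so that the compact $\xi_\eta$-form emerges. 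The underlying Markov-chain argument is otherwise entirely standard.
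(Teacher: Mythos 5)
Your proof is correct and takes essentially the same route as the paper: the paper's ``balance equations'' $\pi_\eta(e|s)\beta(s)(1-\eta(e,s))=\pi_\eta(e+1|s)(1-\beta(s))\eta(e+1,s)$ are exactly the detailed-balance relations you derive for the birth--death chain, followed by the same telescoping/induction step and normalization. The extra material you supply (explicit transition probabilities from (\ref{eq:energyQueueEvol}), reversibility of nearest-neighbour chains, irreducibility for uniqueness) is just a fuller justification of what the paper states tersely.
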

%%{-1mm}
\begin{proof}
We apply the balance equations 
%{-3mm}
\begin{align}
&\pi_\eta(e|s)\beta(s)(1-\eta(e,s))=\pi_\eta(e+1|s)(1-\beta(s))\eta(e+1,s),
\nonumber\\&
\quad e=0,1,\dots,e_{\max}-1,
\end{align}%%{-10mm}\\
from which (\ref{pie}) is obtained by induction on $e$.
Finally, (\ref{pi0}) is obtained by normalization.
\end{proof}
Letting 
%%{-3mm}
\begin{align}
\label{GP}
&G(\eta_u|s)=\sum_{e=1}^{e_{\max}}\pi_{\eta_u}(e|s)g(\eta_u(e,s)),\\\nonumber
& P(\eta_u|s)=\sum_{e=1}^{e_{\max}}\pi_{\eta_u}(e|s)\eta_u(e,s),
\end{align}%%{-10mm}\\
 we can rewrite (\ref{eq:Gsteady}) as
 %{-2mm}
\begin{align}
\label{eq:Gsteady2}
R_{\bs\eta}^{(u)}=\sum_{s\in\mathcal S}\pi_S(s)G(\eta_u|s)\prod_{i\neq u}(1-P(\eta_i|s)).
\end{align}%%{-10mm}\\
Eq. (\ref{eq:Gsteady2}) can be interpreted as follows. 
$G(\eta_u|s)$ is the average long-term reward of EHS $u$ in scenario $s$, assuming that 
all the other EHSs remain idle, so that no collisions occur.
$P(\eta_i|s)$ is the average long-term transmission probability of EHS $i$ in scenario $s$,
so that $\prod_{i\neq u}(1-P(\eta_i|s))$ is the steady-state probability that  all EHSs $i\neq u$ remain idle,
 \emph{i.e.}, the transmission of EHS $u$ is successful given that it transmits.
From (\ref{eq:averageGain}), the network utility under the aggregate policy $\bs\eta=(\eta_1,\eta_2,\dots,\eta_U)$
then becomes
%%{-1mm}
\begin{align}\label{Gagg}
R_{\bs\eta}=\sum_{s\in\mathcal S}\pi_S(s)
R_{\bs\eta}[s],
\end{align}%%{-10mm}\\
where we have defined the network utility in scenario $s$ as
\begin{align}\label{Gaggs}
R_{\bs\eta}[s]\triangleq
\sum_{u=1}^UG(\eta_u|s)\prod_{i\neq u}(1-P(\eta_i|s)).
\end{align}%%{-10mm}\\
\subsection{Symmetric control policies}
In order to guarantee fairness among the EHSs,
we consider only symmetric control policies, \emph{i.e.}, all the EHSs employ the same policy $\eta_u=\eta,\ \forall u$.
From (\ref{Gagg}) and (\ref{Gaggs}), we thus obtain
%%{-3mm}
\begin{align}\label{rewtot}
R_{\bs\eta}=\sum_{s\in\mathcal S}\pi_S(s)
UG(\eta|s)(1-P(\eta|s))^{U-1}.
\end{align}%%{-10mm}\\
The optimization problem (\ref{optor}) over the class of admissible and symmetric policies is stated as
%%{-3mm}
\begin{align}\label{origoptjoint}
\eta^* = \arg\max_{\eta\in\mathcal U}\sum_{s\in\mathcal S}\pi_S(s)
UG(\eta|s)(1-P(\eta|s))^{U-1}.
\end{align}%%{-10mm}\\
Moreover, since each term within the sum depends on the policy used in scenario $s$ only, and is independent of the policies used in all other scenarios, it follows that 
we can decouple the optimization as 
\begin{align}\label{origopt}
\eta^*(\cdot,s) = \underset{\eta(\cdot,s)\in\mathcal U}{\arg\max}
UG(\eta|s)(1-P(\eta|s))^{U-1}, \forall s\in\mathcal S.
\end{align}%%{-10mm}\\
Therefore, the optimal threshold policy can be determined independently for each scenario $s$, rather than jointly.
In the next section, we focus on the solution of the optimization problem (\ref{origopt}), for a generic scenario state $s\in\mathcal S$.

We have the following upper bound to $R_{\bs\eta}$.
%{-5mm}
\begin{propos}\label{upbound}
The network utility $R_{\bs\eta}$ under symmetric threshold policies is upper bounded by
%%{-3mm}
\begin{align}
\label{ub2}
R_{\bs\eta}&{<}
\sum_{s\in\mathcal S}\pi_S(s)
Ug(\min\{x^*,\beta(s)\})(1-\min\{x^*,\beta(s)\})^{U-1}
\nonumber\\&
\triangleq R^{(up)},
\end{align}%%{-10mm}\\
where $x^*{\in}(0,1/U)$ uniquely solves $\bar V(y_{\mr{th}}(x^*))(1{-}x^*){=}(U{-}1)g(x^*)$.
\end{propos}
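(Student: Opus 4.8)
The plan is to bound the per-scenario utility $R_{\bs\eta}[s]=UG(\eta|s)(1-P(\eta|s))^{U-1}$ by a function of the single scalar $P(\eta|s)$, and then to maximize that scalar over its feasible range. Since $R_{\bs\eta}=\sum_s\pi_S(s)R_{\bs\eta}[s]$ with fixed positive weights $\pi_S(s)$, it suffices to establish a strict bound on each $R_{\bs\eta}[s]$ and sum. Two ingredients drive the reduction: a Jensen inequality that collapses the energy-level dependence of $G(\eta|s)$ onto $P(\eta|s)$, and an energy-balance identity that caps $P(\eta|s)$ at the harvesting rate $\beta(s)$.

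First I would collapse $G$ onto $P$. Because $\eta(0,s)=0$ and $g(0)=0$, the $e=0$ term contributes nothing, so I may write $G(\eta|s)=\sum_{e=0}^{e_{\max}}\pi_\eta(e|s)g(\eta(e,s))$ and $P(\eta|s)=\sum_{e=0}^{e_{\max}}\pi_\eta(e|s)\eta(e,s)$ as expectations against the \emph{full} distribution $\pi_\eta(\cdot|s)$. Strict concavity of $g$ (Prop \ref{lemma:gx}) then gives, by Jensen, $G(\eta|s)<g(P(\eta|s))$, the inequality being strict because an admissible $\eta\in\mathcal U$ is nonconstant on the support (it equals $0$ at $e=0$ but is positive at $e_{\max}$, while every state carries positive steady-state mass by irreducibility). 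Multiplying by $U(1-P(\eta|s))^{U-1}>0$ yields $R_{\bs\eta}[s]<h(P(\eta|s))$, where I set $h(x)\triangleq Ug(x)(1-x)^{U-1}$.

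Next I would bound the feasible range of $P(\eta|s)$. Summing the balance equations of Prop \ref{SSD} over $e=0,\dots,e_{\max}-1$ and simplifying (using $\eta(0,s)=0$) gives the closed form $P(\eta|s)=\beta(s)\big[1-\pi_\eta(e_{\max}|s)(1-\eta(e_{\max},s))\big]\le\beta(s)$, expressing that in steady state the transmission rate cannot exceed the energy-arrival rate, the deficit being the battery-overflow loss. Hence $R_{\bs\eta}[s]<\max_{0\le x\le\beta(s)}h(x)$, and the problem reduces to maximizing the one-dimensional $h$ on $[0,\beta(s)]$.

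Finally I would analyze $h$. Differentiating, $h'(x)=U(1-x)^{U-2}\phi(x)$ with $\phi(x)=\bar V(y_{\mr{th}}(x))(1-x)-(U-1)g(x)$, where I used $g'(x)=\bar V(y_{\mr{th}}(x))$ from Prop \ref{lemma:gx}. I would show $\phi$ is strictly decreasing — being the difference of the strictly decreasing positive product $g'(x)(1-x)$ and the strictly increasing $(U-1)g(x)$ — with $\phi(0^+)=+\infty>0$ and $\phi(1^-)=-(U-1)g(1)<0$; this yields a unique interior root $x^*$, which is exactly the maximizer, so $h$ increases on $(0,x^*)$ and decreases on $(x^*,1)$. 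To locate $x^*$, the tangent-line characterization of concavity with $g(0)=0$ gives $g'(x_0)x_0<g(x_0)$ for every $x_0>0$; taking $x_0=1/U$ makes $\phi(1/U)=(U-1)[g'(1/U)/U-g(1/U)]<0$, whence $x^*\in(0,1/U)$. Unimodality then forces the constrained maximum to equal $h(\min\{x^*,\beta(s)\})$, and summing the strict per-scenario bounds against $\pi_S(s)$ delivers $R_{\bs\eta}<R^{(up)}$. I expect the characterization of $x^*$ — proving monotonicity of $\phi$ and pinning its root inside $(0,1/U)$ purely from the qualitative properties of $g$ in Prop \ref{lemma:gx}, without assuming $g$ twice differentiable — to be the most delicate step; everything else is bookkeeping.
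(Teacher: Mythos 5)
Your proposal is correct and follows essentially the same route as the paper's proof: Jensen's inequality on the strictly concave $g$ to get $G(\eta|s)<g(P(\eta|s))$, the energy constraint $P(\eta|s)\leq\beta(s)$, and then unimodality of $F(x)=Ug(x)(1-x)^{U-1}$ via the sign of $g^\prime(x)(1-x)-(U-1)g(x)$, with the unique root $x^*\in(0,1/U)$ giving the constrained maximizer $\min\{x^*,\beta(s)\}$. The only differences are that you fill in two steps the paper asserts without detail — the overflow identity $P(\eta|s)=\beta(s)\bigl[1-\pi_\eta(e_{\max}|s)(1-\eta(e_{\max},s))\bigr]$ obtained by summing the balance equations, and the non-degeneracy argument making Jensen strict — which strengthen rather than change the argument.
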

 %{-3mm}
\begin{proof}
Using the strict concavity of $g(x)$ (Prop. \ref{lemma:gx}) and Jensen's inequality,
$G(\eta|s)$ is upper bounded by
$G(\eta|s)<g\left(\sum_{e=1}^{e_{\max}}\pi_{\eta}(e|s)\eta(e,s)\right)=g(P(\eta|s))$.
Moreover, the EH operation in scenario $s$ with EH rate $\beta(s)$ induces the constraint on the average transmission probability 
$P(\eta|s)\leq \beta(s)$.
Therefore, the network utility (\ref{rewtot}) can be upper bounded  by
%%{-3mm}
\begin{align}\label{upbound2}
R_{\bs\eta}&<
\sum_{s\in\mathcal S}\pi_S(s)
Ug(P(\eta|s))(1-P(\eta|s))^{U-1}
\nonumber\\
&
\leq 
\sum_{s\in\mathcal S}\pi_S(s)
\max_{x\in[0,\beta(s)]} F(x),
\end{align}%%{-10mm}\\
where we have defined $F(x)\triangleq Ug(x)(1-x)^{U-1}$.
The first derivative of $F(x)$ with respect to $x$ is given by
\begin{align}
F^\prime(x)=U(1-x)^{U-2}[g^\prime(x)(1-x)-(U-1)g(x)].
\end{align}
Therefore $F(x)$ is an increasing function of $x$ if and only if $f(x)\triangleq g^\prime(x)(1-x)-(U-1)g(x)>0$.
Using the strict concavity of $g(x)$, we have that $f(x)$ is a strictly decreasing function of $x$, with
$\lim_{x\to 0}f(x)=\infty$ and
$f(1/U)=-[g(1/U)-\frac{1}{U}g^\prime(1/U)]<0$.
Therefore, there exists a unique $x^*\in (0,1/U)$ such that $f(x^*)=0$, $f(x)>0,\forall x\in (0,x^*)$
and $f(x)<0,\forall x\in (x^*,1/U)$.
Equivalently, $F(x)$ is a strictly increasing function of $x\in (0,x^*)$ and
 strictly decreasing function of $x\in (x^*,1/U)$, so that it is maximized by $x^*$.
Using Prop. \ref{lemma:gx} and the definition of $f(x)$, $x^*$ is the unique solution of  $\bar V(y_{\mathrm{th}}(x^*))(1-x^*)=(U-1)g(x^*)$.
 Then, by restricting the optimization of $F(x)$ over $x\in [0,\beta(s)]$ we obtain $R_{\bs\eta}\leq \sum_{s\in\mathcal S}\pi_S(s)F(\min\{x^*,\beta(s)\})$. The proposition is thus proved.
\end{proof}
In Sec.~\ref{heur}, we will present a heuristic policy, which will be shown to achieve the upper bound (\ref{upbound2}) for asymptotically large 
battery capacity $e_{\max}\to\infty$.
%{-5mm}
\section{Optimization and Analysis of Threshold policies}\label{Sec:analysis}
%{-1mm}
%It can be shown that, since $g(x)$ is strictly concave, the optimal symmetric control policy $\eta^*$ is unique and belongs to $\mathcal U$.
 %%{-3mm}
The optimization problem (\ref{origopt}) for the case of one EHS ($U=1$) has been studied in detail in \cite{Michelusi2012} and can be solved using the \emph{policy iteration algorithm} (PIA) \cite{Bertsekas2005}.
However, in general, when $U>1$, (\ref{origopt}) cannot be recast as a convex optimization problem,
hence we need to resort to approximate solutions.
In particular, in order to determine a local optimum of (\ref{origopt}),
 we use a mathematical artifice based on a game theoretic formulation of the multiaccess problem considered in this paper:
we model the optimization problem as a game,
 where it is assumed that each EHS, say $u$, is a player
which attempts to maximize the common payoff (\ref{Gaggs}) by optimizing its own policy $\eta_u$.
Note that the EHSs do not behave strategically, but rather attempt to maximize a \emph{common} network utility
with respect to their own policy.

We proceed as follows.
We first characterize the general Nash equilibrium (NE) of this game, thus defining the policy profile $\bs{\eta}^*$,
over the general space of non symmetric policies $\bs\eta\in\mathcal U^U$.
By definition of NE, $\bs{\eta}^*$ is the joint policy such that, 
if EHS $u$ deviates from the equilibrium solution by using a different policy $\eta_u\neq\eta_u^*$, while all other EHSs $i\neq u$ use the policy achieving the NE, $\eta_i^*$,
then a smaller network utility $R_{\bs\eta}$ is obtained. This equilibrium condition must hold
for all EHSs $u$.
Equivalently, any \emph{unilateral} deviation from the NE $\bs{\eta}^*$ (\emph{i.e.}, one and only one EHS deviates from it)
yields a smaller network utility $R_{\bs\eta}$.
Then, since our focus is on symmetric policies, we study the existence of a \emph{symmetric NE} (SNE) for this game,
\emph{i.e.}, a NE characterized by the symmetry condition $\eta_u^*=\eta^*,\ \forall u$, so that all EHSs employ the same policy.
In Theorem \ref{propeta}, we present structural properties of the SNE.
In Theorem \ref{thm1},
we show that the SNE is unique, and  we provide Algorithm~\ref{algoSNE} to compute it.
In Theorem \ref{thm2}, we prove that the SNE, and thus the policy 
returned by Algorithm~\ref{algoSNE}, is
 a local optimum of the original optimization problem (\ref{origopt}).
 Since in this section we consider the optimization problem (\ref{origopt}) for a specific scenario state $s\in\mathcal S$,
for notational convenience we neglect the dependence of $\eta$, $\pi_\eta$,
 $G(\eta_u|s)$ and $P(\eta_u|s)$
  on $s$, \emph{i.e.}, we write
 $\eta$, $\pi_\eta(e)$, $G(\eta_u)$ and $P(\eta_u)$ instead of $\eta(\cdot,s)$, $\pi_\eta(e|s)$, $G(\eta_u|s)$ and $P(\eta_u|s)$, respectively.
 
If a NE exists for this game (not necessarily symmetric), defined by the policy profile $\bs\eta^*=(\eta_1^*,\eta_2^*,\dots,\eta_U^*)$,
then any unilateral deviation from $\bs\eta^*$ yields a smaller network utility,
so that the NE must simultaneously solve, $\forall u$,
%%{-3mm}
\begin{align}\label{NEor}
\begin{array}{rl}
\!\!\!\!\!\!\!\eta_u^*=\!\!\!\!&
\arg\max_{\eta_u\in\mathcal U}R_{\eta_u,\bs\eta_{-u}^*}[s]
\\=\!\!\!\!&
\arg\max_{\eta_u\in\mathcal U}
\left[G(\eta_u)\prod_{i\neq u}(1-P(\eta_i^*))\right.\\&\left.+(1-P(\eta_u))\sum_{n\neq u} G(\eta_n^*)\prod_{i\neq n,u}(1-P(\eta_i^*))
\right]
%\nonumber
\\
=\!\!\!\!&
\arg\max_{\eta_u\in\mathcal U}
\left[G(\eta_u)-P(\eta_u)\sum_{n\neq u}\frac{G(\eta_n^*)}{1-P(\eta_n^*)}\right].%\nonumber
\end{array}\!\!
\end{align}%%{-10mm}\\
%where we have denoted the policy profile of all other EHSs except $u$ as $\bs\eta_{-u}$.
In the second step, using (\ref{Gaggs}), we have explicated the dependence of the network utility on the policy of EHS $u$, $\eta_u$
and, in the last step, we have
divided the argument of the optimization by the term $\prod_{i\neq u}(1-P(\eta_i^*))$, which is independent of $\eta_u$, and hence does not affect the optimization.

By further imposing a symmetric policy $\eta_u^*=\eta^*,\ \forall u,$ in (\ref{NEor}), we obtain
the SNE
%%{-3mm}
\begin{align}\label{NE}
\eta^*=\arg\max_{\eta\in\mathcal U}\left[G(\eta)-\Lambda(\eta^*)P(\eta)\right],
\end{align}%%{-10mm}\\
where we have defined
%{-3mm}
\begin{align}\label{Lambda}
\Lambda(\eta)=(U-1)\frac{G(\eta)}{1-P(\eta)}.
\end{align}%%{-10mm}\\
 %%{-1mm}
Note that $\eta^*$ defined in (\ref{NE}) is \emph{simultaneously} optimal for all the EHSs,
\emph{i.e.}, by the definition of Nash equilibrium, any \emph{unilateral} deviation of a single EHS from the SNE $\eta^*$
yields a smaller network utility $R_{\bs\eta}$.\footnote{Note that these unilateral deviations do not preserve the symmetry of the policy, since, by definition, they assume that one and only one
EHS deviates from it.}
Moreover, although the SNE imposes symmetric policies, it represents an equilibrium over the larger space of non-symmetric policies, $\bs\eta\in\mathcal U^U$,
as dictated by the NE in (\ref{NEor}).
The interpretation of (\ref{NE}) is as follows. $G(\eta)$ is the reward obtained by EHS $u$ when 
 all the other EHSs are always idle. The term $\Lambda(\eta^*)$
can be interpreted as a Lagrange multiplier associated to a constraint on the transmission probability of
EHS $u$, so as to limit the collisions caused by its transmissions to the other EHSs in the network. The overall objective function in (\ref{NE}) is thus interpreted as the maximization 
of the individual reward of each node, with a constraint on the average transmission probability to limit the
collisions, which are deleterious to network performance. Interestingly, the Lagrange
multiplier (\ref{Lambda}) increases with the number of EHSs $U$, so that,
the larger the network size, the more stringent the constraint on the average transmission probability of each EHS,
as expected. Moreover, the larger $P(\eta^*)$ or $G(\eta^*)$, \emph{i.e.}, the larger the transmission probability or the reward of the other EHSs,
 the larger the Lagrange multiplier $\Lambda(\eta^*)$, thus imposing a stricter constraint 
on EHS $u$ to limit its transmissions and reduce collisions.
Note that the optimal Lagrange multiplier $\Lambda(\eta^*)$ is a function of  the SNE (\ref{NE}),
hence its value optimally balances the power constraints for the EHSs.

In order to solve (\ref{NE}), we address the more general optimization problem, for $\lambda\geq 0$,
%%{-3mm}
\begin{align}\label{etalambda}
\eta^{(\lambda)}=\arg\max_{\eta\in\mathcal U}Z_\lambda(\eta),
\end{align}%%{-10mm}\\
where we have defined $Z_\lambda(\eta)\triangleq G(\eta)-\lambda P(\eta)=\sum_{e=1}^{e_{\max}}\pi_\eta(e)z_\lambda(\eta(e))$
and $z_\lambda(x)=g(x)-\lambda x$.
Theorem \ref{propeta} presents structural properties of $\eta^{(\lambda)}$, which are thus inherited by the SNE (\ref{NE}).
%{-5mm}
\begin{thm}\label{propeta}$ $\\
P1) $\eta^{(\lambda)}$ is unique, \emph{i.e.},
 $Z_\lambda(\eta^{(\lambda)})>Z_\lambda(\eta),\forall\eta\neq\eta^{(\lambda)}$;

\noindent P2) $\eta^{(\lambda)}$ is continuous in $\lambda$;

\noindent P3) $\eta^{(\lambda)}(e)$ is a strictly increasing function of $e$, and
$\eta^{(\lambda)}(e)\in (\eta_{L}(\lambda),\eta_H(\lambda))$, $\forall e\neq 0$, where $\eta_L(\lambda)\in (0,\min\{x_\lambda^*,\beta(s)\})$, $\eta_U(\lambda)\in [\min\{\beta(s),x_\lambda^*\},x_\lambda^*]$\footnote{Note that, if $\beta(s)\geq x_\lambda^*$, then $\eta_U(\lambda)=x_\lambda^*$.} uniquely solve
%%{-1mm}
\begin{align*}
&g(\eta_L(\lambda)){+}(1{-}\eta_L(\lambda))\bar V(y_{\mathrm{th}}(\eta_L(\lambda)))
%\nonumber\\&\ 
{-}\frac{z_\lambda(\min\{x_\lambda^*,\beta(s)\})}{\beta(s)}{=}\lambda, \\
&g(\eta_U(\lambda))-\eta_U(\lambda)\bar V(y_{\mathrm{th}}(\eta_U(\lambda)))-z_\lambda(\min\{x_\lambda^*,\beta(s)\})=0,
\end{align*}%%{-10mm}\\
 and $x_\lambda^*=\arg\max_{x\in[0,1]} z_\lambda(x)$ is the unique solution in $[0,1]$ of $\bar V(y_{\mathrm{th}}(x))=\lambda$.\footnote{Note
  that $x_\lambda^*$, as defined here, is not to be confused with $x^*$, defined in Prop. \ref{upbound}.}
\end{thm}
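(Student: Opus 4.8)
The plan is to recast the problem \eqref{etalambda} as an average-reward Markov decision process (MDP) on the energy state $e\in\mathcal E$: in state $e$ the controller selects the transmission probability $x=\eta(e)\in[0,1]$, earns the per-stage reward $z_\lambda(x)=g(x)-\lambda x$, and the state moves along the birth--death chain induced by \eqref{eq:energyQueueEvol} with $Q_{u,k}\sim\mathcal B(x)$ and $B_{u,k}\sim\mathcal B(\beta)$ (dropping the scenario index). Since every $\eta\in\mathcal U$ renders $\{E_k\}$ irreducible, this is a unichain MDP, so there exist a scalar $\bar Z_\lambda$ (the optimal value of \eqref{etalambda}) and a relative value function $h(\cdot)$, unique up to an additive constant, solving the average-reward Bellman equation. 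Writing $\Delta(e)\triangleq h(e)-h(e-1)$, the first-order optimality condition in an interior state $0<e<e_{\max}$ is $g'(\eta(e))-\lambda=(1-\beta)\Delta(e)+\beta\Delta(e+1)$, with boundary versions at $e=0$ (where $\eta(0)=0$ is forced, giving $\bar Z_\lambda=\beta\Delta(1)$) and at $e=e_{\max}$ (giving $z_\lambda'(\eta(e_{\max}))=(1-\beta)\Delta(e_{\max})$).

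For P1 I would invoke the strict concavity of $g$ (Prop.~\ref{lemma:gx}): the Bellman right-hand side is strictly concave in the action $x$ in every state, and it depends on $h$ only through the differences $\Delta(e)$, which are unique for a unichain; hence the maximizer $\eta^{(\lambda)}(e)$ is determined uniquely state-by-state, so $\eta^{(\lambda)}$ is the unique optimizer of \eqref{etalambda}. P2 then follows from a Berge maximum-theorem argument: $Z_\lambda(\eta)$ is jointly continuous in $(\lambda,\eta)$ and, by P1, has a unique maximizer, which therefore varies continuously with $\lambda$ (equivalently, one applies the implicit function theorem to the first-order conditions).

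The core of P3 is the monotonicity, for which I would prove that $h$ is concave in $e$, i.e.\ $\Delta(e)$ is nonincreasing. I would establish this by value iteration, showing that the Bellman operator maps concave functions to concave functions; the delicate points are the reflecting boundaries at $e=0$ and $e=e_{\max}$ and the inner maximization over the concave-in-$x$ stage reward. Granting concavity of $h$, the quantity $(1-\beta)\Delta(e)+\beta\Delta(e+1)$ in the interior optimality condition is nonincreasing in $e$; since $g'$ is strictly decreasing (Prop.~\ref{lemma:gx}), $\eta^{(\lambda)}(e)$ is strictly increasing in $e$. This concavity-preservation step is the main obstacle; everything else reduces to the boundary algebra below and to standard unichain MDP facts.

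For the explicit bounds I would eliminate $h$ between the Bellman equation and the first-order conditions at the two boundary states. At $e=e_{\max}$ this yields $\bar Z_\lambda=z_\lambda(\eta^{(\lambda)}(e_{\max}))-\eta^{(\lambda)}(e_{\max})z_\lambda'(\eta^{(\lambda)}(e_{\max}))\triangleq\phi(\eta^{(\lambda)}(e_{\max}))$, and at $e=1$ (using $\bar Z_\lambda=\beta\Delta(1)$, after the $\Delta(2)$ terms cancel) it yields $z_\lambda(\eta^{(\lambda)}(1))+(1-\eta^{(\lambda)}(1))z_\lambda'(\eta^{(\lambda)}(1))\triangleq\psi(\eta^{(\lambda)}(1))=\bar Z_\lambda/\beta$, where $\phi'(x)=-x g''(x)>0$ and $\psi'(x)=(1-x)g''(x)<0$ are strictly monotone. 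The key inequality $\bar Z_\lambda<z_\lambda(\min\{x_\lambda^*,\beta\})$ comes from the strict Jensen bound $G(\eta)<g(P(\eta))$ (exactly as in the proof of Prop.~\ref{upbound}) together with the energy constraint $P(\eta)\le\beta$, recalling that $x_\lambda^*=\arg\max_x z_\lambda(x)$ solves $\bar V(y_{\mathrm{th}}(x))=\lambda$. Since $\eta_U(\lambda)$ is defined by $\phi(\eta_U)=z_\lambda(\min\{x_\lambda^*,\beta\})$ and $\eta_L(\lambda)$ by $\psi(\eta_L)=z_\lambda(\min\{x_\lambda^*,\beta\})/\beta$, comparing through the monotone maps $\phi$ and $\psi$ gives $\eta^{(\lambda)}(e_{\max})<\eta_U(\lambda)$ and $\eta^{(\lambda)}(1)>\eta_L(\lambda)$; the monotonicity in $e$ extends these to all $e\neq 0$. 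The stated ranges and uniqueness of $\eta_L,\eta_U$ follow from the strict monotonicity of $\psi,\phi$, from $\lim_{x\to 0}g'(x)=\infty$ and $g''<0$ (via the secant-versus-tangent inequality $z_\lambda(\beta)/\beta>z_\lambda'(\beta)$), completing the argument.
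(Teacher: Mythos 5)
Your MDP framing and the boundary algebra in the last paragraph are sound, and in fact they reproduce the paper's own equations: eliminating the relative value function between the average-reward Bellman equation and the first-order conditions at $e=1$ and $e=e_{\max}$ gives $\psi(\eta^{(\lambda)}(1))=\bar Z_\lambda/\beta$ and $\phi(\eta^{(\lambda)}(e_{\max}))=\bar Z_\lambda$, and your $\psi(x)=g(x)+(1-x)g^\prime(x)-\lambda$ and $\phi(x)=g(x)-xg^\prime(x)$ coincide, up to sign and the constant $z_\lambda(\min\{x_\lambda^*,\beta(s)\})$ shifts, with the functions $L(\cdot)$ and $U(\cdot)$ that the paper derives in Appendix~B by differentiating $Z_\lambda(\eta)$ with respect to $\eta(1)$ and $\eta(e_{\max})$ through the closed-form stationary distribution of Prop.~\ref{SSD}; the key inequality $Z_\lambda(\eta)<z_\lambda(\min\{x_\lambda^*,\beta(s)\})$ is the same Jensen-plus-energy-constraint bound used there. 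So P3.B and the location/uniqueness of $\eta_L(\lambda),\eta_U(\lambda)$ would go through (modulo the caveat that you invoke $g^{\prime\prime}$, i.e.\ twice differentiability, where the paper only has strict concavity and strictly decreasing $g^\prime$), and your framework is the one the paper itself uses in Appendix~C to justify the PIA.

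The genuine gap is the centerpiece of P3: strict monotonicity of $\eta^{(\lambda)}(e)$ in $e$. You reduce it to concavity of the relative value function ($\Delta(e)$ nonincreasing), call this ``the main obstacle,'' and never prove it. That is precisely the step that cannot be waved through: propagating concavity under average-reward value iteration, with the reflecting upper boundary created by the $\min$ in the battery dynamics, the forced idle action at $e=0$, and the need to work with relative (not discounted) values, is a nontrivial structured-MDP lemma, and it is exactly what the paper's proof is engineered to avoid. The paper argues instead by a steady-state perturbation (``transmission transfer''): assuming $\eta_0(\epsilon-1)<\eta_0(\epsilon)\geq\eta_0(\epsilon+1)$, it builds a policy $\eta_\delta$ that shifts transmission probability from state $\epsilon$ to states $\epsilon\pm1$ while pinning the stationary probabilities outside $\{\epsilon-1,\epsilon,\epsilon+1\}$, and shows $\bigl[\mathrm{d}Z_\lambda(\eta_\delta)/\mathrm{d}\delta\bigr]_{\delta\to0^+}>0$ using only concavity of $g$, with a separate computation for $\epsilon=e_{\max}-1$. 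Moreover, even if the concavity lemma were granted, it only yields that $(1-\beta)\Delta(e)+\beta\Delta(e+1)$ is nonincreasing, hence $\eta^{(\lambda)}(e)$ \emph{nondecreasing} --- not the strict increase claimed in P3; strictness would require strictly decreasing $\Delta(e)$, a further unproven layer. Two smaller points: your Berge argument for P2 needs a compact policy set, whereas $\mathcal U$ is open (you would first need P3-type interiority to restrict to a compact subset); and note that the paper itself omits the proofs of P1 and P2, so those sketches are additions rather than replacements.
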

%{-3mm}
\begin{proof}
See Appendix~B.%\ref{proofof:propeta}.
\end{proof}
The policy $\eta^{(\lambda)}$ can be determined numerically 
using the following PIA, by exploiting the structural properties of Theorem~\ref{propeta}.\footnote{In
 order to avoid confusion with the superscript $(\lambda)$ for the Lagrange multiplier in (\ref{etalambda}), we use the superscript $[i]$ to denote the iteration index of the algorithm.}
%{-5mm}
\begin{algo}[PIA for a given scenario state $s\in\mathcal S$]\label{PIA}
$ $ \\
1) {\bf Initialization}: $\eta^{[0]}\in\mathcal U$ such that  $\eta^{[0]}(e)\in(\eta_L(\lambda),\eta_H(\lambda))$,
$\eta^{[0]}(e{+}1){>}\eta^{[0]}(e)$, $\forall e\neq 0$; $\epsilon_{PIA}\ll 1$; $i=0$;

\noindent 2) {\bf Policy Evaluation}: compute $Z_\lambda(\eta^{[i]})$ as  in (\ref{GP}), where the 
steady state distribution $\pi_{\eta^{[i]}}(e)$ is given by Prop. \ref{SSD}. Let $D_{\eta^{[i]}}(0)=0$ and 
compute recursively, for $e>0$,
%%{-3mm}
\begin{align}
\label{RVF}
D_{\eta^{[i]}}(e)=&\frac{Z_\lambda(\eta^{[i]})-z_\lambda(\eta^{[i]}(e-1))}{\beta(s)(1-\eta^{[i]}(e-1))}
\nonumber\\&
+\frac{(1-\beta(s))\eta^{[i]}(e-1)}{\beta(s)(1-\eta^{[i]}(e-1))}D_{\eta^{[i]}}(e-1);
\end{align}%%{-7mm}

\noindent 3) {\bf Policy Improvement}: determine a new policy as
  %%{-3mm}
 \begin{align}\label{maxPI}
\eta^{[i+1]}(e)=
\left\{
\begin{array}{ll}
\eta_H(\lambda), & u(e,\lambda)>\eta_H(\lambda),\\
u(e,\lambda), &u(e,\lambda)\in[\eta_L(\lambda),\eta_H(\lambda)],\\
\eta_L(\lambda), & u(e,\lambda)<\eta_H(\lambda),
\end{array}
\right.
\end{align}%%{-10mm}\\
where $u(e,\lambda)$ uniquely solves $\bar V(y_{\mr{th}}(u(e,\lambda)))=x^{[i]}(e,\lambda)$,
where 
 %%{-3mm}
 \begin{align}
 \label{rho}
& x^{[i]}(e,\lambda){=}\lambda{+}\beta(s) D_{\eta^{[i]}}(e{+}1){+}(1{-}\beta(s))D_{\eta^{[i]}}(e),e{<}e_{\max},
\nonumber
\\
&x^{[i]}(e_{\max},\lambda)=\lambda+(1-\beta(s))D_{\eta^{[i]}}(e_{\max});
 \end{align}%%{-9mm}

\noindent 4) {\bf Termination test}:
 If $|Z_\lambda(\eta^{[i+1]})-Z_\lambda(\eta^{[i]})|<\epsilon_{PIA}$, return the optimal policy $\eta^{(\lambda)}=\eta^{[i+1]}$;
 otherwise, update the counter $i:=i+1$ and repeat from step 2).
% Note that, in the initialization phase, we have chosen $\eta^{(0)}=\eta_{NABP}$. However, a different initialization may be used, \emph{e.g.}, the optimal BP derived 
% in Sec.~\ref{Sec:analysis}.
\end{algo}
%{-3mm}
\begin{remark}
\label{complepia}
Algorithm \ref{PIA} can be computed efficiently.
In fact, 
the steady state distribution $\pi_{\eta^{[i]}}(e)$
in the policy evaluation step 2) 
can be computed in closed form from Prop. \ref{SSD}, whose complexity scales as $O(e_{\max})$.
$D_{\eta^{[i]}}(e)$ is then computed recursively, so
that the resulting complexity of step 2) scales as  $O(e_{\max})$.
The policy improvement step 3) requires to compute, for each energy level $e$,
$u(e,\lambda)$ as the unique solution of $\bar V(y_{\mr{th}}(u(e,\lambda)))=x^{[i]}(e,\lambda)$.
This can be done efficiently using the bisection method \cite{bisection}. If the desired 
accuracy in the evaluation of $u(e,\lambda)$ is $\epsilon_u$, then approximately $-\log_2(\epsilon_u)$ bisection steps are sufficient,
so that the overall complexity of step 2) scales as $O(-e_{\max}\log_2(\epsilon_u))$.
Steps 2)-3) are then iterated $N_{PIA}(\epsilon_{PIA})$ times, depending on the desired accuracy $\epsilon_{PIA}$.
Typically, $N_{PIA}(\epsilon_{PIA})\sim 5\div 10$, so that the overall complexity of Algorithm \ref{PIA}
 scales as $O(N_{PIA}(\epsilon_{PIA})e_{\max}(1-\log_2(\epsilon_u)))$.
 For a further discussion on the complexity and convergence of the PIA, please refer to \cite{Bertsekas2005} and references therein.
\end{remark}

\noindent The following proposition states the optimality of Algorithm~\ref{PIA}.
%{-5mm}
\begin{propos}\label{PIAoptimality}
Algorithm \ref{PIA} determines the optimal policy  $\eta^{(\lambda)}$ of (\ref{etalambda}) when $\epsilon_{PIA}\to 0$.
\end{propos}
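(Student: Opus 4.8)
The plan is to recognize that the optimization problem (\ref{etalambda}) is an average-reward Markov decision process (MDP) over the state space $\mathcal E$, and that Algorithm~\ref{PIA} is precisely the policy iteration algorithm for this MDP. First I would identify the MDP structure explicitly: the state is the energy level $e\in\mathcal E$, the action in state $e$ is the transmission probability $\eta(e)\in[\eta_L(\lambda),\eta_H(\lambda)]$, the transition kernel is governed by the battery dynamics (\ref{eq:energyQueueEvol}) with $B_{u,k}\sim\mathcal B(\beta(s))$, and the per-stage reward in state $e$ under action $\eta(e)$ is $z_\lambda(\eta(e))=g(\eta(e))-\lambda\eta(e)$, so that the long-run average reward of a policy $\eta$ equals $Z_\lambda(\eta)=\sum_e\pi_\eta(e)z_\lambda(\eta(e))$ as in (\ref{GP}). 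The quantity $D_{\eta^{[i]}}(e)$ computed via the recursion (\ref{RVF}) is the \emph{relative value function} (differential cost) of the evaluated policy, and the policy-improvement update (\ref{maxPI}) maximizes the one-step lookahead Bellman expression in each state.

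Second, I would verify the two ingredients that make average-reward policy iteration converge to the optimum. One is that the induced Markov chain on $\mathcal E$ is irreducible for every admissible policy $\eta\in\mathcal U$; this is already asserted after Definition~\ref{Def:admisspol}, so the average reward and relative value function are well defined and independent of the initial state, and the chain is unichain. The other is the standard \emph{policy improvement lemma}: if step 3) produces $\eta^{[i+1]}\neq\eta^{[i]}$, then $Z_\lambda(\eta^{[i+1]})>Z_\lambda(\eta^{[i]})$, i.e. the average reward strictly increases at each non-terminating iteration. I would establish this by the usual argument: the relative values $D_{\eta^{[i]}}$ satisfy the Poisson equation for $\eta^{[i]}$, and because (\ref{maxPI})--(\ref{rho}) maximize the Bellman right-hand side state-by-state, the greedy policy $\eta^{[i+1]}$ achieves a Bellman gain that translates, via the Poisson equation for $\eta^{[i+1]}$ and the stationary distribution $\pi_{\eta^{[i+1]}}$, into $Z_\lambda(\eta^{[i+1]})\geq Z_\lambda(\eta^{[i]})$, with strict inequality unless $\eta^{[i]}$ already satisfies the Bellman optimality equation.

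Third, I would invoke the uniqueness of the maximizer established in Theorem~\ref{propeta}, part P1: since $Z_\lambda$ has a unique maximizer $\eta^{(\lambda)}$ over $\mathcal U$, and since the sequence $Z_\lambda(\eta^{[i]})$ is strictly increasing and bounded above (by $Z_\lambda(\eta^{(\lambda)})$), the algorithm cannot cycle and must terminate at a policy satisfying the Bellman optimality condition. A fixed point of the improvement step is exactly a policy for which the greedy update reproduces the current policy, which is the Bellman optimality equation; by the standard MDP verification theorem a policy satisfying it is average-reward optimal, hence equals $\eta^{(\lambda)}$ by P1. Finally I would address the role of $\epsilon_{PIA}$: for $\epsilon_{PIA}>0$ the termination test may stop slightly early, but as $\epsilon_{PIA}\to 0$ the returned policy converges to the exact fixed point; the clean statement is that the limiting policy is $\eta^{(\lambda)}$.

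I expect the main obstacle to be the policy improvement lemma in the \emph{continuous-action, average-reward} setting rather than the textbook finite-action discounted setting. Two points need care. First, the action set $[\eta_L(\lambda),\eta_H(\lambda)]$ is a continuum, so I must confirm that the pointwise maximizer in (\ref{maxPI}) is well defined and attained; this follows from Proposition~\ref{lemma:gx}, since $g'(x)=\bar V(y_{\mr{th}}(x))$ is continuous and strictly decreasing, making the stationarity condition $\bar V(y_{\mr{th}}(u(e,\lambda)))=x^{[i]}(e,\lambda)$ uniquely solvable and the clipped update (\ref{maxPI}) the true constrained maximizer. Second, in average-reward PIA the improvement must be shown to increase the \emph{gain} $Z_\lambda$, not merely the Bellman right-hand side; the subtlety is that the relative value function is only determined up to an additive constant, so I would pin it down by the normalization $D_{\eta^{[i]}}(0)=0$ used in (\ref{RVF}) and then carry out the Poisson-equation manipulation carefully, weighting the state-wise Bellman gains by the stationary distribution $\pi_{\eta^{[i+1]}}$ of the \emph{improved} policy to extract the strict increase in $Z_\lambda$.
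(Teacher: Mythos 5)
Your proposal is correct and follows essentially the same route as the paper: Appendix~C likewise identifies the recursion (\ref{RVF}) as the (differenced) relative value function solving the Poisson equation for the battery birth--death chain, and shows that the improvement step reduces to maximizing $g(\eta(e))-x^{[i]}(e,\lambda)\eta(e)$ over $\eta(e)$, whose clipped solution is (\ref{maxPI}) by the strict concavity of $g$ and the identity $g^\prime(x)=\bar V(y_{\mathrm{th}}(x))$. The only difference is one of emphasis: the paper outsources the convergence/optimality of average-reward policy iteration to \cite{Bertsekas2005} and devotes its proof entirely to verifying these two algebraic correspondences, whereas you assert the correspondences and instead spell out the standard policy-improvement lemma (weighting Bellman gains by the stationary distribution of the improved policy) and the monotonicity-plus-uniqueness argument via P1 of Theorem~\ref{propeta} that the paper handles by citation.
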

%{-3mm}
\begin{proof}
The optimality of the policy iteration algorithm is proved in \cite{Bertsekas2005}.
The specific forms of the
 policy evaluation and improvement steps are proved in Appendix~C.%\ref{proofofPIA}.
\end{proof}

By comparing (\ref{NE}) and (\ref{etalambda}), $\eta^*$ is optimal for (\ref{NE})
if and only if there exists some $\lambda^*\geq 0$ such that $\Lambda(\eta^{(\lambda^*)})=\lambda^*$.
If such $\lambda^*$ exists, then $\eta^*=\eta^{(\lambda^*)}$.

Note that the SNE may not exist in general. 
This is the case, for instance, if for every symmetric policy $\bs\eta$,
 with $\eta_u=\eta_i,\ \forall u,i$, there exists some unilateral deviation from it
 which improves the network utility, so that the optimality condition (\ref{NEor}) may not hold under any symmetric
$\bs\eta$.
Alternatively, the SNE may not be unique in general, \emph{i.e.}, there may exist multiple symmetric policy profiles satisfying the condition  (\ref{NEor}).
Even though the existence and uniqueness of the SNE, solution of (\ref{NE}), are not granted in general, 
these properties hold for the model considered in this paper, as stated by the following theorem.
%{-4mm}
\begin{thm}\label{thm1}
There exists a unique $\eta^*\in\mathcal U$ solution of (\ref{NE}), \emph{i.e.},
$\exists!\ \eta^*\in\mathcal U$ such that 
\begin{align*}
G(\eta^*)-\Lambda(\eta^*)P(\eta^*)
>
G(\eta)-\Lambda(\eta^*)P(\eta),\ \forall\eta\neq \eta^*.
\end{align*}
Moreover, under the optimal SNE $\eta^*$, $P(\eta^*)\leq\min\{\beta(s),\frac{1}{U}\}$.
\end{thm}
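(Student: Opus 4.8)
The plan is to reduce the fixed-point condition defining the SNE to a one-dimensional root-finding problem in the Lagrange multiplier $\lambda$, and then exploit the structural results of Theorem~\ref{propeta} for the parametric family $\eta^{(\lambda)}$ in (\ref{etalambda}). As observed after (\ref{etalambda}), $\eta^*$ solves (\ref{NE}) if and only if $\lambda^*\triangleq\Lambda(\eta^*)$ is a fixed point of the map $\Psi(\lambda)\triangleq\Lambda(\eta^{(\lambda)})=(U-1)G(\eta^{(\lambda)})/(1-P(\eta^{(\lambda)}))$, in which case $\eta^*=\eta^{(\lambda^*)}$. Since $\eta^{(\lambda)}$ is unique (P1 of Theorem~\ref{propeta}), solutions of (\ref{NE}) are in one-to-one correspondence with fixed points of $\Psi$, so it suffices to prove that $\Psi$ admits a unique fixed point $\lambda^*\ge0$.

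First I would show that $G(\eta^{(\lambda)})$ and $P(\eta^{(\lambda)})$ are non-increasing in $\lambda$, by a standard cross-optimality comparison. Writing the optimality of $\eta^{(\lambda_1)}$ and $\eta^{(\lambda_2)}$ for $Z_{\lambda_1}$ and $Z_{\lambda_2}$ respectively and adding the two inequalities yields $(\lambda_1-\lambda_2)\bigl(P(\eta^{(\lambda_2)})-P(\eta^{(\lambda_1)})\bigr)\ge0$, hence $P(\eta^{(\lambda)})$ is non-increasing; reinserting this into the optimality inequality for $Z_{\lambda_1}$ gives $G(\eta^{(\lambda_1)})-G(\eta^{(\lambda_2)})\ge\lambda_1\bigl(P(\eta^{(\lambda_1)})-P(\eta^{(\lambda_2)})\bigr)\ge0$ for $\lambda_1<\lambda_2$, so $G(\eta^{(\lambda)})$ is non-increasing as well. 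As $G\ge0$ and $1-P>0$ for admissible policies, $\Psi$ is non-increasing, hence bounded above by $\Psi(0)$. Combined with the continuity of $\Psi$ (inherited from P2 of Theorem~\ref{propeta} and the continuity of $G,P$ in $\eta$), the function $h(\lambda)\triangleq\Psi(\lambda)-\lambda$ is continuous and strictly decreasing, with $h(0)=\Psi(0)>0$ for $U\ge2$ (since $G(\eta^{(0)})>0$) and $h(\lambda)\le\Psi(0)-\lambda\to-\infty$; by the intermediate value theorem it has exactly one zero $\lambda^*>0$. This yields both existence and uniqueness of the SNE $\eta^*=\eta^{(\lambda^*)}$.

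For the bound $P(\eta^*)\le\min\{\beta(s),1/U\}$, the inequality $P(\eta^*)\le\beta(s)$ is the energy-balance constraint already used in the proof of Prop.~\ref{upbound}: the zero-drift condition of the battery Markov chain gives $P(\eta)=\beta(s)-\mathbb{E}[\mathrm{overflow}]\le\beta(s)$ for every admissible $\eta$. To obtain $P(\eta^*)<1/U$, I would combine two observations. By P3 of Theorem~\ref{propeta}, $\eta^*(e)<x_{\lambda^*}^*$ for all $e$, and since $g'$ is strictly decreasing with $g'(x_{\lambda^*}^*)=\bar V(y_{\mathrm{th}}(x_{\lambda^*}^*))=\lambda^*$ (Prop.~\ref{lemma:gx}), we get $g'(\eta^*(e))>\lambda^*$. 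Strict concavity and $g(0)=0$ give $g(x)>x\,g'(x)$ for $x>0$, whence $g(\eta^*(e))>\eta^*(e)g'(\eta^*(e))>\lambda^*\eta^*(e)$ for each $e\ge1$; weighting by $\pi_{\eta^*}(e)>0$ and summing gives $G(\eta^*)>\lambda^*P(\eta^*)$. Substituting $\lambda^*=(U-1)G(\eta^*)/(1-P(\eta^*))$ and dividing by $G(\eta^*)>0$ yields $1>(U-1)P(\eta^*)/(1-P(\eta^*))$, i.e., $P(\eta^*)<1/U$.

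The main obstacle is the monotonicity step: the entire uniqueness argument hinges on $\Psi$ being non-increasing. Because $\Psi$ is a ratio, one cannot argue directly from the monotonicity of the scalar objective $Z_\lambda$; instead one must separately control the numerator $G(\eta^{(\lambda)})$ and the denominator $1-P(\eta^{(\lambda)})$, establishing that \emph{both} $G$ and $P$ vary monotonically with $\lambda$. Once this is secured, existence follows from the coercivity $h(\lambda)\to-\infty$ together with the endpoint sign $h(0)>0$, and the power bound drops out of the tangent-line estimate for the concave reward $g$.
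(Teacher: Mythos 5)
Your proof is correct, and its first half follows the same skeleton as the paper's: reduce (\ref{NE}) to the fixed-point equation $\Lambda(\eta^{(\lambda)})=\lambda$, show that $h(\lambda)=\Lambda(\eta^{(\lambda)})-\lambda$ is continuous and strictly decreasing with a sign change, and invoke P1 of Theorem~\ref{propeta} to pass from the unique root $\lambda^*$ to the unique policy $\eta^*=\eta^{(\lambda^*)}$. The differences lie in the two key technical steps. For the monotonicity of $\Lambda(\eta^{(\lambda)})$, the paper never establishes that $G(\eta^{(\lambda)})$ is monotone: it proves $P(\eta^{(\lambda)})$ non-increasing by contradiction (Prop.~\ref{Pdecreasing}) and then controls the ratio $G/(1-P)$ directly via an algebraic rearrangement of $Z_{\lambda_1}$, $Z_{\lambda_2}$ terms (Prop.~\ref{Lembdadecreasing}). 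You instead obtain monotonicity of $P$ from the standard summed cross-optimality inequality, bootstrap it into monotonicity of $G$, and then get monotonicity of the ratio for free (non-negative non-increasing numerator over positive non-decreasing denominator); this is more transparent, and your coercivity bound $h(\lambda)\le\Psi(0)-\lambda$ also spares the paper's computation of $\lim_{\lambda\to\infty}\Lambda(\eta^{(\lambda)})=0$. The larger divergence is the bound $P(\eta^*)\le 1/U$: the paper uses continuity of $P(\eta^{(\lambda)})$ to pick $\hat\lambda$ with $P(\eta^{(\hat\lambda)})=1/U$, compares $Z_{\hat\lambda}$ against the idle policy to get $h(\hat\lambda)\ge 0$, hence $\hat\lambda\le\lambda^*$ and $P(\eta^{(\lambda^*)})\le P(\eta^{(\hat\lambda)})=1/U$ by monotonicity of $P$. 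Your tangent-line argument --- $g(x)>x\,g'(x)$ by strict concavity with $g(0)=0$, $g'(\eta^*(e))>\lambda^*$ since $\eta^*(e)<x^*_{\lambda^*}$ by P3, hence $G(\eta^*)>\lambda^*P(\eta^*)$, then substitution of $\lambda^*=\Lambda(\eta^*)$ --- is purely local, avoids the slightly delicate comparison with the inadmissible idle policy, and even yields the strict inequality $P(\eta^*)<1/U$, marginally stronger than the paper's statement. Both routes rest on the same prerequisites (Theorem~\ref{propeta} and Prop.~\ref{lemma:gx}), so yours is a valid, and in places cleaner, alternative.
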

%{-3mm}
\begin{proof}
See Appendix~D.%\ref{proofof:thm1}.
\end{proof}
Note that $P(\eta^*)\leq\frac{1}{U}$. This is due to 
the concurrent transmissions of the $U$ EHSs in the network. In fact,
the (average long-term) probability of successful data delivery to the FC by EHS $u$ is given by
 $P(\eta)[1-(1-P(\eta))^{U-1}]$. This is a decreasing function of $P(\eta)$, for
 $P(\eta)>\frac{1}{U}$, and therefore
 any policy $\eta$ such that $P(\eta)>\frac{1}{U}$ is sub-optimal, 
 resulting in a larger energy expenditure  (hence, less energy availability at the EHSs) and fewer successful transmissions to the FC.

In general, the SNE may yield a sub-optimal solution with respect to the original optimization problem
over symmetric policies
(\ref{origopt}). In fact, although the SNE imposes symmetric policies, it
is optimal only with respect to \emph{unilateral} deviations from it (which do not preserve symmetry), not symmetric neighborhoods.
In contrast, if the policy employed by each EHS is
modified by the same quantity,
thus enforcing symmetry,
 then the resulting network utility may improve, so that the SNE may not be the globally optimal symmetric policy.
Therefore, it may occur that a symmetric policy is the SNE but is not a local/global optimum of (\ref{origopt}).
Similarly, the optimal solution of (\ref{origopt}), $\eta^*$, may not be a SNE.
In fact, $\eta^*$ is such that any deviation
from it over the space of \emph{symmetric policies} yields a smaller network utility.
However, for such symmetric policy $\eta^*$, there may exist some
\emph{unilateral} deviations, occurring over the larger space of non-symmetric policies $\bs\eta\in\mathcal U^U$,
yielding an improved network utility, so that $\eta^*$ may not be the SNE in this case.
Even though the SNE, in general, may not be globally/locally optimal with respect to (\ref{origopt}),
in the following theorem we show that, in fact, it is a \emph{local} optimum of (\ref{origopt}), \emph{i.e.}, any \emph{symmetric} deviation in the neighborhood of the SNE yields 
a smaller network utility.
In the special case where the objective function in (\ref{origopt}) is a concave (or quasi-concave \cite{Boyd})  function of $\eta$,
it follows that the SNE is a global optimum for the original optimization problem (\ref{origopt}).
%{-3mm}
\begin{thm}\label{thm2}
The SNE $\eta^*$ in (\ref{NE}) is a \emph{local} optimum for (\ref{origopt}).
\end{thm}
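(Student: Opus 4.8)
The strategy is to leverage the \emph{global} (not merely infinitesimal) optimality that the SNE enjoys for the auxiliary objective, and to reduce the symmetric objective of (\ref{origopt}) to a one-dimensional comparison. Dropping the scenario index $s$ as in the rest of this section, write the per-scenario objective as $J(\eta)\triangleq UG(\eta)(1-P(\eta))^{U-1}$ and set $\lambda^*\triangleq\Lambda(\eta^*)=(U-1)G(\eta^*)/(1-P(\eta^*))$. Theorem \ref{thm1} states that $\eta^*$ is the \emph{unique} maximizer of $Z_{\lambda^*}(\eta)=G(\eta)-\lambda^*P(\eta)$, i.e. $G(\eta)-\lambda^*P(\eta)<G(\eta^*)-\lambda^*P(\eta^*)$ for every $\eta\neq\eta^*$. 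I would use this single strict inequality, valid over all of $\mathcal{U}$, rather than attempting a Hessian computation.

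First I would turn the SNE inequality into an upper bound on $J$. Since Theorem \ref{thm1} also gives $P(\eta^*)\le\min\{\beta(s),1/U\}<1$, continuity of $P$ (Prop. \ref{SSD} yields $\pi_\eta$ in closed form, hence $P$ is continuous) guarantees $1-P(\eta)>0$ on a neighborhood of $\eta^*$. Multiplying the strict inequality $G(\eta)<G(\eta^*)+\lambda^*(P(\eta)-P(\eta^*))$ by the positive factor $U(1-P(\eta))^{U-1}$ gives, for all $\eta\neq\eta^*$ in that neighborhood,
\[
J(\eta)<U\big[G(\eta^*)+\lambda^*(P(\eta)-P(\eta^*))\big](1-P(\eta))^{U-1}\triangleq\Omega(P(\eta)),
\]
where the right-hand side depends on $\eta$ only through the scalar $b\triangleq P(\eta)$. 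Thus it remains to show that the one-variable function $\Omega$ is locally maximized at $b^*\triangleq P(\eta^*)$, i.e. $\Omega(b)\le\Omega(b^*)=J(\eta^*)$ for $b$ near $b^*$.

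For the one-dimensional step I would substitute $\lambda^*=(U-1)G(\eta^*)/c$ with $c\triangleq1-b^*>0$ and set $t\triangleq b-b^*$, so that $\Omega(b)=U\frac{G(\eta^*)}{c}\,\Psi(t)$ with $\Psi(t)=[c+(U-1)t](c-t)^{U-1}$ and $\Psi(0)=c^U$. A direct differentiation collapses to
\[
\Psi'(t)=(U-1)(c-t)^{U-2}\big[(c-t)-(c+(U-1)t)\big]=-U(U-1)\,t\,(c-t)^{U-2},
\]
so that for $U\ge2$ one has $\Psi'(t)>0$ for $t<0$ and $\Psi'(t)<0$ for $t>0$ (as $(c-t)^{U-2}>0$ near $t=0$); hence $t=0$ is a strict local maximum of $\Psi$, giving $\Omega(b)\le\Omega(b^*)$ for $b$ near $b^*$. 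Combining with the previous bound, for all $\eta\neq\eta^*$ in a sufficiently small neighborhood of $\eta^*$ one obtains $J(\eta)<\Omega(P(\eta))\le\Omega(b^*)=J(\eta^*)$, which proves that $\eta^*$ is a (strict) local maximum of (\ref{origopt}). The case $U=1$ is immediate since then $\lambda^*=0$ and $J\equiv G$, so $\eta^*$ maximizes $J$ globally by Theorem \ref{thm1}.

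The crux I anticipate is the cancellation exhibiting $\Psi'(t)=-U(U-1)t(c-t)^{U-2}$: the bracket $(c-t)-(c+(U-1)t)$ must collapse to $-Ut$, and this happens \emph{precisely} because $\lambda^*$ was chosen as $\Lambda(\eta^*)$, the value that renders $\Psi'(0)=0$ and aligns the level set of $J$ with the supporting hyperplane of the SNE at $\eta^*$. Everything else is routine, and, in contrast to a second-order Taylor argument, this route sidesteps having to check negative semidefiniteness of the Hessian of $J$ and delivers strictness automatically from the strict SNE inequality of Theorem \ref{thm1}.
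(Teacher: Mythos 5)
Your proof is correct, but it takes a genuinely different route from the paper's. The paper's Appendix~E argument is local and second-order: it observes that $\eta^*$ is interior to $\mathcal U$, writes the first-order condition $\Delta_{\eta^*}(G(\eta))-\Lambda(\eta^*)\Delta_{\eta^*}(P(\eta))=\mathbf 0$ and the Hessian condition $\mathbf H_{\eta^*}(G(\eta))-\Lambda(\eta^*)\mathbf H_{\eta^*}(P(\eta))\prec 0$, and propagates these through the product rule to conclude $\Delta_{\eta^*}(R_{\bs\eta})=\mathbf 0$ and $\mathbf H_{\eta^*}(R_{\bs\eta})\prec 0$. You instead use the \emph{global} strict inequality of Theorem~\ref{thm1} as a supporting-line bound on $G$, multiply by the positive factor $U(1-P(\eta))^{U-1}$, and reduce the comparison to one-dimensional calculus in $b=P(\eta)$. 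Each approach buys something, but yours buys more. First, it is arguably more rigorous: the paper's assertion that the Hessian of the SNE objective is negative \emph{definite} at $\eta^*$ does not actually follow from uniqueness of an interior global maximizer (second-order necessary conditions give only semidefiniteness), whereas your argument needs no differentiability of $G$ and $P$ in $\eta$ at all---only continuity of $P$, available from the closed form in Prop.~\ref{SSD}---and inherits strictness directly from Theorem~\ref{thm1}. Second, your argument proves strictly more than you claim: every admissible $\eta$ satisfies $P(\eta)\le 1-\pi_\eta(0)<1$, so the multiplication step is valid on all of $\mathcal U$, and your own formula $\Psi'(t)=-U(U-1)\,t\,(c-t)^{U-2}$ shows that $t=0$ is the \emph{global} maximizer of $\Psi$ on $t<c$, i.e., $\Omega(b)\le\Omega(b^*)$ for every $b\in[0,1)$; hence $J(\eta)<\Omega(P(\eta))\le J(\eta^*)$ for \emph{every} $\eta\neq\eta^*$, making the SNE the unique global maximizer of (\ref{origopt}), a conclusion stronger than Theorem~\ref{thm2} itself (and consistent with the numerics, where the SNE coincides with the globally optimal policy for $e_{\max}=1$). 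The neighborhood restriction in your write-up is therefore unnecessary. Two cosmetic points you should make explicit: $G(\eta^*)>0$ (true for any admissible policy), so the prefactor $UG(\eta^*)/c$ in passing from $\Omega$ to $\Psi$ preserves maximization, and $\lambda^*>0$ for $U\ge 2$; both are immediate.
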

%{-3mm}
\begin{proof}
See Appendix~E.%\ref{proofof:thm2}.
\end{proof}

To conclude, we present an algorithm to determine the SNE $\eta^*$ in
(\ref{NE}), hence a local optimum of (\ref{origopt}), as proved in Theorem \ref{thm2}.
In particular, we employ the bisection method \cite{bisection}
to determine the unique $\lambda^*$ such that $h(\lambda^*)=0$, where we have defined $h(\lambda)=\Lambda(\eta^{(\lambda)})-\lambda$.
Such algorithm uses Algorithm \ref{PIA} to compute $\eta^{(\lambda)}$.
The SNE $\eta^*$ is then determined as $\eta^*=\eta^{(\lambda^*)}$.
 
Given lower and upper bounds $\lambda_{\min}$ and $\lambda_{\max}$ of $\lambda^*$,
 the bisection method consists in iteratively evaluating the sign of $h(\lambda)$ for $\lambda=(\lambda_{\min}+\lambda_{\max})/2$,
 based on which $\lambda_{\min}$ and $\lambda_{\max}$ of $\lambda^*$ can be refined,
until the desired accuracy is attained.
Its optimality is a consequence of Prop. \ref{Lembdadecreasing} in Appendix~D:
 if $h(\lambda)>0$ (respectively, $h(\lambda)<0$) for some $\lambda$, then necessarily $\lambda<\lambda^*$ ($\lambda>\lambda^*$).
The lower bound is initialized as $\lambda_{\min}=0$.
Since $P(\eta^*)\leq \min\{\beta(s),\frac{1}{U}\}$  (Theorem \ref{thm1}), hence $G(\eta^*)<g(P(\eta^*))\leq g(\min\{\beta(s),\frac{1}{U}\})$,
from (\ref{Lambda}) $\lambda_{\max}$ is initialized as
%%{-3mm}
\begin{align}\label{lambdamax}
\!\!\!\lambda^*{=}\Lambda(\eta^{(\lambda^*)}){<}
\min\left\{\!\frac{U{-}1}{1{-}\beta(s)}g(\beta(s)),Ug\left(\!{\frac{1}{U}}\!\right)\!\right\}{=}\lambda_{\max}.\!\!
\end{align}%%{-10mm}\\
The bisection algorithm, coupled with Algorithm \ref{PIA} to determine $\eta^{(\lambda)}$, is given as follows.
%{-5mm}
\begin{algo}[Optimal $\lambda^*$ via bisection algorithm]\label{algoSNE}
$ $ \\
1) {\bf Initialization}: accuracy of bisection method $\epsilon_{BM}>0$ and of the PIA $\epsilon_{PIA}>0$, $\lambda_{\min}^{[0]}=0$ and $\lambda_{\max}^{[0]}$ as in (\ref{lambdamax}); $i=0$;

\noindent 2)
 {\bf Policy optimization}: Let $\lambda^{[i]}=(\lambda_{\min}^{[i]}+\lambda_{\max}^{[i]})/2$;
determine $\eta^{(\lambda^{[i]})}$ using the PIA;

\noindent 3) {\bf Evaluation}: Evaluate  $h(\lambda^{[i]})=~\Lambda(\eta^{(\lambda^{[i]})})-\lambda^{[i]}$ under the policy $\eta^{(\lambda^{[i]})}$;

\noindent 4) {\bf Update of lower \& upper bounds}:
\begin{itemize}
 \item  If $h(\lambda^{[i]})\geq 0$, 
update the lower and upper  bounds as
%%{-3mm}
\begin{align}
\left\{
\begin{array}{l}
\lambda_{\min}^{[i+1]}=\lambda^{[i]},
\\
\lambda_{\max}^{[i+1]}=\min\{\lambda_{\max}^{[i]},\Lambda(\eta^{(\lambda^{[i]})})\};
\end{array}\right.
\end{align}
 \item If $h(\lambda^{[i]})<0$, 
update the  lower and upper bounds as
%%{-3mm}
\begin{align}
\left\{\begin{array}{l}
\lambda_{\min}^{[i+1]}=\max\{\lambda_{\min}^{[i]},\Lambda(\eta^{(\lambda^{[i]})})\},
\\
\lambda_{\max}^{[i+1]}=\lambda^{[i]};
\end{array}\right.
\end{align}%%{-10mm}\\
\end{itemize}

\noindent 5)
 {\bf Termination test}:
If $\lambda_{\max}^{[i+1]}-\lambda_{\min}^{[i+1]}<\epsilon_{BM}$, return the optimal policy $\eta^*=\eta^{(\lambda^{[i]})}$;
 otherwise, update the counter $i:=i+1$ and repeat from step 2).
\end{algo}
%{-5mm}
\begin{remark}
Note that step 4) updates both $\lambda_{\min}$ and $\lambda_{\max}$.
This is because $h(\lambda)$ is a decreasing function of
$\lambda$ and $\Lambda(\eta^{(\lambda)})$ is a non-increasing function of $\lambda$ (Prop.~\ref{Lembdadecreasing} in Appendix~D). It follows that,
 if $h(\lambda){>}0$, then $\lambda{<}\lambda^*{=}\Lambda(\eta^{(\lambda^*)}){\leq}\Lambda(\eta^{(\lambda)})$,
 so that the upper bound can be updated as $\lambda_{\max}^{[i+1]}=\min\{\lambda_{\max}^{[i]},\Lambda(\eta^{(\lambda^{[i]})})\}$.
Similarly, if  $h(\lambda){<}0$, then $\lambda{>}\lambda^*{=}\Lambda(\eta^{(\lambda^*)}){\geq}\Lambda(\eta^{(\lambda)})$,
 so that the lower bound can be updated as $\lambda_{\min}^{[i+1]}=\max\{\lambda_{\min}^{[i]},\Lambda(\eta^{(\lambda^{[i]})})\}$.
\end{remark}

\begin{remark}
\label{compleSNE}
The maximum  number of iterations of Algorithm  \ref{algoSNE} in order to achieve
the desired accuracy $\epsilon_{BM}$ is
$\log_2(\lambda_{\max}^{[0]}/\epsilon_{BM})$. By combining it with Algorithm \ref{PIA}, the overall complexity
to compute the SNE
 thus scales as
$O(\log_2(\lambda_{\max}^{[0]}/\epsilon_{BM})N_{PIA}(\epsilon_{PIA})e_{\max}(1-\log_2(\epsilon_u)))$.
\end{remark}
%{-7mm}
\section{Heuristic policy}\label{heur}
%{-1mm}
%%{-3mm}
In this section, we evaluate the following heuristic policy, 
inspired by the upper bound (\ref{ub2}), 
%%{-3mm}
\begin{align}
\label{heurscheme}
\eta^{(H)}(e,s)&=
\underset{x\in[0,\beta(s)]}{\arg\max} Ug(x)(1-x)^{U-1}
\nonumber\\&
=
\min\{x^*,\beta(s)\},\ \forall e>0,
\end{align}%%{-10mm}\\
where $x^*\in(0,1/U)$ is defined in Prop. \ref{upbound}.
From (\ref{GP}), we obtain
%%{-3mm}
\begin{align}
&G(\eta^{(H)}|s)=(1-\pi_{\eta^{(H)}}(0|s))g(\min\{x^*,\beta(s)\}),\\ & 
 P(\eta^{(H)}|s)=(1-\pi_{\eta^{(H)}}(0|s))\min\{x^*,\beta(s)\},
\end{align}%%{-10mm}\\
where $\pi_{\eta^{(H)}}(0)$ can be derived using Prop. \ref{SSD} as
%%{-3mm}
\begin{align*}
&\pi_{\eta^{(H)}}(0|s)=
\left\{\begin{array}{ll}
\frac{1}{
1+\beta(s)\frac{\left[\frac{\beta(s)(1-x^*)}{(1-\beta(s))x^*}\right]^{e_{\max}}-1}{\beta(s)-x^*}
}, & x^*<\beta(s),\\
\frac{1-\beta(s)}{e_{\max}+1-\beta(s)}, & x^*\geq\beta(s).
\end{array}\right.
\end{align*}%%{-10mm}\\
Note that, by letting $e_{\max}\to\infty$, we obtain $\pi_{\eta^{(H)}}(0|s)\to 0$, hence 
$G(\eta^{(H)}|s)\to g(\min\{x^*,\beta(s)\})$ and
$P(\eta^{(H)}|s)\to\min\{x^*,\beta(s)\}$.
Therefore, in the limit of asymptotically large battery capacity
we obtain $\lim_{e_{\max}\to\infty}R_{\bs\eta^{(H)}}=R^{(up)}$,
\emph{i.e.}, the upper bound $R^{(up)}$ given by Prop. \ref{upbound} is asymptotically achieved by the heuristic policy.
It follows that the heuristic policy is asymptotically optimal for large battery capacity.
Note that the heuristic policy is similar to the policies developed in \cite{ZlatanovSH13} for general EH networks and
infinite battery capacity under the assumption of a centralized controller. Therein, it has been shown that
the asymptotically optimal power allocation for the EH network is identical to the optimal power allocation for an equivalent non-EH network whose nodes have infinite energy available and employ the same average transmit power as the corresponding nodes in the EH network.
Similarly, the heuristic policy (\ref{heurscheme}) maximizes the upper bound on the network utility in scenario $s$, $Ug(x)(1-x)^{U-1}$, achievable for asymptotically large battery capacity $e_{\max}\to\infty$ under the policy
$\eta(e,s)=x,\ \forall e$,
and replacing the EH operation of the EHSs with an average
energy expenditure constraint equal to the average EH rate, so that $x\in[0,\beta(s)]$ in (\ref{heurscheme}).

We identify the following operational scenarios.
%{-5mm}
\subsection{{Energy-limited scenario} $x^*\geq \beta(s)$ (hence $1/U\geq \beta(s)$)}
\label{enelim}
In this case, the expected amount of energy harvested by the EHSs is small compared to the optimal
energy expenditure $x^*$, which would optimize the overall utility collected at the FC $Ug(x^*)(1-x^*)^{U-1}$ (under the ideal case of
infinite energy availability); it follows that energy is scarce and the shared multiaccess channel is under-utilized.
If this holds for all scenarios $s\in\mathcal S$,
we use the bounds
\begin{align}
&\pi_{\eta^{(H)}}(0|s)=\frac{1-\beta(s)}{e_{\max}+1-\beta(s)}<\frac{1}{e_{\max}}, 
\\&
P(\eta^{(H)}|s)<\min\{x^*,\beta(s)\}=\beta(s),
\end{align}
to obtain
%%{-3mm}
\begin{align}
\left(1-\frac{1}{e_{\max}}\right)
R^{(up)}<R_{\bs\eta^{(H)}}<R^{(up)},
\end{align}%%{-10mm}\\
so that the upper bound is approached with rate inversely proportional to the battery capacity.
%{-5mm}
\subsection{{Network-limited scenario} $x^*<\beta(s)$}
\label{netlim}
%%{-1mm}
In this case, the expected amount of energy received by the EHSs is large compared 
 to the optimal energy expenditure $x^*$, hence energy is abundant and the shared
 multiaccess channel is the limiting factor, resulting in under-utilization of the harvested energy.
 If this holds for all scenarios $s\in\mathcal S$,
 letting $\alpha(s)\triangleq \ln\left(\frac{\beta(s)(1-x^*)}{(1-\beta(s))x^*}\right)>0$, 
 we use the bounds
 \begin{align}
& \pi_{\eta^{(H)}}(0|s)
{=}
 \frac{1}{
1{+}\beta(s)\frac{\left[\frac{\beta(s)(1-x^*)}{(1-\beta(s))x^*}\right]^{e_{\max}}-1}{\beta(s)-x^*}
}
{=}
 \frac{1}{1{+}\beta(s)\frac{e^{\alpha(s)e_{\max}}-1}{\beta(s)-x^*}}
\nonumber\\&
=
 \frac{1-x^*/\beta(s)}{e^{\alpha(s)e_{\max}}-x^*/\beta(s)}
<
\max_{t\in[0,1]} \frac{1-t}{e^{\alpha(s)e_{\max}}-t}
=
e^{-\alpha(s)e_{\max}},
\nonumber
\\&
P(\eta^{(H)}|s)<\min\{x^*,\beta(s)\}=x^*,
 \end{align}
to obtain
%%{-3mm}
\begin{align}
\left(1-\sum_s\pi_S(s)e^{-\alpha(s) e_{\max}}\right)R^{(up)}<
R_{\bs\eta^{(H)}}<R^{(up)},
\end{align}%%{-10mm}\\
so that the upper bound is approached exponentially fast.

In the more general case where
$x^*\geq \beta(s)$ for $s\in\mathcal S_{EL}\subseteq\mathcal S$ and
$x^*<\beta(s)$ for $s\in\mathcal S_{NL}\equiv\mathcal S\setminus\mathcal S_{EL}$,
where $\mathcal S_{EL}$ and $\mathcal S_{NL}$ are the sets of scenario states
corresponding to energy-limited or network-limited operation, respectively, we thus obtain
\begin{align}
\label{guarantee}
&\left(1-\sum_{s\in\mathcal S_{EL}}\pi_S(s)\frac{1}{e_{\max}}-\sum_{s\in\mathcal S_{NL}}\pi_S(s)e^{-\alpha(s) e_{\max}}\right)R^{(up)}
\nonumber\\&
<
R_{\bs\eta^{(H)}}<R^{(up)},
\end{align}%%{-10mm}\\
so that a trade-off between inversely decaying and exponentially decaying gap to the upper bound is achieved.

\begin{remark}
\label{prqactiact}
Due to the resource constraints of EHSs, 
an efficient implementation of the proposed scheme
can be achieved by solving
Algorithm \ref{algoSNE} at the FC in a centralized fashion.
The SNE $\eta(\cdot,s)$ for a given scenario $s$ returned by Algorithm \ref{algoSNE} can then
be broadcast
to the EHSs, which in turn store in a look-up table the corresponding values of the threshold $y_{\mr{th}}(e,s)$
as a function of the energy level $e$ and common scenario $s$.
The complexity of Algorithm \ref{algoSNE} is discussed in Remarks \ref{complepia} and \ref{compleSNE}.
Alternatively, a lower complexity is obtained by employing the heuristic scheme (\ref{heurscheme}).
In this case, the EHSs need to store only a threshold value $y_{\mr{th}}^{(H)}(s)$, independent of the current energy level $e$.
As shown 
analytically by (\ref{guarantee}) and numerically in Sec. \ref{Sec:numer}, such scheme achieves near-optimal performance
for sufficiently large battery capacity ($e_{\max}\simeq 10$).
Algorithm~\ref{algoSNE} or the heuristic scheme needs to be computed sporadically at the FC, whenever the environmental or network conditions change
(utility distribution $f_V(\cdot)$ and $f_{Y|V}(\cdot |v)$, average EH rate $\beta(s)$, battery storage capacity $e_{\max}$, \emph{e.g.}, due to battery degradation effects \cite{nicolo-exinfocom}, or network size $U$).
\end{remark}

%{-5mm}
\section{Numerical Results}\label{Sec:numer}
%{-1mm}
%%{-3mm}
%\begin{align}
%P(Y>y)=-y/d\int_{y/[1+d/2]}^{y/[1-d/2]}e^{-v}/v dv- [1- d/2]/de^{-y/[1- d/2]}+[1+d/2]/de^{-y/[1+d/2]}
%\end{align}
In this section, we present some numerical results for different settings, by varying 
 the battery capacity $e_{\max}\in\{1,10\}$ and the EH rate $\beta\in\{1/U,0.1,0.01\}$.
 We consider a static scenario setting $S_k=s,\forall k$, and set $\rho(y)=0,\ \forall y$, so that
packet losses are due only to collisions, but not to fading.
Note that in this case the choice of the fading model only affects the form of the expected utility $\bar V(y)$ in (\ref{condexp}).
We model $V_{u,k}$ as an exponential random variable with unit mean,
 with pdf $f_{V}(v)=e^{-v},v{\geq}0$, and assume that it is observed perfectly, 
% as $Y_{u,k}=V_{u,k}+W_{u,k}$, where $W_{u,k}$ is an exponential random variable with mean $m_W<1$.
% From this model, we obtain
%$-\ln(x)=(U-1)x/(1-Ux)$.
%x*\simeq 0.34 for U=2
%x*\simeq 0.074 for U=10
%%{-3mm}\begin{align}
% \bar V(y) = \frac{ye^{y\left(\frac{1}{m_W}-1\right)}}{e^{y\left(\frac{1}{m_W}-1\right)}-1} -\frac{m_W}{1-m_W}
 %\end{align}%{-10mm}\\
 \emph{i.e.}, $Y_{u,k}=V_{u,k}$. 
From (\ref{eta}) and (\ref{gx}), we obtain $g(x)=x(1-\ln x)$.
 We remark that other choices of the distribution of $V_{u,k}$ and of $Y_{u,k}$
 only affect the shape of the function $g(x)$, but do not yield any additional
 insights with respect to the specific case considered. 
We evaluate the performance of the following policies: the SNE derived via Algorithm \ref{algoSNE};
the heuristic policy (HEUR) analyzed in Sec.~\ref{heur};
the \emph{energy-balanced policy} (EBP), where each EHS, in each slot, transmits with probability $\beta$ equal to the average EH rate;
and the \emph{network-balanced policy} (NBP), where each EHS, in each slot, transmits with probability $1/U$ so as to maximize the expected throughput.
For $e_{\max}=1$, we plot also the \emph{globally optimal policy} (GOP) solution of (\ref{origopt}). 
In fact, in this case, the policy is defined by $\eta(1)\in [0,1]$, and can be easily optimized via an exhaustive search,
thus serving as an upper bound for the case $e_{\max}=1$.
On the other hand, for $e_{\max}=10$, we plot the upper bound (UB) given by Prop. \ref{upbound}.
Note that, in general, both SNE and HEUR are sub-optimal with respect to (\ref{origopt}), hence GOP and UB provide 
upper bounds for the cases $e_{\max}=1$ and $e_{\max}=10$, respectively, against which the performance of SNE can be evaluated.

\begin{figure}[t]
\centering
\includegraphics[width =\linewidth,trim = 18mm 4mm 2mm 10mm,clip=true]{./emax1}
%{-10mm}
\caption{Network utility (\ref{eq:averageGainTH}) for different EH rates $\beta\in\{1/U,0.1,0.01\}$,
as a function of the network size $U$. $e_{\max}=1$.}
%{-10mm}
\label{fig:R2}
\end{figure}

In Figs. \ref{fig:R2} and \ref{fig:R3}, we plot the network utility (\ref{eq:averageGainTH}),
as a function of the number of EHSs $U$, for the cases $e_{\max}=1$ and $e_{\max}=10$, respectively.
 We note that, when $e_{\max}=1$, SNE attains the upper bound given by GOP, hence it is optimal for this case.
 On the other hand, the performance degradation of SNE with respect to UB for the case 
 $e_{\max}=10$ is within 3\%, hence SNE is near-optimal for this case (note that UB is approached only asymptotically for $e_{\max}\to\infty$).
 This suggests that the local optimum achieved by the SNE closely approaches the globally optimal policy.
 A performance degradation is incurred by HEUR with respect to SNE. In particular, HEUR performs within
 18\% of GOP for $e_{\max}=1$, and within 9\% of UB  for $e_{\max}=10$, hence it provides a good heuristic scheme, which achieves both satisfactory performance
and low complexity. Note also that HEUR does not require knowledge of the battery energy level, hence it is suitable for those scenarios where the state of the battery is
unknown or
imperfectly observed \cite{MichelusiFusion}.
By comparing EBP and NBP, we notice that the former performs better than the latter in the energy-limited scenario $\beta\leq 1/U$,
and worse in the network-limited scenario $\beta>1/U$.\footnote{Note that the scenario $x^*<\beta\leq1/U$ is indeed \emph{network-limited} (see Sec. \ref{enelim}).
However, both EBP and NBP, unlike HEUR, transmit with probability larger than $x^*$ in this case, so that their operation is energy-limited (see Sec. \ref{netlim}).
This case is thus included in the \emph{energy-limited scenario} for convenience.}
In fact, EBP leads to over-utilization of the channel in the network-limited scenario $\beta>1/U$, hence to frequent collisions and poor performance.
On the other hand, NBP matches the transmissions to the channel, without taking into account the energy constraint induced by the EH mechanism,
hence it performs poorly in the energy-limited scenario $\beta\leq 1/U$, where energy is scarce.

Interestingly, in all cases considered,
 the network utility under the SNE increases with the number of EHSs $U$. 
%This is in contrast with, \emph{e.g.}, slotted Aloha (without EH capability),
%where the network throughput as a function of the number of users is $T_{\mathrm{ALOHA}}=(1-1/U)^{U-1}$
%and decreases with the number of nodes. This behavior is related to the concavity of $g(x)$.
This behavior is due to the strict concavity of $g(x)$, such that
a diminishing return is associated to a larger transmission probability $x$.
Therefore, the smaller the number of EHSs $U$, the more the transmission opportunities
for each EHS, but the smaller the marginal gain, so that the network utility decreases. 
 \begin{figure}[t]
\centering
\includegraphics[width =\linewidth,trim = 18mm 4mm 2mm 10mm,clip=true]{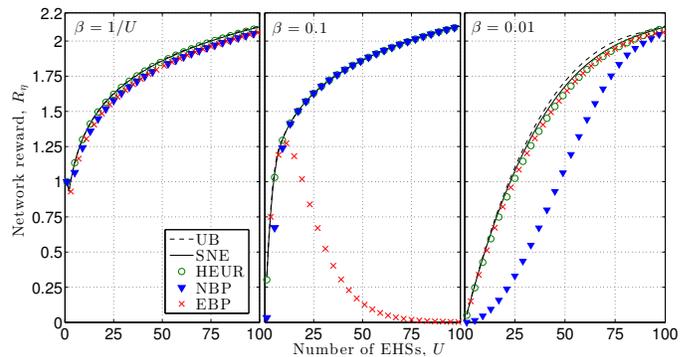}
%{-10mm}
\caption{Network utility (\ref{eq:averageGainTH}) for different EH rates $\beta\in\{1/U,0.1,0.01\}$,
as a function of the network size $U$. $e_{\max}=10$.}
%{-10mm}
\label{fig:R3}
\end{figure}
%Clearly, the scenario with $e_{\max}=10$ outperforms the scenario with $e_{\max}=1$.
For $U<10$, SNE achieves the best performance when $\beta=1/U$,
since more energy is available to the EHSs than for $\beta\in\{0.1,0.01\}$.
In contrast, for $U>10$ and $e_{\max}=10$, SNE performs best in both cases
 $\beta=1/U$ and $\beta=0.1$ corresponding to the network-limited scenario, despite a larger energy availability in the latter case.
This is due to the fact that, as proved in Theorem \ref{thm1},
under the SNE, $P(\eta^*)\leq\min\{\beta,1/U\}=1/U$,
hence the performance bottleneck is due to the number of EHSs in the network, rather than
to the energy availability. In the case $\beta=0.1$,
on average, $\beta-P(\eta^*)\geq\beta-1/U$ energy quanta are lost in each slot due to overflow, in order to limit the collisions.
In contrast, when $\beta=0.01$ (energy-limited scenario), we have $P(\eta^*)\leq\min\{\beta,1/U\}=\beta$
for all values of $U$ considered, hence the performance bottleneck is energy availability.
A different trend is observed when $e_{\max}=1$. In this case, for $U>10$,
SNE performs significantly better in
scenario $\beta=0.1$  than in scenario $\beta=1/U$. This
follows from the fact that, when $e_{\max}=1$, whenever an EHS transmits,
its battery becomes empty, hence it enters a recharge phase, with expected duration $1/\beta=U$,
during which it is inactive.
In contrast, the recharge phase in scenario $\beta=0.1$ is faster and equals $10<U$ slots, 
so that each EHS becomes more quickly available for data transmission.
This phenomenon is negligible when $e_{\max}=10$, since the larger battery capacity 
minimizes the risk of energy depletion, resulting in the impossibility to transmit for the EHSs.

\begin{figure}[t]
\centering
\includegraphics[width = \linewidth,trim = 0mm 0mm 0mm 5mm,clip=true]{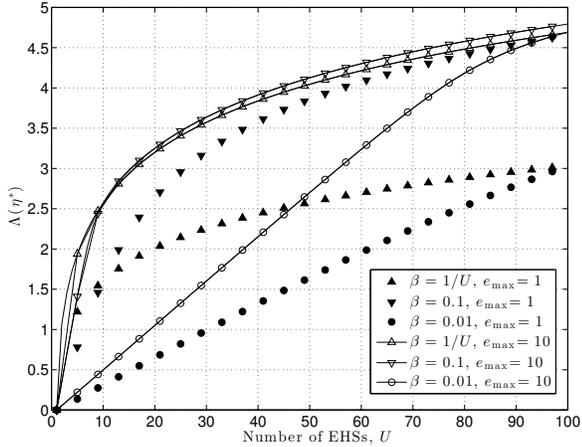}
%{-10mm}
\caption{$\Lambda(\eta^*)$ under the SNE $\eta^*$ for different 
values of the battery capacity
 $e_{\max}\in\{1,10\}$
and for different EH rates $\beta\in\{1/U,0.1,0.01\}$,
as a function of the network size $U$.}
%{-10mm}
\label{fig:convergence}
\end{figure}

In Fig. \ref{fig:convergence}, we plot the optimal Lagrange multiplier $\lambda^*=\Lambda(\eta^*)$ versus the number of EHSs $U$.
We notice that the larger the number of EHSs, the larger $\lambda^*$. In fact, for a given policy $\eta$,
the larger $U$, the more frequent the collisions. A larger $\lambda^*$ thus balances
this phenomenon by penalizing the average transmission probability $P(\eta^*)$  in (\ref{NE}),
and in turn forces each EHS to transmit more sparingly, so as to accommodate the transmissions
of more nodes in the network.
Moreover, the larger the EH rate $\beta$, the larger $\lambda^*$. In fact, the larger $\beta$, the
larger the energy availability for each EHS, which could, in principle, transmit more frequently and, at the same time, cause
more collisions. The effect of a larger $\beta$ (having more transmissions, hence more collisions in the system)
is thus balanced by a larger $\lambda^*$, which penalizes transmissions.

%{-5mm}
\section{Conclusions}\label{Sec:concl}
%{-1mm}
In this paper, we have considered a WSN of EHSs which randomly access
a collision channel, to transmit packets of random utility to a common FC,
based on local information about the current energy availability and an estimate of the utility of the current data packet.
We have studied the problem of designing decentralized random access policies so as to maximize the overall network utility, defined as the
average long-term aggregate network utility of the data packets
successfully reported to the FC.
We have formulated the optimization problem as a game, where each EHS unilaterally maximizes
the network utility, and characterized the symmetric Nash equilibrium (SNE).
We have proved the existence and uniqueness of the SNE, showing that it is a local optimum
of the original optimization problem, and we have derived an algorithm to compute it, and 
a heuristic policy which is asymptotically optimal for large battery capacity.
 Finally, we have presented some numerical results, showing that the proposed SNE typically achieves near-optimal performance with respect to the globally optimal policy,
 at a fraction of the complexity, and we have identified two operational regimes, the \emph{energy-limited} and \emph{network-limited} scenarios,
  where the bottleneck of the system
is given by the energy and the shared multiaccess scheme, respectively.
%{-5mm}
\section*{Appendix~A: Proof of Prop.~\ref{lemma:thresStruct}}
%{-1mm}
%\label{proofLemma:thresStruct}
Let $\mu_u$ be a stationary randomized policy used by EHS $u$, and $\mathcal R_{\mu_u}$ be the set 
of stationary randomized 
policies which induce, in each energy level $e_u\in\mathcal E$ and scenario $s\in\mathcal S$, the same transmission probability as $\mu_u$, defined with respect to the 
utility observation, \emph{i.e.},  $\forall e_u\in\mathcal E,\forall s\in\mathcal S,\ \forall\tilde\mu_u\in\mathcal R_{\mu_u}$,
%%{-3mm}
\begin{align}
\mathbb E\left[\tilde \mu_u(e_u,Y_u,s)\right]=\mathbb E\left[\mu_u(e_u,Y_u,s)\right].
\end{align}%%{-10mm}\\
Then, since $\mu_u\in\mathcal R_{\mu_u}$, from (\ref{eq:averageGain}) the network utility satisfies the inequality
%%{-3mm}
\begin{align}
\label{proofofthstruc}
R_{\mu_u,\bs\mu_{-u}}(\mathbf e_0,s_0)
\leq  \max_{\tilde\mu_u\in\mathcal R_{\mu_u}} R_{\tilde\mu_u,\bs\mu_{-u}}(\mathbf e_0,s_0).
\end{align}%%{-10mm}\\

We now show that the maximizer of the right hand side of (\ref{proofofthstruc})
has a threshold structure with respect to the utility observation.
 From (\ref{Gu}) and (\ref{eq:averageGain}), for any $\tilde \mu_u\in\mathcal R_{\mu_u}$, we have
\begin{align*}
&R_{\tilde\mu_u,\bs\mu_{-u}}(\mathbf e_0,s_0)
\nonumber\\&
=
\underset{K \rightarrow \infty}{\lim\inf}\frac{1}{K}
\sum_{k=0}^{K-1}\!
\sum_{\mathbf e\in\mathcal E^U,s\in\mathcal S}\!\!\!\!\!\!\mathbb P(\mathbf E_{k}{=}\mathbf e,S_k{=}s|\mathbf e_0,s_0,\tilde\mu_u,\bs\mu_{-u})
\nonumber\\&\times
\sum_{n}\mbb{E}\left[
\chi(\gamma_{n,k}>\gamma_{\mathrm{th}}(Y_{n,k}))
Q_{n,k}V_{n,k}
\vphantom{\left.\left.\times\prod_{i\neq n}(1-Q_{i,k})\right|\mathbf E_k=\mathbf e,S_k=s,\tilde\mu_u,\bs\mu_{-u}\right].}
\right.
\nonumber\\&
\left.\left.
\times
\prod_{i\neq n}(1-Q_{i,k})\right|\mathbf E_k=\mathbf e,S_k=s,\tilde\mu_u,\bs\mu_{-u}\right].
\end{align*}
Now, using the fact that $Q_{n,k}{=}1$ with probability $\mu_n(E_{n,k},Y_{n,k},S_k)$, independent of
the energy levels and utility observations of the other EHSs, and $(V_{n,k},Y_{n,k},\gamma_{n,k})$ is independent of $\mathbf E_k$ and $S_k$,
and i.i.d. across EHSs,
 we obtain
%%{-3mm}
\begin{align}
\nonumber
&\mbb{E}\!\left[\chi(\gamma_{n,k}>\gamma_{\mathrm{th}}(Y_{n,k}))
Q_{n,k}V_{n,k}
\vphantom{\times\prod_{i\neq n}(1-Q_{i,k})}
\right.
\nonumber\\&
\qquad\left.\left.
\times
\prod_{i\neq n}(1-Q_{i,k})\right|\mathbf E_k=\mathbf e,S_k=s,\tilde\mu_u,\bs\mu_{-u}\right]
\nonumber\\&
=
\mbb{E}\left[\mu(e_n,Y_{n,k},s)\chi(\gamma_{n,k}>\gamma_{\mathrm{th}}(Y_{n,k}))V_{n,k}\right]
\prod_{i\neq n}(1-\eta_i(e_i,s))
\nonumber\\&
=
\mbb{E}\left[\mu(e_n,Y_{n,k})\bar V(Y_{n,k})\right]\prod_{i\neq n}(1-\eta_i(e_i,s)),
\end{align}%%{-10mm}\\
where we have defined the transmission probability of EHS $i$ as $\eta_i(e_i,s)\triangleq\mathbb E[\mu_i(e_i,Y_{i,k},s)]$,
and we have used (\ref{condexp}).
Therefore, denoting the steady state probability of $(\mathbf E_k,S_k)=(\mathbf e,s)$ as
$\pi_{\tilde\mu}(\mathbf e,s;\mathbf e_0,s_0)=\underset{K \rightarrow \infty}{\lim\inf}\frac{1}{K}\sum_{k=0}^{K-1}\mathbb P(\mathbf E_{k}=\mathbf e,S_k=s|\mathbf e_0,s_0,\tilde\mu)$,
we obtain
%%{-3mm}
\begin{align}\label{bbb2}
&R_{\tilde\mu_u,\bs\mu_{-u}}(\mathbf e_0,s_0)=
\sum_{\mathbf e\in\mathcal E^U,s\in\mathcal S}
\pi_{\tilde\mu_u,\boldsymbol{\mu}_{-u}}(\mathbf e,s;\mathbf e_0,s_0)
\\&
\nonumber\times
\sum_{n=1}^U\mbb{E}\left[\mu(e_n,Y_{n,k},s)\bar V(Y_{n,k})\right]\prod_{i\neq n}(1-\eta_i(e_i,s)).
\end{align}%%{-10mm}\\
Using the fact that $Y_{u,k}$ is independent of $\{\mathbf E_{k},i=0,\dots,k\}$, $\forall k$,
and i.i.d. across EHSs, it can be proved by induction on $k$ that
the probability $\mathbb P(\mathbf E_k=e,S_k=s|\mathbf e_0,s_0,\tilde\mu_u,\bs\mu_{-u})$ depends on $\tilde\mu_u$ only through the expected transmission probability
$\eta_u(e_u,s)$,
which is common to all $\tilde\mu_u\in\mathcal R_{\mu_u}$, and therefore
the steady state distribution of $(\mathbf E_k,S_k)$
is the same for all $\tilde\mu_u\in\mathcal R_{\mu_u}$, \emph{i.e.},
 $\pi_{\tilde\mu_u,\bs\mu_{-u}}(\mathbf e,s;\mathbf e_0,s_0)=\pi_{\mu_u,\bs\mu_{-u}}(\mathbf e,s;\mathbf e_0,s_0)$.
Therefore, from (\ref{proofofthstruc}) and (\ref{bbb2}) we obtain
%%{-3mm}
\begin{align}
\label{above}\nn
&R_{\bs\mu}(\mathbf e_0,s_0)
\leq\max_{\tilde\mu_u\in\mathcal R_{\mu_u}}R_{\tilde\mu_u,\bs\mu_{-u}}(\mathbf e_0,s_0)
\nonumber\\&
=
\sum_{e_u\in\mathcal E,s\in\mathcal S}
\left[\vphantom{\prod_{i\neq n}}\right.
\mbb{E}\left[\mu^*(e_u,Y_{u,k},s)\bar V(Y_{u,k})\right]
\nonumber\\&
\times\sum_{\mathbf e_{-u}\in\mathcal E^{U-1}}\pi_{\mu_u,\boldsymbol{\mu}_{-u}}(\mathbf e,s;\mathbf e_0,s_0)\prod_{i\neq u}(1-\eta_i(e_i,s))
\nonumber
\\&
+(1-\eta_u(e_u,s))
\!\!\!\!\!\!\!\sum_{\mathbf e_{-u}\in\mathcal E^{U-1}}\!\!\!\!\!\!\!\pi_{\mu_u,\boldsymbol{\mu}_{-u}}(\mathbf e,s;\mathbf e_0,s_0)
\nonumber\\&
\left.\quad
\times
\sum_{n\neq u}
\mbb{E}\left[\mu_n(e_n, Y_{n,k},s)V_{n,k}\right]\prod_{i\neq n,u}(1-\eta_i(e_i,s))
\right],
\end{align}%%{-10mm}\\
where, for each $e_u{\in}\mathcal E$, $s{\in}\mathcal S$, we have defined
 $\mu_u^*(e_u,\cdot,s)$  as
%%{-3mm}
\begin{align}
\label{conv}
 \mu_u^*(e_u,\cdot,s)=&
 \underset{\tilde\mu(e_u,\cdot,s):\mathbb R^+\mapsto[0,1]}{\arg\max} 
\mbb{E}\left[\tilde\mu_u(e_u, Y_{u,k},s)\bar V(Y_{u,k})\right],
\nonumber\\&
  \mathrm{s.t.\ }\mathbb E[\tilde\mu_u(e_u,Y_{u,k},s)]=\eta_u(e_u,s).
\end{align}%%{-10mm}\\
Since (\ref{conv}) is a convex optimization problem, 
we can solve it
using the Lagrangian method \cite{Boyd},
with Lagrangian multiplier $\bar V(y^*_{\mr{th},u}(e_u,s))$ for the constraint $\mathbb E[\tilde\mu_u(e_u,Y_{u,k},s)]=\eta_u(e_u,s)$,  \emph{i.e.},
\begin{align*}
\!\!\mu_u^*\!(e_u{,}\!\cdot\!{,}s)\!\!=\!
\!\!\!\!\!\!\!\!\!\!\!\underset{\tilde\mu(e_u,\cdot,s):\mathbb R^+\mapsto[0,1]}{\arg\!\max}\!\!\!\!\!\!\!\!\!\!
 \mbb{E}\!\left[\tilde\mu_u(e_u{,}Y_{u,k}{,}s)(\bar V(Y_{u,k}){-}\!\bar V(y^*_{\mr{th},u}(e_u{,}s)\!)\!)\!\right]\!\!.
\end{align*}
The solution to this optimization problem is given by
$ \mu_u^*(e_u,y_{u,k},s){=}\chi(\bar V(y_{u,k}){\geq}\bar V(y^*_{\mr{th},u}(e_u,s)))$,
and, since $\bar V(y)$ is a strictly increasing and continuous function of $y$ (see Assumption~\ref{assu}),
we obtain the threshold policy (with respect to the utility observation)
 $\mu_u^*(e_u,y_{u,k},s){=}\chi(y_{u,k}{\geq}y^*_{\mr{th},u}(e_u,s))$,
where $y^*_{\mr{th},u}(e_u,s)$ is such that
$\mathbb E[\mu_u^*(e_u,Y_{u,k},s)]=\mathbb P(Y_{u,k}\geq y^*_{\mr{th},u}(e_u,s))=\eta_u(e_u,s)$.
The threshold structure in (\ref{muthresstruc}) is thus proved.
\hfill$\blacksquare$

%{-5mm}
\section*{Appendix~B: Proof of Theorem \ref{propeta}}
%{-1mm}
%\label{proofof:propeta}
Due to space constraints, we do not provide an explicit proof of P1 and P2.
Due to the concavity of $g(x)$, $x_\lambda^*=\arg\max_{x\in[0,1]}z_\lambda(x)$
 is the unique solution of $g^\prime(x){=}\lambda$.
Moreover, $G(\eta){<}g(P(\eta))$ from Jensen's inequality \cite{Boyd}, hence
$Z_\lambda(\eta){<}z_\lambda(P(\eta)){\leq}z_\lambda(\min\{x_\lambda^*,\beta\})$, where we have used the fact that the EH operation
enforces the constraint $P(\eta)\leq\beta$.

We first prove that $\eta^{(\lambda)}(e)$ must be a strictly increasing function of the energy level $e$ (P3.A), by using a technique we term
\emph{transmission transfer}.\footnote{As we will see, despite the term \emph{transmission transfer},
 this technique does not necessarily conserve the amount of transmissions transferred among states.
} In the second part, we prove that $\eta^{(\lambda)}(e)\in (\eta_L(\lambda),\eta_H(\lambda)),\ \forall e>0$ (P3.B).
Let $\eta_0\in\mathcal U$ be a generic transmission policy which violates P3.A,
\emph{i.e.}, there exists $\epsilon\in\{1,\dots, e_{\mr{max}}-1\}$ such that
%{-3mm}
\begin{align}
\eta_0(\epsilon-1)<\eta_0(\epsilon)\geq \eta_0(\epsilon+1).
\end{align}%%{-10mm}\\
Note that, since $\eta_0\in\mathcal U$, then $\eta_0(0)=0$, $\eta_0(1)>0$. Then, such $\epsilon$ can be found as the smallest energy level $e\geq 1$
such that $\eta_0(e+1)\leq \eta_0(e)$. If such $\epsilon$ is not found, then necessarily $\eta_0(0)<\eta_0(1)<\dots<\eta_0(e)<\dots<\eta_0(e_{\max})$,
hence P3.A is not violated.
We now define a new transmission policy, $\eta_{\delta}$, parameterized by $\delta>0$, as:
%%{-3mm}
\begin{align}\label{etamod}
\!\!\eta_\delta(e){=}
\left\{
\begin{array}{ll}
 \eta_0(e), & e\in\mathcal E\setminus\{\epsilon-1,\epsilon,\epsilon+1\}\\
 \eta_0(\epsilon-1)+h(\delta), & e=\epsilon-1\\
 \eta_0(\epsilon)-\delta, & e=\epsilon\\
 \eta_0(\epsilon+1)+r(\delta), & e=\epsilon+1,
\end{array}
\right.\!\!
\end{align}%%{-10mm}\\
where $h(\delta)>0$ and $r(\delta)>0$ are continuous functions such that $h(0)=r(0)=0$.
Intuitively, policy $\eta_\delta$ is constructed from the original policy
$\eta_0$ by \emph{transferring} some transmissions from energy state $\epsilon$ to states $(\epsilon+1)$
and $(\epsilon-1)$, whereas transmissions in all other states are left unchanged.
Therefore, the new policy $\eta_\delta$
violates less P3.A, by diminishing the gap $\eta(\epsilon)-\eta(\epsilon+1)$ by a quantity $\delta+r(\delta)>0$.
 The functions  $r(\delta)>0$ and $h(\delta)\geq 0$ are uniquely defined as
\begin{enumerate}
 \item if $\epsilon=1$, let $h(\delta)=0$ and let $r(\delta)$ be such that
$\pi_{\eta_\delta}(e_{\mr{max}})=\pi_{\eta_0}(e_{\mr{max}}),\forall \delta<\kappa$,
 \item if $\epsilon>1$, let $h(\delta)$ and $r(\delta)$ be such that
%%{-3mm}
\begin{align}\label{sdfsdfsf}
\left\{ \begin{array}{l}
  \pi_{\eta_\delta}(e_{\mr{max}})=\pi_{\eta_0}(e_{\mr{max}})\\
  \pi_{\eta_\delta}(0)=\pi_{\eta_0}(0)
 \end{array}\right.,\forall \delta<\kappa,
\end{align}%%{-12mm}\\
\end{enumerate}
where $0<\kappa\ll 1$ is an arbitrarily small constant, which guarantees an admissible 
policy $\eta_\delta\in\mathcal U$,
\emph{i.e.}, 
 $\eta_\delta(\epsilon-1)\in (0,1)$,
 $\eta_\delta(\epsilon)\in (0,1)$,
 $\eta_\delta(\epsilon+1)\in (0,1)$.

From Prop. \ref{SSD}, we have that $\xi_{\eta_\delta}(e)=\xi_{\eta_0}(e),\ \forall e>\epsilon+1$,
and therefore, from (\ref{pie}),
%%{-3mm}
\begin{align}
&\pi_{\eta_\delta}(e)=
\frac{1}{\prod_{f=e}^{e_{\max}-1}\xi_{\eta_\delta}(f)}
\pi_{\eta_\delta}(e_{\max})
\nonumber\\&
=
\frac{1}{\prod_{f=e}^{e_{\max}-1}\xi_{\eta_0}(f)}
\pi_{\eta_0}(e_{\max})=\pi_{\eta_0}(e)
,\ e>\epsilon+1.
\end{align}%%{-10mm}\\
Similarly, $\xi_{\eta_\delta}(e)=\xi_{\eta_0}(e),\ \forall e<\epsilon-1$,
and therefore, from (\ref{pie}) and (\ref{sdfsdfsf}),
%%{-3mm}
\begin{align}
&\pi_{\eta_\delta}(e)=
\left[\prod_{f=1}^{e-1}\xi_{\eta_\delta}(f)\right]\frac{\beta\pi_{\eta_\delta}(0)}{(1-\beta){\eta_\delta}(e)}
\nonumber\\&
=
\left[\prod_{f=1}^{e-1}\xi_{\eta_0}(f)\right]\frac{\beta\pi_{\eta_0}(0)}{(1-\beta){\eta_0}(e)}
=\pi_{\eta_0}(e)
,\ e<\epsilon-1.
\end{align}%%{-10mm}\\
It follows that $\pi_{\eta_\delta}(e)=\pi_{\eta_0}(e)$ for $e\in\mathcal E\setminus\{\epsilon-1,\epsilon,\epsilon+1\}$, \emph{i.e.},
the transmission transfer preserves the steady state distribution of visiting
the low and high energy states. %  \emph{low} energy states $\{0,\dots, \epsilon-2\}$  and the \emph{high} energy states $\{\epsilon +2,\dots,e_{\mr{max}}\}$. 
 %Using (\ref{GP}) and the fact that $\pi_{\eta_\delta}(e)=\pi_{\eta_0}(e)$ and $\eta_{\delta}(e)=\eta_{0}(e)$ for $e\in\mathcal E\setminus\{\epsilon-1,\epsilon,\epsilon+1\}$,
 The derivative of $Z_\lambda(\eta_\delta)$ with respect to $\delta$, computed in $\delta{\to}0^+$,
is then given by
%%{-3mm}
\begin{align}
\label{derivG}
&\left[\frac{\mathrm dZ_\lambda(\eta_\delta)}{\mathrm d\delta}\right]_{\delta\to 0^+}
\!\!\!\!\!\!\!\!=
\left[\frac{\mathrm d\pi_{\eta_{\delta}}(\epsilon-1)}{\mathrm d\delta}\right]_{\delta\to 0^+}\!\!\!\!\!\!\!\!z_\lambda(\eta_0(\epsilon-1))
\nonumber\\&
+\left[\frac{\mathrm d\pi_{\eta_{\delta}}(\epsilon)}{\mathrm d\delta}\right]_{\delta\to 0^+}\!\!\!\!\!\!\!\!z_\lambda(\eta_{0}(\epsilon))
%\\&
+\left[\frac{\mathrm d\pi_{\eta_{\delta}}(\epsilon+1)}{\mathrm d\delta}\right]_{\delta\to 0^+}\!\!\!\!\!\!\!\!z_\lambda(\eta_0(\epsilon+1))
\nn\\&
+\pi_{\eta_{0}}(\epsilon-1)z_\lambda^\prime(\eta_0(\epsilon-1))h^\prime(0)
-\pi_{\eta_{0}}(\epsilon)z_\lambda^\prime(\eta_{0}(\epsilon))
\nn\\&
+\pi_{\eta_{0}}(\epsilon+1)z_\lambda^\prime(\eta_0(\epsilon+1))r^\prime(0),%\nn
\end{align}%%{-10mm}\\
where $h^\prime(0)=\left[\frac{\mathrm dh(\delta)}{\mathrm d\delta}\right]_{\delta\to 0^+}$
and $r^\prime(0)=\left[\frac{\mathrm dr(\delta)}{\mathrm d\delta}\right]_{\delta\to 0^+}$.

We first consider the case $\epsilon{<}e_{\max}{-}1$. The case $\epsilon{=}e_{\max}{-}1$ requires special treatment and will be discussed in the second part of the proof.
For the case $\epsilon<e_{\max}-1$,
we compute the derivative terms above.
Note that, using (\ref{pie})
and the fact that  $\pi_{\eta_{\delta}}(\epsilon-2)=\pi_{\eta_{0}}(\epsilon-2)$ and $\pi_{\eta_{\delta}}(\epsilon+2)=\pi_{\eta_{0}}(\epsilon+2)$
(if $\epsilon=1$, we define $\pi_{\eta_{\delta}}(-1)=\pi_{\eta_{0}}(-1)=0$),
 we have
%%{-3mm}
\begin{align}\label{x1}
&\frac{\pi_{\eta_{0}}(\epsilon-2)}{\pi_{\eta_{0}}(\epsilon+2)}=
\frac{\pi_{\eta_{\delta}}(\epsilon-2)}{\pi_{\eta_{\delta}}(\epsilon+2)}
\\&\nn
{=}
\frac{(1{-}\beta)^4}{\beta^4}
\frac{\eta_\delta(\epsilon-1)}{1{-}\eta_\delta(\epsilon{-}2)}
\frac{\eta_\delta(\epsilon)}{1{-}\eta_\delta(\epsilon{-}1)}
\frac{\eta_\delta(\epsilon+1)}{1{-}\eta_\delta(\epsilon)}
\frac{\eta_\delta(\epsilon+2)}{1{-}\eta_\delta(\epsilon+1)}
.
\end{align}%%{-10mm}\\
By imposing the normalization $\sum_e\pi_{\eta_\delta}(e)=\sum_e\pi_{\eta_0}(e)=1$ and using (\ref{pie}), we have that
%%{-3mm}
\begin{align}\label{x2}
&\frac{\pi_{\eta_{0}}(\epsilon{-}1){+}\pi_{\eta_{0}}(\epsilon){+}\pi_{\eta_{0}}(\epsilon{+}1)}{\pi_{\eta_{0}}(\epsilon+2)}=
\frac{\pi_{\eta_{\delta}}(\epsilon{-}1){+}\pi_{\eta_{\delta}}(\epsilon){+}\pi_{\eta_{\delta}}(\epsilon{+}1)}{\pi_{\eta_{\delta}}(\epsilon+2)}
\nn
\\&
=
\frac{(1-\beta)\eta_\delta(\epsilon+2)}{\beta(1-\eta_\delta(\epsilon+1))}
+
\frac{(1-\beta)\eta_\delta(\epsilon+1)}{\beta(1-\eta_\delta(\epsilon))}
\frac{(1-\beta)\eta_\delta(\epsilon+2)}{\beta(1-\eta_\delta(\epsilon+1))}
\nn
\\&
{+}
\frac{(1{-}\beta)\eta_\delta(\epsilon)}{\beta(1{-}\eta_\delta(\epsilon-1))}
\frac{(1-\beta)\eta_\delta(\epsilon+1)}{\beta(1-\eta_\delta(\epsilon))}
\frac{(1-\beta)\eta_\delta(\epsilon+2)}{\beta(1-\eta_\delta(\epsilon+1))}.
\end{align}%%{-10mm}\\
By computing the derivatives of (\ref{x1}) and (\ref{x2}) with respect to $\delta\to 0^+$,
 setting them equal to $0$ (since they are constant with respect to $\delta$),
and solving with respect to $r^\prime(0)$ with $h^\prime(0)$, we obtain
%%{-3mm}
\begin{align}
&r^\prime(0)
=
\frac{(1-\beta)\eta_0(\epsilon+1)[1-\eta_0(\epsilon+1)]}{[1-\eta_0(\epsilon)][\beta+\eta_0(\epsilon)(1-2\beta)]},
%\\&
\\&
h^\prime(0)=\frac{\beta\eta_0(\epsilon-1)[1-\eta_0(\epsilon-1)]}{\eta_0(\epsilon)[\beta+\eta_0(\epsilon)(1-2\beta)]}.
\label{hprime0}
 \end{align}%%{-10mm}\\
 Note that, if $\epsilon=1$, then (\ref{hprime0}) yields a feasible $h^\prime(0)=0$, coherent with (\ref{etamod}),
therefore such solution holds for $\epsilon=1$ as well.
Using the fact that $\pi_{\eta_{\delta}}(\epsilon-2)=\pi_{\eta_{0}}(\epsilon-2)$,
from (\ref{pie}) we obtain
%%{-3mm}
\begin{align}
&\pi_{\eta_{\delta}}(\epsilon-2)=\frac{(1-\beta)\eta_\delta(\epsilon-1)}{\beta(1-\eta_\delta(\epsilon-2))}\pi_{\eta_{\delta}}(\epsilon-1)
\nn\\&
=\frac{(1-\beta)\eta_0(\epsilon-1)}{\beta(1-\eta_0(\epsilon-2))}\pi_{\eta_{0}}(\epsilon-1)
=\pi_{\eta_{0}}(\epsilon-2),\nn
\end{align}%%{-10mm}\\
and therefore, by computing the derivative of the second term above with respect to $\delta\to 0^+$,
and setting it equal to zero (since it is constant with respect to $\delta$),
 we obtain
%%{-3mm}
\begin{align}
&\left[\frac{\mathrm d\pi_{\eta_{\delta}}(\epsilon-1)}{\mathrm d\delta}\right]_{\delta\to 0^+}
=
-h^\prime(0)\frac{1}{\eta_0(\epsilon-1)}\pi_{\eta_{0}}(\epsilon-1)
\nn\\&
=
-\frac{1-\beta}{\beta+\eta_0(\epsilon)(1-2\beta)}\pi_{\eta_{0}}(\epsilon).
\end{align}%%{-10mm}\\
Similarly,
from (\ref{pie}) and the fact that $\pi_{\eta_{\delta}}(\epsilon+2)=\pi_{\eta_{0}}(\epsilon+2)$, we have that
%%{-3mm}
\begin{align}
&\pi_{\eta_{\delta}}(\epsilon+2)
=
\pi_{\eta_{\delta}}(\epsilon+1)
\frac{\beta(1-\eta_\delta(\epsilon+1))}{(1-\beta)\eta_\delta(\epsilon+2)}
\nn\\&
=
\pi_{\eta_{0}}(\epsilon+1)
\frac{\beta(1-\eta_0(\epsilon+1))}{(1-\beta)\eta_0(\epsilon+2)}
=
\pi_{\eta_{0}}(\epsilon+2)
,\nn
\end{align}%%{-10mm}\\
and therefore, by computing the derivative of the second term above with respect to $\delta\to 0^+$, 
and setting it equal to zero,
we obtain
%%{-3mm}
\begin{align}
&\left[\frac{\mathrm d\pi_{\eta_{\delta}}(\epsilon+1)}{\mathrm d\delta}\right]_{\delta\to 0^+}
=r^\prime(0)\frac{1}{1-\eta_\delta(\epsilon+1)}\pi_{\eta_{0}}(\epsilon+1)
\nn\\&
=\frac{\beta}{\beta+\eta_0(\epsilon)(1-2\beta)}\pi_{\eta_{0}}(\epsilon).
\end{align}%%{-10mm}\\
Finally, using the fact that 
$\pi_{\eta_{\delta}}(\epsilon-1)+\pi_{\eta_{\delta}}(\epsilon)+\pi_{\eta_{\delta}}(\epsilon+1)
=
\pi_{\eta_{0}}(\epsilon-1)+\pi_{\eta_{0}}(\epsilon)+\pi_{\eta_{0}}(\epsilon+1)
$ (normalization),
we obtain
%%{-3mm}
\begin{align}
&\left[\frac{\mathrm d\pi_{\eta_{\delta}}(\epsilon)}{\mathrm d\delta}\right]_{\delta\to 0^+}\!\!\!\!
=
-\left[\frac{\mathrm d\pi_{\eta_{\delta}}(\epsilon+1)}{\mathrm d\delta}\right]_{\delta\to 0^+}\!\!\!\!
-\left[\frac{\mathrm d\pi_{\eta_{\delta}}(\epsilon-1)}{\mathrm d\delta}\right]_{\delta\to 0^+}\!\!\!\!
\nn\\&
=
\frac{1-2\beta}{\beta+\eta_0(\epsilon)(1-2\beta)}\pi_{\eta_{0}}(\epsilon).
\end{align}%%{-10mm}\\
By substituting these expressions in (\ref{derivG}), we obtain
%%{-3mm}
\begin{align}
&\frac{\beta{+}\eta_0(\epsilon)(1{-}2\beta)}{\pi_{\eta_{0}}(\epsilon)}
\left[\frac{\mathrm dZ_\lambda(\eta_\delta)}{\mathrm d\delta}\right]_{\delta\to 0^+}\!\!\!\!\!\!\!\!
=(1{-}2\beta)[g(\eta_{0}(\epsilon)){-}\lambda\eta_{0}(\epsilon)]
\nn\\&
-(1-\beta)[g(\eta_0(\epsilon-1))-\eta_0(\epsilon-1)g^\prime(\eta_0(\epsilon-1))]
\nn\\&
+\beta\{g(\eta_0(\epsilon+1))+g^\prime(\eta_0(\epsilon+1))[1-\eta_\delta(\epsilon+1)]-\lambda\}
\nn\\&
-[\beta+\eta_0(\epsilon)(1-2\beta)][g^\prime(\eta_{0}(\epsilon))-\lambda]
\nn\\&
\triangleq S(\eta_0(\epsilon-1),\eta_0(\epsilon),\eta_0(\epsilon+1)).
\end{align}%%{-10mm}\\
From the concavity of $g(x)$, $S(\eta_0(\epsilon-1),\eta_0(\epsilon),\eta_0(\epsilon+1))
$ is a decreasing function of $\eta_0(\epsilon+1)$ and $\eta_0(\epsilon-1)$. Then, since
$\eta_0(\epsilon+1)\leq \eta_0(\epsilon)$ and $\eta_0(\epsilon+1)<\eta_0(\epsilon)$
 by hypothesis, we have that 
%{-3mm}
\begin{align*}
&S(\eta_0(\epsilon-1),\eta_0(\epsilon),\eta_0(\epsilon+1))
\geq
S(\eta_0(\epsilon),\eta_0(\epsilon),\eta_0(\epsilon))=0,
\end{align*}%%{-10mm}\\
hence $\left[\frac{\mathrm d[G(\eta_\delta)-\lambda P(\eta_\delta)]}{\mathrm d\delta}\right]_{\delta\to 0^+}>0$.
Therefore, there exists a small $\kappa>\delta>0$ such that $\eta_\delta\in\mathcal U$
and
$Z_\lambda(\eta_\delta)>Z_\lambda(\eta_0)$,
hence $\eta_0$ is strictly sub-optimal.% The structural property is thus proved.

Using the same approach for the case $\epsilon=e_{\max}-1$, 
and imposing the constraints  $\pi_{\eta_{\delta}}(\epsilon-2)=\pi_{\eta_{0}}(\epsilon-2)$
and $\pi_{\eta_{\delta}}(e_{\max})=\pi_{\eta_{0}}(e_{\max})$,
it can be proved that 
\begin{align*}
&
\frac{\beta+(1-\beta)\eta_0(\epsilon)}{\pi_{\eta_{0}}(\epsilon)}
\left[\frac{\mathrm dZ_\lambda(\eta_\delta)}{\mathrm d\delta}\right]_{\delta\to 0^+}\!\!\!\!\!\!\!
=\frac{\beta}{1-\beta}[g^\prime(\eta_0(\epsilon+1))-\lambda]
\nn\\&
{+}(1{-}\beta)\left[z_\lambda(\eta_{0}(\epsilon)){-}g(\eta_0(\epsilon-1)){+}\eta_0(\epsilon{-}1)g^\prime(\eta_0(\epsilon{-}1))\right]
\\&
-[g^\prime(\eta_{0}(\epsilon))-\lambda][\beta+(1-\beta)\eta_0(\epsilon)]
\nn\\&
\triangleq
T(\eta_0(\epsilon-1),\eta_0(\epsilon),\eta_0(\epsilon+1)).\nn
\end{align*}
Then, using the concavity of $g(x)$, 
we have that $T(\eta_0(\epsilon{-}1),\eta_0(\epsilon),\eta_0(\epsilon{+}1))$ is 
a decreasing function of $\eta_0(\epsilon+1)$ and $\eta_0(\epsilon-1)$.
Then, since $\eta_0(\epsilon+1)\leq \eta_0(\epsilon)$ and $\eta_0(\epsilon-1)<\eta_0(\epsilon)$ by hypothesis, we obtain
%{-3mm}
\begin{align}
&T(\eta_0(\epsilon-1),\eta_0(\epsilon),\eta_0(\epsilon+1))
%\nn\\&
>
T(\eta_0(\epsilon),\eta_0(\epsilon),\eta_0(\epsilon))
\nn\\&
=\frac{\beta^2}{1-\beta}[g^\prime(\eta_0(\epsilon))-\lambda].
\end{align}%%{-10mm}\\
Note that, if $g^\prime(\eta_0(\epsilon))-\lambda\geq 0$, then 
$\left[\frac{\mathrm d[G(\eta_\delta)-\lambda P(\eta_\delta)]}{\mathrm d\delta}\right]_{\delta\to 0^+}\!\!\!\!\!\!\!{>}0$,
hence there exists a small $\kappa>\delta>0$ such that $\eta_\delta\in\mathcal U$
and $G(\eta_\delta)-\lambda P(\eta_\delta)>G(\eta_0)-\lambda P(\eta_0)$, so that 
 $\eta_0$ is strictly sub-optimal.
 On the other hand, it can be proved using
 the transmission transfer technique that a policy $\eta$ such that 
 $g^\prime(\eta(e))-\lambda<0$ for some $e$,
 \emph{i.e.}, $\eta(e)>x_\lambda^*$,  is strictly sub-optimal. 
 An intuitive explanation is that under such policy, the reward in energy level $e$ satisfies the inequality
  $z_\lambda(\eta(e))<z_\lambda(x_\lambda^*)$.
Therefore, this policy can be improved by a policy $\tilde\eta$ which employs a smaller transmission probability 
$\tilde\eta(e)=x_\lambda^*<\eta(e)$ in state $e$ (under a proper transmission transfer to other states). Such policy  yields a larger instantaneous reward 
$z_\lambda(\tilde\eta(e))>z_\lambda(\eta(e))$ in state $e$, and a more conservative energy consumption,
thus yielding improved long-term performance.
The proof of this part is not included due to space constraints.
 
We now prove P3.B, \emph{i.e.}, $\eta^{(\lambda)}(e)\in (\eta_L(\lambda),\eta_H(\lambda)),\forall e{>}0$.
Using Prop. \ref{SSD},
the derivative of the steady state distribution with respect to $\eta(1)$ is given by
%%{-2mm}
\begin{align}
&
 \frac{\mr{d}\pi_\eta(1)}{\mr d\eta(1)}=\frac{1}{1-\eta(1)}\pi_\eta(1)\left[1-\frac{\pi_\eta(1)}{\beta}\right],
 \nn\\&
 \frac{\mr{d}\pi_\eta(e)}{\mr d\eta(1)}=
-\pi_\eta(e)\pi_\eta(1)\frac{1}{\beta}\frac{1}{1-\eta(1)},\ e>1,%\nonumber\\&
%
%\nonumber\\&
%
% \frac{\mr{d}\pi_\eta(0)}{\mr d\eta(1)}=\frac{1-\beta}{\beta}\frac{1}{1-\eta(1)}\pi_\eta(1)\left[1-\eta(1)\frac{\pi_\eta(1)}{\beta}\right].
\end{align}%%{-10mm}\\
Then, from (\ref{GP}) we obtain
%%{-3mm}
\begin{align}
\label{qweqwe}
& \frac{\mr{d} Z_\lambda(\eta)}{\mr d \eta(1)}=
\frac{\mr{d}\pi_\eta(1)}{\mr d \eta(1)}z_\lambda(\eta(1))+
\pi_\eta(1)[g^\prime(\eta(1))-\lambda]
\nn\\&
+\sum_{e=2}^{e_{\mr{max}}} \frac{\mr{d}\pi_\eta(e)}{\mr d \eta(1)}z_\lambda(\eta(e))
=
\frac{\pi_\eta(1)}{1-\eta(1)}\left(1-\frac{\pi_\eta(1)}{\beta}\right)z_\lambda(\eta(1))
\nn\\&
+\pi_\eta(1)[g^\prime(\eta(1))-\lambda]
-\frac{\pi_\eta(1)}{\beta(1-\eta(1))}
\sum_{e=2}^{e_{\mr{max}}}\pi_\eta(e)z_\lambda(\eta(e)).
\nn
\end{align}%%{-10mm}\\
Using the fact that $ \sum_{e=2}^{e_{\mr{max}}}\pi_\eta(e)z_\lambda(\eta(e)\!){=}Z_\lambda(\eta){-}\pi_\eta(1)z_\lambda(\eta(1)\!)$, we obtain
%%{-3mm}
\begin{align*}
& \frac{\mr{d} Z_\lambda(\eta)}{\mr d \eta(1)}=
\pi_\eta(1)\frac{1}{1-\eta(1)}z_\lambda(\eta(1))
+\pi_\eta(1)[g^\prime(\eta(1))-\lambda]
\nn\\&
{-}\frac{\pi_\eta(1)}{1{-}\eta(1)}\frac{Z_\lambda(\eta)}{\beta}
{\propto}
g(\eta(1)){+}(1{-}\eta(1))g^\prime(\eta(1)){-}\lambda{-}\frac{Z_\lambda(\eta)}{\beta}\nonumber\\&
{>}
g(\eta(1)){+}(1{-}\eta(1))g^\prime(\eta(1)){-}\lambda{-}\frac{z_\lambda(\min\{x_\lambda^*,\beta\})}{\beta}{\triangleq}L(\eta(1)).
\end{align*}%%{-10mm}\\
Using the concavity of $g(x)$, it can be shown that $L(\eta(1))$ is a decreasing function of
$\eta(1)$, with $\lim_{\eta(1)\to 0^+}L(\eta(1))=\infty$.
Moreover, using the fact that $g^\prime(x_\lambda^*)-\lambda=0 $ if $x_\lambda^*<\beta$,
and $z_\lambda(\min\{x_\lambda^*,\beta\})>z_\lambda(0)=0$ from the definition of $x_\lambda^*$ and the concavity of $g(x)$,
we obtain $L(\min\{x_\lambda^*,\beta\})=-z_\lambda(\min\{x_\lambda^*,\beta\})\frac{1-\beta}{\beta}<0$.
 Therefore, there exists a unique $\eta_L(\lambda)\in (0,\min\{x_\lambda^*,\beta\})$
that solves $L(\eta_L(\lambda))=0$.
Then, for all $\eta(1)\leq \eta_L(\lambda)$ we have $L(\eta(1))\geq 0$, hence
$ \frac{\mr{d}Z_\lambda(\eta)}{\mr d\eta(1)}>0$,
which proves that $\eta(1)\leq \eta_L(\lambda)$ is strictly suboptimal.

Similarly, the derivative of the steady state distribution with respect to $\eta(e_{\mr{max}})$ is given by
%%{-3mm}
\begin{align}
&\frac{\mr{d}\pi_\eta(e)}{\mr d\eta(e_{\mr{max}})}=\frac{\pi_\eta(e)\pi_\eta(e_{\mr{max}})}{\eta(e_{\mr{max}})},\ e<e_{\mr{max}},
\\&
\frac{\mr{d}\pi_\eta(e_{\mr{max}})}{\mr d\eta(e_{\mr{max}})}=
-\frac{\pi_\eta(e_{\mr{max}})}{\eta(e_{\mr{max}})}(1-\pi_\eta(e_{\mr{max}})).
\end{align}%%{-10mm}\\
Then, from (\ref{GP}) we obtain
\begin{align*}
& \frac{\mr{d} Z_\lambda(\eta)}{\mr d \eta(e_{\mr{max}})}{=}
 \frac{\mr{d}\pi_\eta(e_{\mr{max}})}{\mr d \eta(e_{\mr{max}})}z_\lambda(\eta(e_{\mr{max}})\!)
{+}\!\pi_\eta(e_{\mr{max}})[g^\prime(\eta(e_{\mr{max}})\!){-}\lambda]
 \nn\\&
{+}\!\!\!\!\!\sum_{e=1}^{e_{\mr{max}}-1}\!\!\!\!\!\frac{\mr{d}\pi_\eta(e)}{\mr d \eta(e_{\mr{max}})}z_\lambda(\eta(e)\!)
{=}{-}\frac{\pi_\eta(e_{\mr{max}})}{\eta(e_{\mr{max}})}(1{-}\pi_\eta(e_{\mr{max}})\!)z_\lambda(\eta(e_{\mr{max}})\!)
\\&
{+}\pi_\eta(e_{\mr{max}})[g^\prime(\eta(e_{\mr{max}})){-}\lambda]
{+}\frac{\pi_\eta(e_{\mr{max}})}{\eta(e_{\mr{max}})}
\!\sum_{e=1}^{e_{\mr{max}}-1}\!\!\!\pi_\eta(e)z_\lambda(\eta(e)).\nn
\end{align*}
Note that $\sum_{e=1}^{e_{\mr{max}}-1}\!\!\!\!\pi_\eta(e)z_\lambda(\eta(e)){=}Z_\lambda(\eta){-}\pi_\eta(e_{\mr{max}})z_\lambda(\eta(e_{\mr{max}}))$.
Substituting we obtain
%%{-3mm}
\begin{align}
& \frac{\mr{d}Z_\lambda(\eta)}{\mr d \eta(e_{\mr{max}})}{=}
{-}\frac{\pi_\eta(e_{\mr{max}})}{\eta(e_{\mr{max}})}z_\lambda(\eta(e_{\mr{max}})\!)
{+}\pi_\eta(e_{\mr{max}})[g^\prime(\eta(e_{\mr{max}})\!){-}\lambda]
\nn\\&
{+}\frac{\pi_\eta(e_{\mr{max}})}{\eta(e_{\mr{max}})}Z_\lambda(\eta)
{\propto}
{-}g(\eta(e_{\mr{max}})){+}\eta(e_{\mr{max}})g^\prime(\eta(e_{\mr{max}}))
{+}Z_\lambda(\eta)\nonumber\\&
<
-g(\eta(e_{\mr{max}}))+\eta(e_{\mr{max}})g^\prime(\eta(e_{\mr{max}}))
+z_\lambda(\min\{x_\lambda^*,\beta\})
\nn\\&
\triangleq U(\eta(e_{\mr{max}})).
\end{align}%%{-10mm}\\
Since $g(x)$ is concave, $U(\eta(e_{\mr{max}}))$ is a decreasing function of
$\eta(e_{\mr{max}})$.
If $x_\lambda^*\leq\beta$, using the fact that $g^\prime(x_\lambda^*)=\lambda$, we obtain $U(x_\lambda^*)=0$.
On the other hand, if $x_\lambda^*>\beta$, we have that 
$U(x_\lambda^*)=-[g(x_\lambda^*)+g^\prime(x_\lambda^*) (\beta-x_\lambda^*)-g(\beta)]<0$
and $U(\beta)=\beta[g^\prime(\beta)-\lambda]>0$.
Therefore, there exists a unique $\eta_U(\lambda)\in (\min\{\beta,x_\lambda^*\},x_\lambda^*)$ that solves
$U(\eta_U(\lambda))=0$.
 Then, for all $\eta(e_{\mr{max}})\geq \eta_U(\lambda)$ we have
$U(\eta(e_{\mr{max}}))\leq 0$, hence $\frac{\mr{d}Z_\lambda(\eta)}{\mr d\eta(e_{\mr{max}})}{<}0$,
which proves that $\eta(e_{\mr{max}})\geq \eta_U(\lambda)$ is strictly suboptimal.
Finally, by combining P3.A and P3.B, we obtain
$\eta_L(\lambda)<\eta(1)<\eta(2)<\dots<\eta(e_{\mr{max}})<\eta_U(\lambda)$.
The theorem is thus proved.
\hfill$\blacksquare$
%{-5mm}
\section*{Appendix~C: Proof of Algorithm \ref{PIA}}
%{-1mm}
%\label{proofofPIA}
For the policy evaluation step,
we need to prove that the \emph{relative value function} \cite{Bertsekas2005}, denoted as $v_\eta(e)$, takes the form (\ref{RVF}), where we have defined 
$D_\eta(e)=v_\eta(e)-v_\eta(e-1),\ e>0$, $D_\eta(0)=0$.
By the definition of relative value function \cite{Bertsekas2005}, $v_\eta(e)$ is obtained as the unique solution of the linear system
%%{-3mm}
\begin{align*}
&v_\eta(0){=}0,\ v_\eta(e){-}\sum_{f\in\mathcal E}\mathbb P_\eta(f|e)v_\eta(f)
 {=}z_\lambda(\eta(e)){-}Z_\lambda(\eta),e{\in}\mathcal E,
\end{align*}%%{-10mm}\\
where $\mathbb P_\eta(f|e)$ is the transition probability from energy level $e$ to energy level $f$, under policy $\eta$.
In particular, by explicating $\mathbb P_\eta(f|e)$,
the linear system above for $e{<}e_{\max}$ is equivalent to
%%{-3mm}
\begin{align}
D_\eta(e+1)=\frac{Z_\lambda(\eta)-z_\lambda(\eta(e))}{\beta(1-\eta(e))}+\frac{(1-\beta)\eta(e)}{\beta(1-\eta(e))}D_\eta(e),
\end{align}%%{-10mm}\\
proving the recursion (\ref{RVF}).
In the policy improvement step, we solve the optimization problem
%%{-3mm}
\begin{align*}
\eta^{[i+1]}(e){=}\underset{\eta(e)\in (\eta_L,\eta_H)}{\arg\max}
z_\lambda(\eta(e)){+}\sum_f\mathbb P_\eta(f|e)v^{[i]}(f),\forall e{\neq}0.
 \end{align*}%%{-10mm}\\
 In particular, by explicating the transition probability $\mathbb P_\eta(f|e)$, we obtain
%%{-3mm}
\begin{align}
\eta^{[i+1]}(e)= \arg\max_{\eta(e)\in (\eta_L,\eta_H)} g(\eta(e))-x^{[i]}(e,\lambda)\eta(e),
\end{align}%%{-10mm}\\
 where
  $x^{[i]}(e,\lambda)$ is given by (\ref{rho}). 
  The solution (\ref{maxPI}) directly follows by
  exploiting the strict concavity of $g(x)$ and the fact that $g^\prime(x)=\bar V(y_{\mathrm{th}}(x))$. 
\hfill$\blacksquare$
%{-5mm}
\section*{Appendix~D: Proof of Theorem \ref{thm1}}
%{-1mm}
%\label{proofof:thm1}
In order to prove the theorem, we need the following propositions.
%{-5mm}
\begin{propos}\label{Pdecreasing}
 $P(\eta^{(\lambda)})$ is a non-increasing function of $\lambda$, and
$P(\eta^{(0)})>0$,
$\lim_{\lambda\to \infty}P(\eta^{(\lambda)})=0$.
\end{propos}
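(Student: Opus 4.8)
The plan is to treat the three assertions separately: the monotonicity by a standard exchange argument that needs nothing beyond uniqueness of the maximizer, and the two boundary behaviors by invoking the structural description of $\eta^{(\lambda)}$ in Theorem~\ref{propeta}. The real content of the proposition is essentially encapsulated in Theorem~\ref{propeta}; the present statement then follows by assembling those facts together with Prop.~\ref{lemma:gx}.

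For the monotonicity, I would fix $\lambda_1 < \lambda_2$ and write $\eta_j = \eta^{(\lambda_j)}$ for $j = 1, 2$. Since each $\eta_j$ is the unique maximizer of $Z_{\lambda_j}$ over $\mathcal U$ (P1 of Theorem~\ref{propeta}), optimality yields $G(\eta_1) - \lambda_1 P(\eta_1) \ge G(\eta_2) - \lambda_1 P(\eta_2)$ and $G(\eta_2) - \lambda_2 P(\eta_2) \ge G(\eta_1) - \lambda_2 P(\eta_1)$. Adding the two and cancelling $G(\eta_1) + G(\eta_2)$ from both sides collapses everything to $(\lambda_2 - \lambda_1)(P(\eta_1) - P(\eta_2)) \ge 0$, and since $\lambda_2 > \lambda_1$ this gives $P(\eta^{(\lambda_1)}) \ge P(\eta^{(\lambda_2)})$. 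This step is routine and uses only existence and uniqueness of the optimizer.

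For the positivity $P(\eta^{(0)}) > 0$, I would simply invoke admissibility: by construction $\eta^{(0)} \in \mathcal U$, so by Definition~\ref{Def:admisspol} we have $\eta^{(0)}(e) > 0$ for every $e \ge 1$, while irreducibility of the energy chain (Prop.~\ref{SSD}) forces $\pi_{\eta^{(0)}}(e) > 0$. Hence $P(\eta^{(0)}) = \sum_{e=1}^{e_{\max}} \pi_{\eta^{(0)}}(e)\, \eta^{(0)}(e)$ is a sum of strictly positive terms and is therefore positive. For the limit $\lim_{\lambda \to \infty} P(\eta^{(\lambda)}) = 0$, I would lean on the envelope bound in P3 of Theorem~\ref{propeta}, namely $\eta^{(\lambda)}(e) < \eta_H(\lambda) \le x_\lambda^*$ for all $e \ge 1$, where $x_\lambda^*$ is the unique root of $\bar V(y_{\mathrm{th}}(x)) = g'(x) = \lambda$. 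Because $g$ is strictly concave with $g'$ decreasing continuously from $\infty$ down to $0$ on $(0,1)$ (Prop.~\ref{lemma:gx}), the assignment $\lambda \mapsto x_\lambda^*$ is the inverse of $g'$ and satisfies $x_\lambda^* \to 0$ as $\lambda \to \infty$. Since $P(\eta)$ is a weighted average of the values $\eta(e)$ with weights $\pi_\eta(e)$ summing to one, one has $P(\eta^{(\lambda)}) \le \max_e \eta^{(\lambda)}(e) \le x_\lambda^*$, and a squeeze against $P \ge 0$ finishes the argument.

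The only place requiring genuine care is the last part: one must verify that the upper envelope $\eta_H(\lambda)$ is indeed dominated by $x_\lambda^*$ and that $x_\lambda^*$ vanishes in the limit. Both facts are delivered cleanly by Theorem~\ref{propeta} and the endpoint behavior $\lim_{x \to 0} g'(x) = \infty$ in Prop.~\ref{lemma:gx}, so the bulk of the difficulty has already been discharged in those results. I would therefore state monotonicity and positivity in a sentence or two each and devote most of the remaining space to the $x_\lambda^* \to 0$ reasoning.
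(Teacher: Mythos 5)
Your proposal is correct and takes essentially the same route as the paper: the paper establishes monotonicity from the very same two optimality inequalities $Z_{\lambda_1}(\eta^{(\lambda_1)})\ge Z_{\lambda_1}(\eta^{(\lambda_2)})$ and $Z_{\lambda_2}(\eta^{(\lambda_2)})\ge Z_{\lambda_2}(\eta^{(\lambda_1)})$, merely chained into a contradiction rather than added directly as in your cleaner exchange argument. The remaining two claims are handled identically: $P(\eta^{(0)})>0$ because the optimizer is admissible (not the idle policy), and $\lim_{\lambda\to\infty}P(\eta^{(\lambda)})=0$ via the bound $P(\eta^{(\lambda)})<\eta_U(\lambda)\le x_\lambda^*$ from P3 of Theorem~\ref{propeta} together with $x_\lambda^*\to 0$.
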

%%{-3mm}
\begin{proof}
Assume by contradiction that $\lambda_1>\lambda_2$ and 
$P(\eta^{(\lambda_2)})<P(\eta^{(\lambda_1)})$.
Then we have
%%{-3mm}
\begin{align}
&Z_{\lambda_2}(\eta^{(\lambda_2)})
\geq
 Z_{\lambda_1}(\eta^{(\lambda_1)})
+(\lambda_1-\lambda_2) P(\eta^{(\lambda_1)})
\label{k0}
\\&
\label{k1}
>
  Z_{\lambda_1}(\eta^{(\lambda_1)})
+(\lambda_1-\lambda_2) P(\eta^{(\lambda_2)})
\\
\label{k4}
&\geq
 G(\eta^{(\lambda_2)})-\lambda_1 P(\eta^{(\lambda_2)})
+(\lambda_1-\lambda_2) P(\eta^{(\lambda_2)})
\\&\nn
=G(\eta^{(\lambda_2)})-\lambda_2 P(\eta^{(\lambda_2)}),
\end{align}%{-10mm}\\
where in (\ref{k0})-(\ref{k1})
we have used the fact that 
$Z_{\lambda_2}(\eta^{(\lambda_2)})
\geq 
Z_{\lambda_2}(\eta^{(\lambda_1)})
$ and then the hypothesis,
in (\ref{k4})
the fact that 
$
Z_{\lambda_1}(\eta^{(\lambda_1)})
\geq
Z_{\lambda_1}(\eta^{(\lambda_2)})
$.
By comparing (\ref{k1}) and (\ref{k4}), we obtain the contradiction
$Z_{\lambda_2}(\eta^{(\lambda_2)})>Z_{\lambda_2}(\eta^{(\lambda_2)})$.
For the second part of the proposition, since $\eta^{(0)}\neq\{\text{idle policy}\}$, then $P(\eta^{(0)})>0$.
Since
 $P(\eta^{(\lambda)})<\eta_U(\lambda)\leq x_\lambda^*$ (P3 of Theorem \ref{propeta}),
where $x_\lambda^*=\arg\max_{x}[g(x)-\lambda x]$, in the limit $\lambda\to\infty$ we obtain
$x_\lambda^*\to 0$, hence $P(\eta^{(\lambda)})\to 0$.
The proposition is thus proved.
\end{proof}
%{-4mm}
\begin{propos}\label{Lembdadecreasing}
 $\Lambda(\eta^{(\lambda)})$ is a continuous, non-increasing function of 
$\lambda$, for $\lambda\geq 0$, with limits
$\Lambda(\eta^{(0)})\in (0,\infty)$
and
$\lim_{\lambda\to \infty}\Lambda(\eta^{(\lambda)})=0$.
$h(\lambda)\triangleq\Lambda(\eta^{(\lambda)})-\lambda$ is a continuous, strictly decreasing function of $\lambda$, with limits
$h(0)>0$
and
$\lim_{\lambda\to \infty}h(\lambda)=-\infty$.
\end{propos}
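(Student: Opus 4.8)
The plan is to reduce every claim about $\Lambda(\eta^{(\lambda)})$ to the already-established behaviour of $P(\eta^{(\lambda)})$ in Prop.~\ref{Pdecreasing}, together with the optimality and continuity of $\eta^{(\lambda)}$ from Theorem~\ref{propeta}, and then read off the properties of $h(\lambda)$ as immediate consequences. For continuity, I would first note that $G(\cdot)$ and $P(\cdot)$ in (\ref{GP}) are continuous functionals of $\eta$, via the closed-form steady-state distribution of Prop.~\ref{SSD} and the continuity of $g$ in Prop.~\ref{lemma:gx}, and that $\eta^{(\lambda)}$ is continuous in $\lambda$ by P2 of Theorem~\ref{propeta}. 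Since the EH operation forces $P(\eta^{(\lambda)})\le\beta(s)<1$, the denominator obeys $1-P(\eta^{(\lambda)})\ge 1-\beta(s)>0$ and is bounded away from zero; hence $\Lambda(\eta^{(\lambda)})=(U-1)G(\eta^{(\lambda)})/(1-P(\eta^{(\lambda)}))$ is a continuous composition, and so is $h(\lambda)=\Lambda(\eta^{(\lambda)})-\lambda$.

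The central step is the monotonicity of $\Lambda(\eta^{(\lambda)})$, for which I would first show that $G(\eta^{(\lambda)})$ is non-increasing in $\lambda$, by an argument mirroring Prop.~\ref{Pdecreasing}. For $\lambda_1>\lambda_2\ge 0$, optimality of $\eta^{(\lambda_2)}$ for $Z_{\lambda_2}$ gives $G(\eta^{(\lambda_2)})-\lambda_2 P(\eta^{(\lambda_2)})\ge G(\eta^{(\lambda_1)})-\lambda_2 P(\eta^{(\lambda_1)})$, i.e. $G(\eta^{(\lambda_2)})-G(\eta^{(\lambda_1)})\ge \lambda_2\left(P(\eta^{(\lambda_2)})-P(\eta^{(\lambda_1)})\right)\ge 0$, where the last inequality combines $\lambda_2\ge 0$ with the monotonicity $P(\eta^{(\lambda_2)})\ge P(\eta^{(\lambda_1)})$ of Prop.~\ref{Pdecreasing}. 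Thus $G(\eta^{(\lambda)})$ is non-negative and non-increasing, while $1-P(\eta^{(\lambda)})$ is strictly positive and non-decreasing, and an elementary check shows that a ratio of a non-negative non-increasing numerator over a positive non-decreasing denominator is non-increasing; hence $\Lambda(\eta^{(\lambda)})$ is non-increasing. Consequently $h(\lambda)=\Lambda(\eta^{(\lambda)})-\lambda$ is the sum of a non-increasing and a strictly decreasing term, so it is strictly decreasing.

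For the limits I would invoke Prop.~\ref{Pdecreasing} together with the Jensen bound $G(\eta)<g(P(\eta))$ used in Appendix~B. At $\lambda=0$, $\eta^{(0)}$ maximizes $G$, and since $P(\eta^{(0)})>0$ forces some $\eta^{(0)}(e)>0$ with $g(\eta^{(0)}(e))>0$, we obtain $G(\eta^{(0)})>0$; finiteness follows from $G(\eta^{(0)})<g(P(\eta^{(0)}))\le g(\beta(s))<\infty$ by Prop.~\ref{lemma:gx}, so that $\Lambda(\eta^{(0)})\in(0,\infty)$ and $h(0)=\Lambda(\eta^{(0)})>0$. As $\lambda\to\infty$, Prop.~\ref{Pdecreasing} gives $P(\eta^{(\lambda)})\to 0$, whence $G(\eta^{(\lambda)})<g(P(\eta^{(\lambda)}))\to g(0)=0$ by continuity of $g$, while $1-P(\eta^{(\lambda)})\to 1$; therefore $\Lambda(\eta^{(\lambda)})\to 0$ and $h(\lambda)=\Lambda(\eta^{(\lambda)})-\lambda\to-\infty$.

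The main obstacle I anticipate is the ratio monotonicity: establishing that $G(\eta^{(\lambda)})$ itself decreases (not merely $P(\eta^{(\lambda)})$) and then combining the opposite-direction monotonicities of numerator and denominator correctly, in particular verifying that the numerator remains non-negative and the denominator stays bounded away from zero so that the elementary decreasing-over-increasing ratio rule applies. Everything else — continuity, the two limiting values, and the transfer of these properties to $h(\lambda)$ — is routine once Prop.~\ref{Pdecreasing} and P2 of Theorem~\ref{propeta} have been invoked.
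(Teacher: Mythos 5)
Your proof is correct. The continuity argument and the two limits are handled essentially as in the paper (the paper argues the $\lambda\to\infty$ limit via $\eta^{(\lambda)}\to 0$, while your use of the Jensen bound $G(\eta)<g(P(\eta))$ together with $P(\eta^{(\lambda)})\to 0$ from Prop.~\ref{Pdecreasing} is a mild but valid shortcut), and the transfer to $h(\lambda)$ is the same routine step in both. Where you genuinely depart from the paper is the core monotonicity step. The paper never establishes that $G(\eta^{(\lambda)})$ is monotone; instead it cross-multiplies the target inequality $\Lambda(\eta^{(\lambda_1)})\leq\Lambda(\eta^{(\lambda_2)})$, rewrites it via $G=Z_\lambda+\lambda P$ as $A(\lambda_1,\lambda_2)\geq 0$, lower-bounds $A$ using the optimality inequality $Z_{\lambda_2}(\eta^{(\lambda_2)})\geq Z_{\lambda_2}(\eta^{(\lambda_1)})$, and then rearranges the residual expression into the product
$\left[P(\eta^{(\lambda_2)})-P(\eta^{(\lambda_1)})\right]\left[G(\eta^{(\lambda_1)})+\lambda_2\left(1-P(\eta^{(\lambda_1)})\right)\right]\geq 0$,
which holds by Prop.~\ref{Pdecreasing}. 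You instead extract from the very same optimality inequality the intermediate fact $G(\eta^{(\lambda_2)})-G(\eta^{(\lambda_1)})\geq\lambda_2\left[P(\eta^{(\lambda_2)})-P(\eta^{(\lambda_1)})\right]\geq 0$, i.e., that the numerator $G(\eta^{(\lambda)})$ is itself non-increasing, and then conclude by the elementary rule that a non-negative non-increasing numerator divided by a positive non-decreasing denominator is non-increasing, with the denominator bounded below by $1-\beta(s)>0$. The two arguments consume exactly the same ingredients---one application of optimality at $\lambda_2$ against the competitor $\eta^{(\lambda_1)}$, plus Prop.~\ref{Pdecreasing}---but your factorization is more modular: it isolates the monotonicity of $G(\eta^{(\lambda)})$ as a standalone, reusable fact and replaces the paper's somewhat opaque chain of algebraic rearrangements with a transparent ratio argument, at no loss of rigor.
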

 %{-2mm}
\begin{proof}
The continuity of  $\Lambda(\eta^{(\lambda)})$ follows from the continuity of $\eta^{(\lambda)}$ (P2 of Theorem \ref{propeta})
and the definition of $\Lambda(\eta)$ in (\ref{Lambda}).
For $\lambda=0$ we have $\eta^{(0)}=\arg\max_{\eta\in\mathcal U}G(\eta)$.
Then, we obtain
%{-5mm}
\begin{align}
\Lambda(\eta^{(0)})=(U-1)\frac{G(\eta^{(0)})}{1-P(\eta^{(0)})}\in (0,\infty).
\end{align}%{-10mm}\\
$\Lambda(\eta^{(0)})$ is positive and bounded since $0{<}G(\eta^{(0)}){\leq}g(1){<}\infty$
and $P(\eta^{(0)})\leq\beta<1$.
On the other hand, for $\lambda\to\infty$, we have $\eta^{(\lambda)}\to 0$,
hence $G(\eta^{(\lambda)})\to 0 $, $P(\eta^{(\lambda)})\to 0$
and $\lim_{\lambda\to \infty}\Lambda(\eta^{(\lambda)})=0$.
To conclude, we prove that 
 $\Lambda(\eta^{(\lambda)})$ is a non-increasing function of 
$\lambda$, \emph{i.e.},
$\Lambda(\eta^{(\lambda_1)})\leq\Lambda(\eta^{(\lambda_2)})$
for $\lambda_1>\lambda_2\geq 0$.
Using (\ref{Lambda}), this is true if and only if
%{-2mm}
\begin{align*}
G(\eta^{(\lambda_2)})(1-P(\eta^{(\lambda_1)}))
-G(\eta^{(\lambda_1)})(1-P(\eta^{(\lambda_2)}))
\geq 0.
\end{align*}
Equivalently, by rearranging the terms,
%{-3mm}
\begin{align*}
&A(\lambda_1,\lambda_2){\triangleq}
Z_{\lambda_2}(\eta^{(\lambda_2)})[1{-}P(\eta^{(\lambda_1)})]
{-}Z_{\lambda_1}(\eta^{(\lambda_1)})[1{-}P(\eta^{(\lambda_2)})]
\\&
+\lambda_2P(\eta^{(\lambda_2)})[1-P(\eta^{(\lambda_1)})]
-\lambda_1P(\eta^{(\lambda_1)})[1-P(\eta^{(\lambda_2)})]
\geq 0.
\end{align*}
Using the fact that $\eta^{(\lambda_2)}$ is optimal for (\ref{etalambda}) for $\lambda=\lambda_2$,
hence $Z_{\lambda_2}(\eta^{(\lambda_2)})
\geq Z_{\lambda_2}(\eta^{(\lambda_1)})$,
a sufficient condition which guarantees that $A(\lambda_1,\lambda_2)\geq 0$ is that
%{-3mm}
\begin{align*}
&A(\lambda_1,\lambda_2){\geq}Z_{\lambda_2}(\eta^{(\lambda_1)})[1{-}P(\eta^{(\lambda_1)})]
{-}Z_{\lambda_1}(\eta^{(\lambda_1)})
[1{-}P(\eta^{(\lambda_2)})]
\\
&{+}\lambda_2P(\eta^{(\lambda_2)})(1{-}P(\eta^{(\lambda_1)}))
{-}\lambda_1P(\eta^{(\lambda_1)})(1{-}P(\eta^{(\lambda_2)}))
{\geq}0.
\end{align*}
By rearranging the terms,
it can be readily verified that the second term above is equivalent to
%{-3mm}
\begin{align*}
[P(\eta^{(\lambda_2)})-P(\eta^{(\lambda_1)})][G(\eta^{(\lambda_1)})+\lambda_2(1-P(\eta^{(\lambda_1)}))]
\geq 0,
\end{align*}%{-10mm}\\
which holds from %$G(\eta^{(\lambda_1)})+\lambda_2(1-P(\eta^{(\lambda_1)}))>0$ and
 Prop. \ref{Pdecreasing}.
The second part of the proposition readily follows.
%The lemma is thus proved.
\end{proof}

Using Prop. \ref{Lembdadecreasing}, we have that 
 there exists a unique $\lambda^*\in (0,\infty)$ such that 
$h(\lambda^*)=0$, \emph{i.e.}, $\Lambda(\eta^{(\lambda^*)})=\lambda^*$.
Under such $\lambda^*$, $\eta^{(\lambda^*)}$ is optimal for (\ref{NE}),
and $\eta^{(\lambda^*)}$ is unique from P1 of Theorem~\ref{propeta}.

We now prove that under the SNE, $P(\eta^{(\lambda^*)}){\leq}\frac{1}{U}$
($P(\eta^{(\lambda^*)})\leq\beta$ as shown in the proof of Prop. \ref{upbound}).
This is true if $P(\eta^{(0)})\leq\frac{1}{U}$,
since $P(\eta^{(\lambda)})$ is a non-increasing function of $\lambda$ (Prop. \ref{Pdecreasing}).
Now, assume that  $P(\eta^{(0)})>\frac{1}{U}$.
Since %$P(\eta^{(\lambda)})$ is a non-increasing and continuous function of $\lambda$
%from Prop. \ref{Pdecreasing} in the Appendix, with
 $\underset{\lambda\to\infty}\lim P(\eta^{(\lambda)})=0$ (Prop. \ref{Pdecreasing}),
there exists $\hat\lambda\in (0,\infty)$ such that 
$P(\eta^{(\hat\lambda)})=\frac{1}{U}$.
For such $\hat\lambda$, from (\ref{etalambda}) we have
%{-3mm}
\begin{align}\label{z2}
G(\eta^{(\hat\lambda)})-\hat\lambda\frac{1}{U}{=}
Z_{\hat\lambda}(\eta^{(\hat\lambda)})
{=}
\max_{\eta\in\mathcal U}Z_{\hat\lambda}(\eta)
{\geq}
Z_{\hat\lambda}(\text{idle}){=}0,
\end{align}%{-10mm}\\
where ``$\text{idle}$" denotes the policy where the EHS remains always idle, resulting in $G(\text{idle})=P(\text{idle})=0$.
The inequality follows from the fact that $Z_\lambda(\eta)$ is evaluated under the specific policy ``idle."
Then, using (\ref{z2}), (\ref{Lambda}) and the fact that $P(\eta^{(\hat\lambda)})=\frac{1}{U}$ by definition of $\hat\lambda$, we obtain
%{-8mm}
\begin{align*}
UG(\eta^{(\hat\lambda)})-\hat\lambda
=
(U-1)\frac{G(\eta^{(\hat\lambda)})}{1-1/U}
-\hat\lambda
=
\Lambda(\eta^{(\hat\lambda)})
-\hat\lambda
\geq 
0.
\end{align*}%{-10mm}\\
Therefore, we obtain $\Lambda(\eta^{(\hat\lambda)})\geq\hat\lambda$.
Since $\Lambda(\eta^{(\lambda)})-\lambda$
is a decreasing function of $\lambda$ (Prop. \ref{Lembdadecreasing})
and the Lagrange multiplier $\lambda^*$ of the SNE must satisfy  $\Lambda(\eta^{(\lambda^*)})-\lambda^*=0$,
necessarily $\hat\lambda\leq\lambda^*$.
Finally, using Prop. \ref{Pdecreasing},
we obtain $P(\eta^{(\hat\lambda)})=\frac{1}{U}\geq P(\eta^{(\lambda^*)})$,
thus proving the theorem.
%since $P(\eta^{(\lambda)})$ is a non-increasing function of $\lambda$.The theorem is thus proved.
\hfill$\blacksquare$
%{-5mm}
\section*{Appendix~E: Proof of Theorem \ref{thm2}}
%{-1mm}
%\label{proofof:thm2}
From Theorem~\ref{propeta}, we have that $\eta^*(e)\in(\eta_L(\lambda^*),\eta_H(\lambda^*))$ where
$0<\eta_L(\lambda^*)<\eta_H(\lambda^*)<1$. 
It follows that $\eta^*\in\mathrm{int}(\mathcal U)$.
Therefore,
since $\eta^*$ is globally optimal for the optimization problem (\ref{NE}) and it is unique,
 the gradient with respect to $\eta$ of the objective function in (\ref{NE}), computed in $\eta^*$ and denoted as
 $\Delta_{\eta^*}(\cdot)$,
 is equal to zero, and its Hessian
with respect to $\eta$, computed in $\eta^*$ and denoted as $\mathbf H_{\eta^*}(\cdot)$, is 
\emph{negative} definite, \emph{i.e.},
for the SNE $\eta^*$ we have
%{-3mm}
\begin{align}
\label{deltalo}
&\Delta_{\eta^*}(G(\eta))-\Lambda(\eta^*)\Delta_{\eta^*}(P(\eta))=\mathbf 0,
\\
\label{Hlo}
&\mathbf H_{\eta^*}(G(\eta))-\Lambda(\eta^*)\mathbf H_{\eta^*}(P(\eta))\prec 0.
\end{align}%{-10mm}\\
On the other hand, the gradient of the network utility (\ref{rewtot}) is given by
%{-3mm}
\begin{align}\label{deltaglobal}
& \Delta_\eta\left(R_{\bs\eta}\right)
=
U(1-P(\eta))^{U-1} \Delta_\eta\left(G(\eta)\right)
\nn\\&
-U(U-1)G(\eta)(1-P(\eta))^{U-2}\Delta_\eta\left(P(\eta)\right).
\end{align}%{-10mm}\\
The Hessian matrix of (\ref{rewtot})  is then obtained by further computing the gradient of (\ref{deltaglobal}),
yielding
%{-3mm}
\begin{align}
\label{H}
&\mathbf H_\eta\left(R_{\bs\eta}\right)
=
U(1-P(\eta))^{U-1}\mathbf H_\eta\left(G(\eta)\right)
\\&
-U(U-1)(1-P(\eta))^{U-2}\Delta_\eta\left(G(\eta)\right)\Delta_\eta(P(\eta))^T
\nonumber\\&
-U(U-1)(1-P(\eta))^{U-2}\Delta_\eta\left(P(\eta)\right)\Delta_\eta(G(\eta))^T
\nn\\&
-U(U-1)G(\eta)(1-P(\eta))^{U-2}\mathbf H_\eta\left(P(\eta)\right)
\nonumber\\&
{+}U(U{-}1)(U{-}2)G(\eta)(1{-}P(\eta))^{U-3}\Delta_\eta\left(P(\eta)\right)\Delta_\eta(P(\eta))^T.
\nonumber
\end{align}%{-10mm}\\
By computing (\ref{deltaglobal}) under the SNE $\eta^*$, using (\ref{Lambda}) and substituting (\ref{deltalo}) in (\ref{deltaglobal}),
 we obtain $\Delta_{\eta^*}\left(R_{\bs\eta}\right)=\mathbf 0$.
Moreover, 
since $\mathbf H_{\eta^*}(G(\eta))\prec \Lambda(\eta^*)\mathbf H_{\eta^*}(P(\eta))$ from (\ref{Hlo}),
from (\ref{H})
we obtain
%{-3mm}
\begin{align*}
&\mathbf H_{\eta^*}\left(R_{\bs\eta}\right)
\prec
-U^2(U-1)G(\eta^*)
(1-P(\eta))^{U-3}
\nn\\&
\times
\Delta_{\eta^*}(P(\eta))\Delta_{\eta^*}(P(\eta))^T
\preceq 0.
\end{align*}%{-10mm}\\
Therefore, $\mathbf H_{\eta^*}\left(R_{\bs\eta}\right)\prec 0$
and $\Delta_{\eta^*}(R_{\bs\eta})=\mathbf 0$,
hence $\eta^*$ is a local optimum for (\ref{origopt}).
\hfill$\blacksquare$

%{-5mm}
\bibliographystyle{IEEEtran}
\bibliography{IEEEabrv,References} 

\begin{IEEEbiography}[{\includegraphics[width=1in,height=1.25in,clip,keepaspectratio]{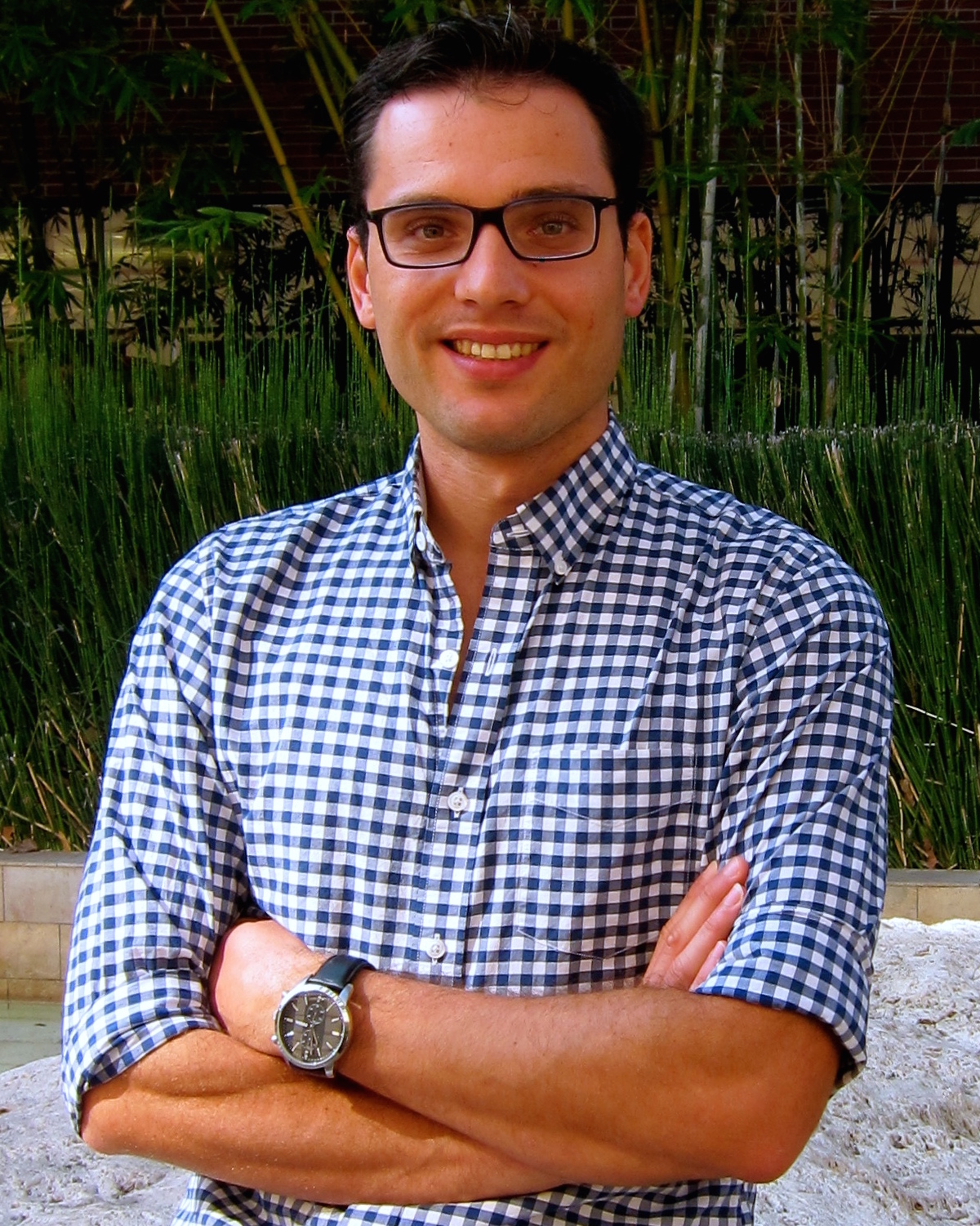}}]
{Nicol\`{o}~Michelusi} (S'09, M'13) received the B.Sc. (with honors), M.Sc.
 (with honors) and Ph.D. degrees from the University of Padova, Italy, in 2006, 2009 and 2013, respectively,
and the M.Sc. degree in Telecommunications Engineering from the Technical University of Denmark in 2009, as part of the T.I.M.E. double degree program.
In 2011, he was at the University
of Southern California, Los Angeles, USA,
and, in Fall 2012, at Aalborg University, Denmark, as a visiting research scholar.
He is currently a post-doctoral research fellow at the Ming Hsieh Department of Electrical Engineering, University of Southern California, USA.
 His research interests lie in the areas of
wireless networks, stochastic optimization, distributed estimation and modeling of bacterial networks.
  Dr. Michelusi  serves as a reviewer for the IEEE Transactions on Communications, IEEE Transactions on Wireless Communications, IEEE Transactions on Information Theory, IEEE Transactions on Signal Processing, IEEE Journal on Selected Areas in Communications,
 and IEEE/ACM Transactions on Networking.
 \end{IEEEbiography}

\begin{IEEEbiography}[{\includegraphics[width=1in,height=1.25in,clip,keepaspectratio]{./BIOS/ZORZI_biophoto}}]
{Michele Zorzi} (S'89, M'95, SM'98, F'07) received his Laurea and PhD degrees in electrical engineering from the University of Padova in 1990 and 1994, respectively. During academic year 1992-1993 he was on leave at UCSD, working on multiple access in mobile radio networks. In 1993 he joined the faculty of the Dipartimento di Elettronica e Informazione, Politecnico di Milano, Italy. After spending three years with the Center for Wireless Communications at UCSD, in 1998 he joined the School of Engineering of the University of Ferrara, Italy, where he became a professor in 2000. Since November 2003 he has been on the faculty of the Information Engineering Department at the University of Padova. His present research interests include performance evaluation in mobile communications systems, random access in wireless networks, ad hoc and sensor networks, Internet-of-Things, energy constrained communications and protocols, cognitive networks, and underwater communications and networking.

He was Editor-In-Chief of IEEE Wireless Communications from 2003 to 2005 and Editor-In-Chief of the IEEE Transactions on Communications from 2008 to 2011, 
and is currently the Editor-in-Chief of the new IEEE Transactions on Cognitive Communications and Networking. He has been an Editor for several journals and a member of the Organizing or the Technical Program Committee for many international conferences. He was also guest editor for special issues in IEEE Personal Communications (``Energy Management in Personal Communications Systems''), IEEE Network (``Video over Mobile Networks'') and IEEE Journal on Selected Areas in Communications (``Multimedia Network Radios,'' ``Underwater Wireless Communications and Networking'' and ``Energy Harvesting and Energy Transfer''). He served as a Member-at-Large of the Board of Governors of the IEEE Communications Society from 2009 to 2011, and is currently its Director of Education.
\end{IEEEbiography}

\end{document}